\newtheorem{teo}{Theorem}
\newcommand{\sscrst}{\scriptscriptstyle}
\begin{document}

\title{Rotating charged fluids: Theorems and results for Weyl-type systems}

\author{Marcos L. W. Basso}
\affiliation{Centro de Ci\^encias Naturais e Humanas, Universidade Federal do ABC, Avenida dos Estados 5001, Santo Andr\'e, S\~ao Paulo, 09210-580, Brazil}

\author{Vilson T. Zanchin}
\affiliation{Centro de Ci\^encias Naturais e Humanas, Universidade Federal do ABC, Avenida dos Estados 5001, Santo Andr\'e, S\~ao Paulo, 09210-580, Brazil}

\begin{abstract}

We perform a systematic study of rotating charged fluids and extend several well-known theorems regarding static Weyl-type systems which were recently compiled by Lemos and Zanchin [Phys. Rev. D 80, 024010 (2009)] to rotating and axisymmetric systems. Static Weyl-type systems are composed by static charged fluid configurations obeying the Newton-Maxwell or the Einstein-Maxwell systems of equations in which the electric potential $\phi$ and the timelike metric potential $g_{tt}\equiv - W^ 2$ satisfy the Weyl hypothesis, i.e., $W=W(\phi)$. In the present analysis, both the Newton-Maxwell and Einstein-Maxwell theories that describe nonrelativistic and relativistic systems, respectively, are used to perform a detailed analysis of the general properties of rotating charged fluids rotating charged dust as well as rotating charged fluids with pressure in four-dimensional spacetimes. In comparison to the static (nonrotating) systems, two additional potentials, a metric potential related to rotation and an electromagnetic potential related to the magnetic field, come into play for rotating systems. In each case, constraints between the fluid quantities and the metric and electromagnetic potentials are identified in order to generalize the theorems holding for static charged systems to rotating charged systems. New theorems regarding equilibrium configurations with differential rotation in both the Newtonian and the relativistic theories are stated and proved. For rigidly rotating charged fluids in the Einstein-Maxwell theory, a new ansatz involving the gradient of the metric potentials and the gradient of the electromagnetic potentials is considered in order to prove new theorems. Such an ansatz leads to new constraints between the fluid quantities and the potential fields, so implying new equations of state for the charged fluids. Besides the new results and theorems, several previous results are re-obtained in the present analysis. 
\end{abstract}

\keywords{Weyl-type systems; Rotating charged fluids; Newton-Maxwell; Einstein-Maxwell}

\maketitle

\section{Introduction} 
\subsection{Weyl-type systems: Definition and overview}

The task of finding exact solutions to the Einstein field equations coupled with other fields is not simple, however some simplifying assumptions can be made in specific cases. For instance, in the case of the Einstein-Maxwell system of equations, Weyl \cite{Weyl} adopted
a interesting strategy while studying static electric fields in general relativity. Such a strategy consists in assuming a functional relationship between the metric potential $g_{tt} \equiv - W^2$ and the electric potential $\phi$. In the electromagnetic vacuum, Weyl found that such a relationship has to be quadratic in $\phi$, i.e.,  $W^2 = (-\epsilon \phi +\beta)^2 +\gamma$, with $\beta$ and $\gamma$ being arbitrary constants and $\epsilon = \pm 1$. This relation between the metric potential and the electrical potential is known as the Weyl relation, and systems satisfying this relation are known as Weyl-type systems. Beyond the electromagnetic vacuum, Majumdar \cite{Majumdar} and Papapetrou \cite{Papa} considered an electrically charged and pressureless fluid (charged dust) that obeys the Weyl relation in the particular case in which $\gamma = 0$ and found equilibrium configurations that represent matter without pressure, extremal charged, in which the gravitational attraction is balanced by electrical repulsion, resulting in systems that have gravitational mass $m$ equal to the electric charge $q$ in geometrized units. Indeed, these authors showed that, when the relation $W = -\epsilon \phi + \beta$ is valid, and considering zero pressure, the equation of hydrostatic equilibrium implies in $\rho_e = \epsilon \rho_m$, where $\rho_e$ and $\rho_m$ are the charge density and the energy density, respectively. The equality $\rho_e = \epsilon \rho_m$ is usually referred to as the Majumdar-Papapetrou (M-P) condition, while the particular relation $W = - \epsilon \phi + \beta$ is called the M-P relation. Systems obeying the M-P condition and the M-P relation are referred to as Majumdar-Papapetrou-type systems.  Majumdar \cite{Majumdar} also verified that, for a M-P-type system in a static spacetime, the metric can be written in the form
 \begin{equation}\label{eq:metric1}
     ds^2 = -W^2 dt^2 + W^{-2}h_{ij}dx^i\, dx^j,
 \end{equation}
where $W=W(x^i)$ and $h_{ij}=h_{ij}(x^k)$ are functions of the spatial coordinates $x^i\, (i,\,j,\, k=1,\, 2,\, 3)$ only, and $W$ satisfies a Poisson-like equation.  This kind of systems has been studied in some depth by Bonnor and many others \cite{Bonnor:1953ex,das62,bonnor65,deray68,de68,Bonnor:1972wi, 
gautreau,bonnor75,Bonnor:1980nw,Gurses:1998zu,ida,Ivanov:2002jy,varela}. For instance,  Das \cite{das62} showed that if the ratio $\rho_e/\rho_m = \epsilon$ holds, then the relation between the metric and the electric potentials must be the one employed by Majumdar and Papapetrou \cite{Majumdar,Papa}, i.e., $W = -\epsilon \phi + \beta$.  De and Raychaudhuri \cite{deray68} took a step forward by showing that if there is a closed equipotential surface within the charged dust fluid with no singularities, holes, or other kind of matter besides charged dust inside it, then the charged dust fluid corresponds to a Majumdar-Papapetrou system. Gautreau and Hoffman \cite{gautreau} investigated  matter sources with pressure that produce Weyl-type fields in which the Weyl quadratic relation is satisfied. They found that, for a given matter source to satisfy the Weyl relation, it must obey the equilibrium condition $\rho_e\left(\epsilon \phi - \beta  \right) = - \epsilon \left(\rho_m + 3p\right) W$, where $p$ is the pressure of the matter (an isotropic charged fluid). 
On the other hand, inspired by the works in general relativity, Bonnor \cite{Bonnor:1980nw} analyzed equilibrium configurations of charged matter in Newtonian gravitation.   
In the study of Newtonian continuous distributions of charged matter with zero pressure, Bonnor found that equilibrium configurations must obey the Majumdar-Papapetrou condition, i.e., $\rho_e = \epsilon \rho_m$, where now $\rho_m$ stands for the Newtonian mass density and $\rho_e$ is the charge density.   Beyond that, Bonnor showed that the relation between the Newtonian gravitational potential $V$ and the electric potential $\phi$ must be linear, i.e., $V(\phi) = - \epsilon \phi + \beta$. More recently, these results valid for charged dust fluids in Newtonian gravitation and in general relativity, that have been obtained by Bonnor and the many other authors mentioned here, were generalized and extended to
spacetimes with arbitrary number of dimensions ($d\geq 4$) in \cite{Lemos:2009}. 

The next step in the study of static Weyl-type systems in the presence of matter with pressure was made by Guilfoyle \cite{Guilfoyle:1999yb}, who considered the following relation between the metric potential and the electric potential,  $W^2 = \alpha\left(-\epsilon \phi +\beta\right)^2 +\gamma$, where $\alpha$, $\beta$, and $\gamma$ are arbitrary constant parameters. This relation, which is referred to as the Weyl-Guilfoyle relation,  generalizes the usual Weyl relation, mentioned above, in which $\alpha = 1$.  
Based on this hypothesis, 
and on the work by Gautreau and Hoffman \cite{gautreau}, it was possible to establish simple and interesting relations among the field quantities ($W$, $\phi$) and the fluid quantities (energy and charge densities, and pressure). This was also done in Ref.~\cite{Lemos:2009}, where the theorems stated by Guilfoyle and the results obtained by Gautreau and Hoffman were generalized and extended to
spacetimes with arbitrary number of dimensions. 
In addition, a set of exact solutions for charged pressure fluids of Weyl-Guilfoyle-type was also obtained and analyzed by Guilfoyle himself \cite{Guilfoyle:1999yb}. 
Thus, one has that Majumdar-Papapetrou solutions, when glued to exterior electrovacuum solutions, produce the so called Bonnor stars \cite{Bonnor:1972wi,bonnor75,bonnor99}. These stars are composed of fluids that have no pressure and, when sufficiently compact, exhibit the quasiblack hole behavior \cite{Lemos:2003gx,lemoskleberzanchin,Lemos:2006sj,lemoszanchin2008}. Similarly, after making his ansatz, Guilfoyle was able to find very interesting exact solutions for electrically charged pressure fluids with spherical symmetry that, when glued with an exterior Reissner-Nordstr\"om metric, produce neat models for charged stars. In this respect, it was noted in the works of Refs.~\cite{lemoszanchin2010, lemoszanchin2016, lemoszanchin2017,masalemoszanchin2023} that, besides charged pressure stars, the exact solutions found by Guilfoyle \cite{Guilfoyle:1999yb} present a bewildering plethora of diverse types of compact objects such as tension stars, quasiblack holes, regular black holes, and quasinonblack holes.

Regarding rotating electrically charged systems, as far as we know, the seminal works by Israel and Wilson \cite{IsraelWilson} and by Perj\'es \cite{Perjes1971} were the first to test the Weyl strategy for finding exact solutions to the Einstein-Maxwell equations in stationary spacetimes. The presence of rotation, together with electric charge, gives rise to a magnetic field establishing a preferred direction in the spacetime, such a direction corresponding to the rotation axis. Thus, a coordinate system that adapts to this kind of symmetry, i.e., axial symmetry, is usually used.  In particular, Israel and Wilson \cite{IsraelWilson} took the metric in the form 
\begin{equation} \label{eq:metric2}
    ds^2 = -W^2\left(dt + \omega_idx^i\right)^2 + W^{-2}h_{ij}dx^i\, dx^j,
\end{equation}
with $W$, $\omega_i$, and $h_{ij}$ ($i,j=1,\,2,\,3$) being arbitrary functions that do not depend upon the time coordinate $t$. Working on the electrovacuum case, and introducing complex potentials \textit{\`a la} Ernst \cite{Ernst1},  the authors arrived at a Laplace-like equation for a complex scalar function which involves the metric functions $W$ and $\omega_i$ and the electromagnetic potential $A_{\mu}$. In a subsequent work, Israel and Spanos \cite{IsraelSpanos1973} considered the simple case where $h_{ij}$ is the Euclidean three-dimensional metric, and faced the difficulty of the Israel-Wilson-Perj\'es (IWP) metric to describe stationary spacetimes in the presence of matter. To circumvent such a difficulty, they proposed an extension of the IWP metric to an interior region containing a dually charged dust fluid, i.e., a dust fluid carrying electric and magnetic charges. We now know that for the metric \eqref{eq:metric2} to admit a continuous distribution of matter, like a perfect fluid, $h_{ij}$ cannot be the simple Euclidean metric (for a very recent work dealing with this subject, see Ref.~\cite{Gurses:2023csy}).  

Adding matter sources and rotation, the subjects of our main interest here, some studies were done in the quest to build exact solutions of the Einstein-Maxwell equations describing Majumdar-Papapetrou and Weyl-Guilfoyle-type systems. Most of the results found in the literature in this direction are related to dust fluids. For example, Islam \cite{Islam1978} and Bonnor \cite{Bonnor1980b} studied a rotating electrically charged dust fluid with axial symmetry, 
both in Newtonian gravitation and in general relativity, thus obtaining specific solutions by relating the potentials (Newtonian and metric potentials, in each case) with the electromagnetic potentials. A subsequent series of interesting works finding exact solutions for rotating charged dust fluids was published by Islam \cite{Islam1977, Islam1979, Islam1980, Islam1983, Islam1983a, Islam2009} and Chakraborty and Bandyopadhyay \cite{ChakraBand1}. 

Other interesting work is the one by Raychaudhuri \cite{Raycha}, who extended the analysis of Bonnor \cite{Bonnor1980b} by dealing with stationary spacetimes without further assumptions on the symmetry of the spatial part of the  metric $h_{ij}$. Raychaudhuri realized, in particular, that the simplifying hypotheses made by Islam \cite{Islam1978} and Bonnor were equivalent to choosing the relation between the electromagnetic potential $A_{\mu}$ and the velocity of the charged dust $u_{\mu}$ in the form $A_{\mu} = -\frac{1}{a} u_{\mu}$,  with $a$ being a constant parameter. An important result that follows from this ansatz is that, for constant $a$,  the energy density $\rho_m$ and the electric charge density $\rho_e$ of a rotating charged dust fluid obey a simple relationship, namely, $\rho_e  = - a \rho_m$, so generalizing the M-P condition. In fact, Raychaudhuri \cite{Raycha} made the ansatz $A_{\mu} = -\frac{1}{a} u_{\mu}$ but treated the quantity $a$ also as a coordinate-dependent function, in which case the relation $\rho_e= -a\rho_m$ mentioned above does not apply. 

Other simplifying hypothesis that has been made in the study of rotating charged fluids is assuming that the Lorentz force vanishes everywhere.  In other previous works, Som and Raychaudhuri \cite{Som}, as well as Islam \cite{Islam1977}, used such a condition to obtain particular solutions of the Einstein-Maxwell equations corresponding to charged dust distribution in rigid rotation where, in addition, the ratio between the charge and the energy densities is an arbitrary constant, i.e., $\rho_e/\rho_m=$ constant. As well, in the case of vanishing Lorentz force and by using comoving coordinates, Banerjee et al. \cite{Banerjee} found a solution with vanishing electric field and with the time-time component of the metric set to unity. It was also shown in such a work that, once again, the ratio between the charge density and the energy density is of the simple form $\rho_e/\rho_m=$ constant, in agreement with the work by Som and Raychaudhuri \cite{Som}, thereby correcting a previous work by other authors \cite{Misra}. 
Another series of interesting works is due to Islam, Van der Bergh, and Wils \cite{Bergh1983, Bergh1984, Bergh, Wils, Islam1984, Wils1985}, where solutions for axisymmetric differentially rotating charged dust fluids were addressed, as well as a general solution for the case of rigid rotation with a constant charge density to energy density ratio was given.

Extensions of the works by Islam, Bonnor, and Raychaudhuri for the treatment of rotating charged fluids with nonzero pressure were made by Chakarborty and Bandyopadhyaya \cite{ChakraBand2}. Previously, in a series of two interesting papers, Das and Kloster\cite{Das1977,Kloster} studied stationary charged perfect fluids. In particular, in \cite{Kloster}, they 
obtained a class of stationary solutions after making some simplifying assumptions such as constant pressure, zero Lorentz force, and assuming special relationships between the electromagnetic and the metric potentials. Analogously, in \cite{Khater}, exact solutions were obtained by working on the particular case of rigid rotation, both for zero and nonzero Lorentz force.

Taking into account the previous works on rotating charged fluids, in order to better explore the physical properties of such systems, we understand that it is necessary a more in-depth study regarding these rotating and axisymmetric systems. In particular,  the extension of the well-known theorems in the literature that arise from the Weyl hypothesis for nonrotating charged systems, and that were recently compiled in \cite{Lemos:2009}, to include rotation is an important step forward. This is the main goal of the present work.

\subsection{Nomenclature and notation}

In order to ease the reading of the manuscript, following Lemos and Zanchin \cite{Lemos:2009}, we set out here the nomenclature regarding the relations between the metric and potential fields, and also between the fluid quantities that appear throughout the text. The notation used for the several differential operators that appear in the manuscript is also set out here.

\subsubsection{Newton-Maxwell rotating charged fluids}

The rotating Newton-Maxwell systems studied in this work are stationary distributions of charged matter characterized by mass density $\rho_m$,  pressure $p$, angular velocity $\Omega$, and electric charge density $\rho_e$. 
The dynamics of the system is governed by the Poisson, Euler, and Maxwell equations. We consider rotating axisymmetric systems and employ the polar-cylindrical system of coordinates $(r,\, \varphi,\, z)$, besides the universal Newtonian time $t$. The Poisson and the Euler equations, combined, give rise to an effective gravity potential $V$ which includes the Newtonian potential $U$ and the centripetal potential $ -\frac{1}{2}\Omega^2 r^2$, i.e., $V =U-\frac{1}{2}\Omega^2 r^2$. The Maxwell equations introduce an electric potential $\phi$ and a magnetic potential $\psi$, that may be combined into a unique effective electromagnetic potential $\Phi = \phi +\Omega\psi$. The Maxwell equations also introduce an effective charge density due to the coupling between the angular velocity of the fluid and the magnetic field, which is named as Goldreich-Julian density, and is denoted by $\rho_{GJ}$.

First we define the nomenclature related to the potentials for the Newton-Maxwell theory with rotating charged matter.

\begin{itemize}
    \item $V= V(\Phi)$ is the Weyl ansatz for rotating systems in the Newton-Maxwell theory. Systems carrying this hypothesis are called Newtonian Weyl-type systems or simply Weyl systems.
    
    \item  $V(\Phi) = - \epsilon \beta \Phi + \gamma$, where $\epsilon =\pm 1$, and  $\beta\neq 1$ and $\gamma$ are arbitrary constants, is the Weyl-Guilfoyle relation for rotating systems in the Newton-Maxwell theory. Systems carrying this hypothesis are called Newtonian Weyl-Guilfoyle-type systems or simply Weyl-Guilfoyle systems. This relation is interesting for rigidly rotating charged dust.
    
    \item $V(\Phi) = - \epsilon \Phi + \gamma$ is the Majumdar-Papapetrou relation for rotating systems in the Newton-Maxwell theory.  The Majumdar-Papapetrou relation is a particular case of the Weyl-Guilfoyle relation.  Systems carrying this hypothesis are called Newtonian Majumdar-Papapetrou systems or simply Majumdar-Papapetrou systems.  This relation is interesting for rigidly rotating charged dust.
\end{itemize}

Now, let us set out the nomenclature regarding the fluid quantities.

\begin{itemize}
 \item $\rho_e = \epsilon \beta \rho_m$, with $\beta\neq 1$, is the Weyl-Guilfoyle condition for the Newton-Maxwell theory. It holds for rigidly rotating dust fluids that obey the Weyl-Guilfoyle relation between the potentials, and that also obey the constraint $ \left[1 -\beta^2\left( 1 - r^2 \Omega^2\right)\!\right]\rho_e + \beta^2\rho_{\sscrst{GJ}} - \frac{\epsilon \beta \Omega^2}{2\pi} = 0$.

 \item  $\rho_e = \epsilon \rho_m$ is the Majumdar-Papapetrou condition (or M-P constraint) for the Newton-Maxwell theory. It is the  particular case of the Weyl-Gilfoyle condition with $\beta=1$. It holds for rigidly rotating dust fluids that obey the Majumdar-Papapetrou relation
 between the potentials, and that also obey the constraint 
$  \Omega^2 r^2 \rho_e +  \rho_{\sscrst{GJ}} - \frac{\epsilon \Omega^2}{2\pi} = 0$.

\item When there is pressure, the relation between the potentials $V$ and $\Phi$ is of the Weyl-Guilfoyle form, and the charged fluid is in rigid rotation, then the relation between the fluid quantities is more intricate and given by $ \rho_m - \frac{\Omega^2}{2\pi} - \epsilon \beta \rho_e\left(1 -  \Omega^2 r^2\right) + \epsilon \beta\rho_{GJ}= 0$, together with  $\rho_e = \epsilon \beta \rho_m - p'$, where $p'$ is the derivative of the pressure $p$ with respect to $\Phi$. 
    
\end{itemize}

Regarding to notation, for the Newton-Maxwell theory, two differential operators appear in the field equations, $\nabla^{2}$ and $\nabla^{2}_-$. These are differential operators in the $(r,\,z)$ plane and are defined by
$ \nabla^{2} = \partial_r^2 + \frac{1}{r} \partial_r + \partial_z^2$
and 
$ \nabla^{2}_- = \partial_r^2 - \frac{1}{r} \partial_r + \partial_z^2$, respectively,  
 with $\partial_j$ standing for the partial derivative $\partial/\partial x^j$.

\subsubsection{Einstein-Maxwell rotating charged fluids}

The rotating Einstein-Maxwell systems investigated in the present work are stationary distributions of charged matter characterized by energy density $\rho_m$, pressure $p$, four-velocity $u_\mu$, whose spatial part is given in terms of the angular velocity $\Omega$, and charge density $\rho_e$. We consider rotating axisymmetric systems, and employ a coordinate system adapted to the axial symmetry, namely, $x^\mu = (t, \, r,\, z,\, \varphi)$. The spacetime metric is then written in the form
\begin{align}
    ds^2 = -f\, dt^2 + 2k\, dt d\varphi + l\, d \varphi^2 + e^{\mu}\left(dr^2 + dz^2\right), \label{eq:metric3a}
\end{align}
where the metric coefficients depend upon $r$ and $z$ only.  
The dynamics of the system is governed by the Einstein-Maxwell field equations and the equation for the conservation of the energy-momentum tensor. The Einstein field equations, together with the equation for the conservation of the energy-momentum tensor, give rise to an effective metric potential $\mathcal{F}$, also called the redshift factor,
which is the general relativistic analogous of the effective Newtonian potential $V$. The second important metric function appearing in the field equations is related to the Killing rotational direction in space, and is denoted by $K$.
In terms of the metric \eqref{eq:metric3a}, the relevant quantities are $F\equiv\mathcal{F}^2 = f- 2\Omega k- \Omega^2 l$, and $K= k +\Omega l$. 
The Maxwell equations introduce a gauge potential $A_\mu$ that is decomposed into an electric potential $\phi$ and a magnetic potential $\psi$. These potentials are combined into a unique effective electromagnetic potential in the form $\Phi = \phi +\Omega\psi$. The usual electromagnetic energy density associated to the electromagnetic field is denoted by $\rho_{em}$, which  may be conveniently decomposed into the electric and magnetic parts, $\rho_{el}$ and $\rho_{mg}$, respectively, so that it holds the relation $ \rho_{em} = \rho_{el}+\rho_{mg}$.

We first define the nomenclature related to the potentials for the Einstein-Maxwell theory with rotating charged matter. 

\begin{itemize}
    \item $\mathcal{F} = \mathcal{F}(\Phi)$ is the Weyl relation for rotating  charged systems, also known as the  Weyl ansatz. Systems carrying this hypothesis are Weyl-type systems.
\end{itemize}
 
\begin{itemize}
    \item $F = \alpha( -\epsilon \Phi + \gamma)^2 + \beta$, where $\epsilon=\pm 1$ and with $\alpha$, $\beta$, and $\gamma$ being arbitrary constants, is the Weyl-Guilfoyle relation for rotating charged systems. Systems carrying this hypothesis are Weyl-Guilfoyle-type systems.

    \item $F = (- \epsilon \Phi + \gamma)^2$, or $\mathcal{F} = - \epsilon \Phi + \gamma$, is the Majumdar-Papapetrou relation for rotating charged systems, which is a particular case of the Weyl-Guilfoyle relation. Systems carrying this hypothesis are Majumdar-Papapetrou-type systems.
 
    \item $u_{\mu} = \epsilon \sqrt{\alpha} A_{\mu}$, with $\alpha$ being an arbitrary constant, is the Bonnor-Raychaudhuri ansatz. 
    It is equivalent to the relations $\mathcal{F} = - \epsilon \sqrt{\alpha} \Phi$ and $K/\mathcal{F} = \epsilon \sqrt{\alpha} \psi$, and it is equivalent to the Majumdar-Papapetrou relation for nonrotating systems.
    
    \item  $F = \alpha \Phi^2 + \beta$, with $\alpha$ and $\beta$ being arbitrary constants, together with $\frac{dK }{d\psi}= - 2 \alpha \Phi$, is a new ansatz made in the present work. It is equivalent to the relations $\partial_j F = 2 \alpha \Phi \partial_j \Phi$ and $\partial_j K = - 2 \alpha \Phi \partial_j \psi$, which usually differs from the Bonnor-Raychauhuri ansatz. We name it as the Islam ansatz. 

\end{itemize}

Now, let us set out the nomenclature regarding the fluid quantities. 

\begin{itemize}
    \item When there is no pressure, if the Majumdar-Papapetrou relation for the potentials holds, then the Majumdar-Papapetrou condition $\rho_e  = \epsilon \rho_m$ is obeyed provided that a further intricate constraint among the fluid quantities and the potentials is also obeyed. 

    \item When there is no pressure, if the Islam ansatz for the metric and electromagnetic potentials holds, then the fluid quantities obey the relation $ \big[\rho_m + 2\left(1 - \alpha\right) \rho_{el} + 2 \rho_{mg}\big]\mathcal{F} + \alpha \Phi \rho_e = 0$, together with $\rho_e = \epsilon \sqrt{\alpha}\rho_m$. This is the general Islam condition for the dust fluid quantities. 

    \item  $\rho_e = \epsilon \beta  \left(\rho_m +  2 \rho_{mg}\right) $ is the first Islam condition. It holds for rigidly rotating charged dust fluids with vanishing Lorentz force that obey the Islam ansatz.
    
    \item  $\rho_e = \epsilon \sqrt{\alpha} \rho_m$ with $ 2 \rho_{mg}= \left(\alpha-1\right)\left(\rho_m + 2 \rho_{el}\right)$ is the second Islam condition.  It holds for rigidly rotating charged dust fluids with nonvanishing Lorentz force that obey the Islam ansatz.

    \item When there is pressure, if the Weyl-Guilfoyle relation holds, an intricate constraint between the fluid quantities and the metric and electromagnetic potentials follows.

    \item When there is pressure, if the Islam ansatz for metric and electromagnetic potentials holds, then the fluid quantities obey the conditions $ \big[\rho_m + 3 p + 2(1 - \alpha) \rho_{el} + 2 \rho_{mg}\big]\mathcal{F} + \alpha \Phi \rho_e = 0$ and $\mathcal{F}p' + \rho_e = \epsilon \sqrt{\alpha}(\rho_m + p)$, with $p'=\frac{dp}{d\Phi}$. This is the general Islam-Guilfoyle condition.

    \item  $\rho_e = \epsilon \beta  \left(\rho_m + 3p + 2 \rho_{mg}\right) $ with constant $p$ is the first Islam-Guilfoyle condition. It holds for rigidly rotating charged pressure fluids with vanishing Lorentz force that obey the Islam ansatz.
    
    \item  $\rho_m + 3p + 2(1 - \alpha) \rho_{el} + 2 \rho_{mg} -  \epsilon \sqrt{\alpha} \rho_e = 0$ together with $\mathcal{F}p' + \rho_e = \epsilon \sqrt{\alpha}(\rho_m + p)$, where $p'=\frac{dp}{d\Phi}$, is the second Islam-Guilfoyle condition.  It holds for rigidly rotating charged pressure fluids with nonvanishing Lorentz force that obey the Islam ansatz.

\end{itemize}

Regarding notation, for the Einstein-Maxwell theory, two differential operators appear in the field equations, $\nabla^{\dagger 2}$ and $\nabla^{\dagger 2}_-$. These are operators in the $(r,\,z)$ plane and are defined by
    $ \nabla^{\dagger 2} = \partial_r^2 + \frac{\partial_r D}{D} \partial_r + \partial_z^2 + \frac{\partial_z D}{D} \partial_z$  
  and $ \nabla^{\dagger 2}_{-} = \partial_r^2 - \frac{\partial_r D}{D} \partial_r + \partial_z^2 - \frac{\partial_z D}{D} \partial_z$, respectively, where $D^2=fl + k^2$. 
Additionally, the Laplace spatial covariant operator, defined on the hypersurface of constant time coordinate $t$, $\Sigma_t$, and compatible with the induced metric on $\Sigma_t$, is denoted by $\nabla^2$.

\subsection{Structure of the paper}

The remainder of this article is organized as follows. 

Section~\ref{sec:2} is devoted to investigating the general properties of rotating axisymmetric charged fluids in Newtonian physics. We start by formulating the problem and giving the fundamental equations in Sec.~\ref{sec:2a}.
Zero pressure charged fluids, i.e., charged dust fluids, and nonzero pressure charged fluids are studied separately. 
For rotating charged dust fluids, we present some results in the case of systems in differential rotation, and then particularize the analysis to rigidly rotating systems.  Several new theorems are stated and proved, rendering the extension to rotating and axisymmetric dust fluids a set of previous results on static systems by other authors.  This is done in Sec.~\ref{sec:2b}.
The study of rotating charged pressure fluids in Newtonian physics is then presented in Sec.~\ref{sec:2c}, where we follow the same steps as in the subsection for the dust fluids. 

Section~\ref{sec:3} is devoted to investigate the general properties of rotating axisymmetric charged fluids in general relativity.
The general formalism and the fundamental equations of the model are presented in Sec.~\ref{sec:3a}, followed by the study of zero pressure charged fluids, i.e., charged dust fluids, reported in Sec.~\ref{sec:3b}. The general case of the dust fluid in differential rotation is considered first, and then 
we particularize to rigidly rotating charged dust fluids. Several new theorems are stated and proved, rendering the extension to rigidly rotating axisymmetric systems a set of results on static  systems due to  Das \cite{das62}, De and Raychaudhuri \cite{deray68}, and Bonnor \cite{Bonnor:1980nw}. Afterwards, inspired by the works of Bonnor \cite{Bonnor1980b} and Raychaudhuri \cite{Raycha}, a new ansatz that relates the gradient of the metric potentials to the gradient of the electromagnetic potentials is proposed. In reference to the Islam's great contributions to the field, we name it as the Islam ansatz. A new theorem that constrains the fluid quantities and the electromagnetic energy density is then stated. We then discuss some solutions obtained by Islam with vanishing Lorentz force \cite{Islam1977}. The last part of Sec.~\ref{sec:3}, Sec.~\ref{sec:3c}, contains the analysis of rotating charged pressure fluids. After presenting a few results holding for charged fluids in deferential rotation, we particularize the analysis to the case of rigid rotation, Several new theorems are stated and proved, so rendering the extension to rotating axisymmetric systems the results due to Guilfoyle \cite{Guilfoyle:1999yb} and Lemos and Zanchin \cite{Lemos:2009}. Lastly, another new theorem is stated and proved by considering rotating charged fluids obeying the Islam ansatz. 

Finally, in Sec.~\ref{sec:4} we make final remarks and conclude. 

Throughout this work we employ geometric units, in which the gravitational constant $G$ and the speed of light $c$ are set to unity, i.e., $G =1= c $.

\section{Rotating Newton-Maxwell charged fluids with pressure}
\label{sec:2}

\subsection{The model and the basic equations}
\label{sec:2a}

In this section we present the fundamental equations to study rotating fluids with electric charge in Newtonian physics.
Let us first mention that the basic equations for rotating dust fluids with electric charge in Newton-Maxwell theory have been formulated and studied in some depth by Islam \cite{Islam1978} and Bonnor \cite{Bonnor1980b}. 
For comparison, we follow closely these works with the key difference that we also study charged fluid with pressure and investigate the equilibrium configurations of such fluids. 
We consider the dynamics of a stationary Newtonian charged pressure fluid in a three-dimensional Euclidean space ${\mathbb R}^3$ according to the Euler description. Besides, we shall also assume axial symmetry and, later on, rigid rotation, as in \cite{Islam1978, Bonnor1980b}.  The fluid is characterized by five quantities, namely, the mass density $\rho_m$, the pressure $p$, the velocity flow $v^i$, the charge density $\rho_e$, and the (convective) current density $J^i = \rho_e v^i$.
The field quantities are the gravitational potential $U$, the electric field $E^i$, and the magnetic field $B^i$.
Now, since the system is stationary, the fields and the fluid quantities do not depend upon the universal time $t$.  Hence, the electric field $E^i$ may be written in terms of an electric potential $\phi$ through $E_i = - \nabla_i \phi$, with $\nabla_i$ standing for the covariant derivative compatible with metric in the three-dimensional Euclidean space $\mathbb{R}^3$.  
Similarly, the magnetic field $B^i$ may be written in terms of a magnetic vector potential $A^i$ through the relation $B_i = \varepsilon_{ijk} \nabla^j A^k$, with $ \varepsilon_{ijk}$ being the usual three-dimensional Levi-Civita tensor. The Roman indices $i,\,j,\, k, ...$ range from 1 to 3, with repeated indices indicating summation.
As a consequence, the set of field equations for such a Newton-Maxwell system may be written in the form
\begin{align}
    & \nabla^2 U = 4 \pi  \rho_m, \label{eq:gravpot}\\ 
    & \nabla_i  E^i = 4 \pi \rho_e, \label{eq:max1} \\
    & \varepsilon_{ijk} \nabla^j E^k = 0,  \label{eq:max2} \\
    & \nabla_i B^i = 0,  \label{eq:max3} \\ 
    & \varepsilon_{ijk} \nabla^j B^k = 4 \pi J_i,   \label{eq:max4}
\end{align}
where $\nabla^2\equiv \nabla_i\nabla^i$ is the Laplace operator.
It is also worth noticing that we are only taking into account the electric current due to convection.

The Newton-Maxwell fluid satisfies the continuity and the Euler equations, which read
\begin{align}
    &\partial_t \rho_m  + \nabla_i \left(\rho_m v^i\right) = 0, \label{eq:cont} \\ 
    &\left(\partial_t + v^j \nabla_j\right) v_i + \dfrac{\nabla_i p}{\rho_m} = - \nabla_i U + \dfrac{\rho_e}{\rho_m}\left(E_i + \varepsilon_{ijk} v^j B^k\right), \label{eq:euler}
\end{align}
respectively.

Since by assumption the fluid distribution is stationary and axisymmetric, it is natural to adopt polar-cylindrical coordinates $\big(x^1,x^2,x^3\big) = \left(r,\varphi,z\right)$ such that, due to the imposed symmetry, the fields and the fluid quantities depend solely on the coordinates $r$ and $z$. Therefore, without loss of generality, the fluid velocity and the magnetic field can be written as
\begin{align}
    &v^i = \Omega(r,z)\, r \,\delta^{i}_{\ \varphi}, \label{eq:velocity}\\ 
    & B^i = \beta (r,z) \delta^{i}_{\ r} + \gamma(r,z) \delta^{i}_{\ z}, 
\end{align}
respectively, where $\Omega^i = \Omega(r,z) \delta^{i}_{\varphi}$ is the angular velocity of the fluid, with $\delta^{i}_{\ j}$ standing for the Kronecker delta tensor. With these choices, the current density takes the form $J^i = \rho_e \Omega\, r\, \delta^{i}_{\ \varphi}$. Now it follows from Eq.~(\ref{eq:max3}) that the nontrivial components of the magnetic field can be written in terms of a unique scalar function $\psi$, namely,
\begin{align}
     &B^r=\beta (r,z) =  \frac{1}{r} \frac{\partial \psi}{\partial z}, \label{eq:Br} \\ &B^z=\gamma(r,z) = -\frac{1}{r} \frac{\partial \psi}{\partial r}, \label{eq:Bz}
\end{align}
which means that the vector potential is of the form $A_i = - \frac{1}{r}\psi\ \delta_i^{\ \varphi}$. Here, $\psi$ is the magnetic potential, which is also called the stream function, because the magnetic field lines are tangent to the curves of constant $\psi$.

At the end, the set of field equations can be rewritten just in terms of three potential fields and of the fluid quantities \cite{Bonnor1980b}, as follows,
\begin{align}
      & \nabla^2 U = 4 \pi\rho_m, \label{eq:pot1} \\ 
      & \nabla^2 \phi = - 4 \pi \rho_e,  \label{eq:pot2}  \\
      & \nabla^{2}_- \psi =  4 \pi r^2 \rho_e \Omega. \label{eq:pot3}
\end{align}
where $\nabla^{2}$ and $\nabla^{2}_-$ are differential operators in the $(r,\,z)$ plane, defined by
\begin{align}
\nabla^{2} \equiv \partial_r^2 + \frac{1}{r} \partial_r + \partial_z^2, \label{eq:nabla}\\
\nabla^{2}_- \equiv \partial_r^2 - \frac{1}{r} \partial_r + \partial_z^2, \label{nabla**}
\end{align}
 with $\partial_j \equiv \partial/\partial x^j$.

Turning to the continuity equation \eqref{eq:cont}, it is straightforward verifying that it is identically satisfied. This result follows from the hypothesis that the fluid is stationary and axisymmetric, i.e., that $\rho_m$ and $v^i$ do not depend upon time $t$ and that the only nonzero component of the velocity is $v^\varphi$, cf. Eq.~\eqref{eq:velocity}.

In turn, the Euler equation \eqref{eq:euler}, for the rotating charged pressure fluid in stationary equilibrium, may be written as
\begin{align}
   \rho_m \nabla_i V +  \rho_e \mathbf{\nabla}_i \Phi + \left(\rho_m \Omega r^2 - \rho_e\psi\right)  \nabla_i \Omega + \nabla_i p  = 0,   \label{eq:equi1} 
\end{align}
where we have defined the effective gravitational potential $V$ 
and the effective electromagnetic potential $\Phi$ by
    \begin{align}
    & V \equiv U - \frac{1}{2}\Omega^2 r^2, \label{eq:eff-gravpot} \\
    & \Phi \equiv \phi + \Omega \psi. \label{eq:eff-electpot}
    \end{align}
As it is seen, the effective gravitational potential $V$ includes the repulsive centrifugal term $-\frac{1}{2}\Omega^2r^2$, while the effective electromagnetic potential $\Phi$ includes the contribution of the magnetic potential $\psi$ through the term $\Omega\,\psi$.

Another interesting aspect to notice in Eq.~\eqref{eq:equi1} is that the quantity $\rho_m \Omega r^2 - \rho_e\psi $ may be thought of as the angular momentum density of the rotating charged fluid. In fact, the quantity $\rho_m \Omega r^2$ is the angular momentum density of the fluid, while $-\rho_e \psi$ is related to the angular momentum density of the electromagnetic field $l_{em}$ through $l_{em} = -\rho_e \psi +  \frac{1}{4 \pi} \nabla_i \left(\psi E^i\right)$. When integrated over a volume $\mathbb{V}$, the quantity $\frac{1}{4 \pi} \nabla_i \big(\psi E^i\big)$ may be transformed into a surface integral through Gauss theorem and, therefore, it can be neglected when the charges and currents lie within a finite volume that is small compared to the volume of integration. In fact, such a surface term vanishes when the integration is taken to the whole space volume. Hence, $L_{em} = - \int \! \rho_e \psi\,  d \mathbb V$ can be interpreted as the total angular momentum of the electromagnetic field, and $L =\int \left(\rho_m \Omega r^2 - \rho_e\psi\right) d \mathbb V$ as the total angular momentum carried by the rotating charged fluid.

The relevant equations for the problem are Eqs.~(\ref{eq:pot1}), (\ref{eq:pot2}), (\ref{eq:pot3}), and the equilibrium equation given in~(\ref{eq:equi1}). Finally, it is worth noticing that Eq.~(\ref{eq:equi1}) is the generalization of the equilibrium equations obtained in \cite{Islam1978, Bonnor1980b} for rotating charged dust fluids in the Newton-Maxwell theory, here by taking into account the fluid pressure.

\subsection{Zero-pressure rotating charged fluids in the Newton-Maxwell theory}
\label{sec:2b}

\subsubsection{Differentially rotating charged dust fluids in the Newton-Maxwell theory}

Let us begin the analysis by considering rotating charged dust fluids, for which $p = 0$.
The problem of finding exact solutions for differentially rotating charged dust fluids in the Newton-Maxwell theory was investigated by Islam \cite{Islam1978, Islam1979, Islam1980, Islam1983} and by Van der Bergh and Wils \cite{Bergh}. Here we follow a different route and make a general analysis of this kind of systems presenting new results.

For zero pressure, after contracting with $dx^i$, the equilibrium equation (\ref{eq:equi1}) reduces to
\begin{align} 
\rho_m d V + \rho_e d \Phi + \left(\rho_m \Omega r^2 - \rho_e \psi\right) d \Omega = 0.
\label{eq:eulerp0}
\end{align}
From this equation, we can state a theorem involving the general properties of equilibrium configurations of charged dust fluids under differential rotation. 

First, for $\rho_m \Omega r^2 - \rho_e \psi \neq 0$, Eq.~\eqref{eq:eulerp0} implies that the potentials $V$, $\Phi$, and $\Omega$ are functionally related.  This means that we can take, for instance, the effective gravitational potential $V$ as a function of $\Phi$ and $\Omega$, i.e.,  $V= V(\Phi,\Omega)$, from what follows $ (\partial V/ \partial \Phi)_{\Omega}= - \rho_e/ \rho_m $  and $  \left(\partial  V/ \partial \Omega\right)_{\Phi}= -\Omega r^2 + \rho_e \psi/{\rho_m}$.  

Next we show that the quantity $\rho_m \Omega r^2 - \rho_e \psi $ vanishes only in the static case, i.e.,  for $\Omega =0=  \psi$. In fact, the quantity $\rho_m \Omega r^2$ is directly related to the kinetic energy density $u_c \equiv \frac{1}{2} \rho_m \Omega^2 r^2 $, which is non-negative, while $ - \rho_e \psi$ is directly related to the magnetic energy density $u_m = \frac{1}{2}J^i A_i = -\frac{1}{2}\rho_e \Omega \psi$, which is also non-negative. Hence, by adding the two energy density contributions one has the inequality $\left(\rho_m \Omega r^2 - \rho_e\psi\right)\Omega\geq 0$, the equality holding just for $\Omega=0$. Now,  on assuming $\Omega > 0$, the inequality reduces to $\rho_m \Omega r^2 - \rho_e\psi\geq 0$. Conversely, on assuming $\Omega < 0$, it is straightforward verifying that $\rho_m \Omega r^2 - \rho_e\psi\leq 0$. Since $\Omega$ is arbitrary, the two inequalities imply in $\rho_m \Omega r^2 - \rho_e\psi\neq 0$, with the equality holding just in the static case where $\Omega=0$ and also $\psi=0$.
The interpretation of the function $\rho_m \Omega^2 r^2 - \rho_e\Omega \psi$ as an effective nonvanishing energy density of the rotating charged fluid can be found, for instance, in the work by Islam \cite{Islam1983}, where an exact solution satisfying the conditions $\Omega r^2 = \psi$ and  $\rho_m = \rho_e$, so that $\rho_m \Omega r^2 - \rho_e \psi=0$, is reported.  It is shown that such a solution corresponds to a non-physical system since the number density results negative, what is equivalent to a fluid with negative Newtonian mass density. 
In summary,  in this work we shall assume  $\rho_m \Omega r^2 - \rho_e \psi \neq 0$, vanishing only for the static limit, and then we can state a theorem.

\begin{teo}[\textit{new}] \label{teo:2}
For any distribution of a differentially rotating charged dust in equilibrium in the Newton-Maxwell theory, if any two of the surfaces of constant $V$, $\Phi$, or $\Omega$ coincide, then the third also coincides.
\end{teo}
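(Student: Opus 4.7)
The plan is to exploit the equilibrium relation established just above the theorem, namely
\begin{equation*}
\rho_m\, dV + \rho_e\, d\Phi + \left(\rho_m\Omega r^2 - \rho_e\psi\right) d\Omega = 0,
\end{equation*}
which is a single linear relation among the three one-forms $dV$, $d\Phi$, $d\Omega$ on the meridional plane. The crucial input, already proved in the discussion preceding the theorem, is that the coefficient $\rho_m\Omega r^2 - \rho_e\psi$ cannot vanish in the genuinely rotating regime; in addition I will use the physical assumption that $\rho_m$ and $\rho_e$ do not vanish identically, since otherwise there is either no fluid or no charge and the theorem is empty.

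The statement that two of the surfaces of constant $V$, $\Phi$, $\Omega$ coincide is equivalent, on the open set where the corresponding differentials are nonzero, to a functional relation between the two potentials, and hence to the proportionality of the two associated one-forms. The proof then splits into three symmetric cases. \textit{Case (a)}: if $dV\parallel d\Phi$, write $dV = \lambda\, d\Phi$ and substitute into the equilibrium equation to obtain $(\rho_m\lambda + \rho_e)\, d\Phi + (\rho_m\Omega r^2 - \rho_e\psi)\, d\Omega = 0$; since the coefficient of $d\Omega$ is nonzero, this forces $d\Omega\parallel d\Phi$, so surfaces of constant $\Omega$ coincide with those of $\Phi$, and therefore with those of $V$. \textit{Case (b)}: if $dV\parallel d\Omega$, write $dV = \mu\, d\Omega$; substitution gives $\rho_e\, d\Phi + \left[\rho_m\mu + \rho_m\Omega r^2 - \rho_e\psi\right] d\Omega = 0$, and dividing by $\rho_e\ne 0$ yields $d\Phi\parallel d\Omega$. \textit{Case (c)}: if $d\Phi\parallel d\Omega$, write $d\Phi = \nu\, d\Omega$; substitution then gives $\rho_m\, dV + \left[\rho_e\nu + \rho_m\Omega r^2 - \rho_e\psi\right] d\Omega = 0$, so division by $\rho_m\ne 0$ produces $dV\parallel d\Omega$. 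In each case, proportionality of the third pair of one-forms implies that the corresponding level surfaces coincide, completing the argument.

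The main obstacle is a careful treatment of loci where one of the one-forms vanishes or where the proportionality coefficients become degenerate, so that the conclusion holds globally and not merely on a dense open set. I would address this by first performing the analysis on the open set of the fluid support where the relevant one-form is nonzero — there the functional dependence among the potentials is unambiguous — and then using continuity of $V$, $\Phi$, $\Omega$, together with $\rho_m\ne 0$ and $\rho_e\ne 0$, to extend the coincidence of level surfaces to the closure. A secondary subtlety is that in the axisymmetric setting the ``surfaces'' of constant $V$, $\Phi$, $\Omega$ are the revolutions of level curves of functions of $(r,z)$; proportionality of the one-forms reduces to proportionality of two-dimensional gradients, which is the natural notion used throughout.
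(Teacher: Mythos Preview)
Your argument is correct and rests on the same device the paper uses: the single linear relation $\rho_m\,dV+\rho_e\,d\Phi+(\rho_m\Omega r^2-\rho_e\psi)\,d\Omega=0$ together with the nonvanishing of the coefficient of $d\Omega$. The paper's own proof is much terser: it interprets ``two surfaces coincide'' as ``on that common level surface the pullbacks of two of the differentials vanish,'' and then reads off from the equilibrium relation that the pullback of the third must vanish as well, so the third function is constant there. Your formulation in terms of proportionality of one-forms (equivalently, functional dependence) is the global version of the same idea and is arguably the more natural reading of the statement; it also makes explicit the hypotheses $\rho_m\neq0$ and $\rho_e\neq0$ needed in cases (b) and (c), which the paper leaves implicit. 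The extra care you take with degenerate loci is not present in the paper and is a welcome addition, though for the purposes of the theorem as stated the paper's short argument already suffices.
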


\begin{proof}
First notice that $V$, $\Phi$, and $\Omega$ are scalar fields in the three-dimensional Euclidean space $\mathbb{R}^3$, and therefore the
condition of constant $V$, $\Phi$, or $\Omega$ defines a surface in $\mathbb{R}^3$, on which $dV=0$, $d\Phi=0$, or $d\Omega=0$, respectively. Given that  $ \rho_m d V + \rho_e d \Phi + \left(\rho_m \Omega r^2 - \rho_e \psi\right) d \Omega = 0$ and $\rho_m \Omega r^2 - \rho_e \psi \neq 0$, then if any two of the differentials $dV$, $d\Phi$, or $d\Omega$ vanish, then all of them vanish implying that the three surfaces coincide. 
\end{proof}

\subsubsection{Differentially rotating charged dust fluids of Weyl-type in the Newton-Maxwell theory}

Let us now make the Weyl ansatz $V = V(\Phi)$ and explore some of its consequences for differentially rotating charged dust fluids.  The equilibrium equation then reads $\left(\rho_m V' + \rho_e\right) d \Phi + \left(\rho_m \Omega r^2 - \rho_e \psi\right) d \Omega = 0$, where the prime stands for the derivative with respect to $\Phi$. Therefore,  considering $\rho_m V' + \rho_e \neq 0$ and taking into account that $\rho_m \Omega r^2 - \rho_e \psi\neq 0$,  the angular velocity of the fluid results also a function of $\Phi$ alone, i.e., $\Omega = \Omega(\Phi)$. This allows us to state another theorem.

\begin{teo}[\textit{new}] \label{teo:2a} 

If a differentially rotating charged dust is of Weyl-type and is in equilibrium in the Newton-Maxwell theory, then the equipotential surfaces are also surfaces of constant angular velocity, and vice versa.
\end{teo}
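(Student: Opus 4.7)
My plan is to start from the reduced equilibrium equation that is already derived in the paragraph immediately above the theorem, namely $(\rho_m V' + \rho_e)\,d\Phi + (\rho_m \Omega r^2 - \rho_e \psi)\,d\Omega = 0$, which follows by substituting the Weyl ansatz $V=V(\Phi)$ into the equilibrium one-form of Eq.~\eqref{eq:eulerp0}. The strategy is then to use the two generic non-degeneracy conditions that the text has already fixed: $\rho_m V' + \rho_e \neq 0$ (stipulated in the discussion preceding the theorem) and $\rho_m \Omega r^2 - \rho_e \psi \neq 0$ (established in Theorem~\ref{teo:2} for any non-static configuration). With both coefficients nonzero, the one-forms $d\Phi$ and $d\Omega$ must be pointwise proportional on $\mathbb{R}^3$, which is exactly the statement that $\Omega$ depends functionally on $\Phi$ alone, i.e., $\Omega = \Omega(\Phi)$.

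Once this functional relation is in hand, the theorem is an immediate consequence of the standard chain-rule reading of level sets. For the forward direction I would note that, under the Weyl ansatz (and generically $V'\neq 0$), any equipotential surface $V=\mathrm{const}$ coincides with a surface $\Phi=\mathrm{const}$, on which $\Omega=\Omega(\Phi)$ must itself be constant. For the converse I would specialize the reduced equilibrium equation to a surface of constant $\Omega$, where $d\Omega=0$ leaves $(\rho_m V' + \rho_e)\,d\Phi = 0$; the non-degeneracy of the prefactor forces $d\Phi=0$ on that surface, and hence also $dV = V'\,d\Phi = 0$, so the surface is indeed equipotential.

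The only subtlety — and what I would flag as the main caveat rather than a genuine obstacle — is the need to separate out the degenerate locus where $\rho_m V' + \rho_e$ vanishes, since there the argument collapses and one would have to treat that case (which is precisely the Majumdar-Papapetrou regime discussed later) by separate means. Beyond this caveat the proof is purely algebraic manipulation of the equilibrium one-form together with the elementary observation that proportionality of $d\Phi$ and $d\Omega$ equates their level sets, so no further computation is required.
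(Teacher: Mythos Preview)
Your proposal is correct and matches the paper's proof essentially step for step: both start from the reduced equilibrium one-form $(\rho_m V' + \rho_e)\,d\Phi + (\rho_m \Omega r^2 - \rho_e \psi)\,d\Omega = 0$, invoke the same two non-degeneracy assumptions, and read off the coincidence of level sets via $dV = V'\,d\Phi$. The only cosmetic difference is that the paper phrases the forward direction by appealing to Theorem~\ref{teo:2} rather than by first isolating the functional relation $\Omega = \Omega(\Phi)$, but the underlying logic is identical.
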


\begin{proof}
Using the Weyl ansatz $V = V(\Phi)$ and the fact that $\left(\rho_m V' + \rho_e\right) d \Phi + \left(\rho_m \Omega r^2 - \rho_e \psi\right) d \Omega = 0$, with $\rho_m V' + \rho_e \neq 0$ and $\rho_m \Omega r^2 - \rho_e \psi \neq 0$, then the surfaces of constant $V$ and $\Phi$ coincide since $d V= V' d \Phi$, and Theorem \ref{teo:2} implies that  $\Omega$ is also a constant on such surfaces. Conversely, a surface of constant $\Omega$ implies $d \Phi = 0$ and thus, by using the relation $d V= V' d \Phi$, it also implies that $d V = 0$, and so the surface of constant $\Omega$ is also a surface of constant $\Phi$ and $V$.
\end{proof}

On the other hand, it is worth noticing that, if the quantity $\rho_m V' + \rho_e$ vanishes, it follows that $d \Omega = 0$ and, therefore, the charged dust fluid in the Newton-Maxwell theory is necessarily in rigid rotation.

\subsubsection{Rigidly rotating charged dust fluids in the Newton-Maxwell theory}

Here we assume that the dust fluid is in rigid rotation, i.e., $d\Omega=0$, and then the equilibrium equation \eqref{eq:eulerp0} reduces to
\begin{align}
    \rho_m d V + \rho_e d \Phi = 0\label{eq:eulerp0rig}
\end{align}
which allows us to state the following theorem.

\begin{teo}[\textit{rigidly rotating and axisymmetric version of Bonnor 1980}]\label{teo:p0rigidN1} 

For any distribution of charged dust in rigid rotation in equilibrium in the Newton-Maxwell theory, the surfaces of constant $V$ coincide with the surfaces of constant $\Phi$, with $V$ and $\Phi$ being functionally related.
\end{teo}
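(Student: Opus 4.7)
My plan is to work directly from the equilibrium equation in the rigid rotation limit, namely $\rho_m \, dV + \rho_e \, d\Phi = 0$, which is the specialization of Eq.~\eqref{eq:equi1} once $p=0$ and $d\Omega=0$. Throughout the interior of the fluid distribution one has $\rho_m>0$, so this relation can be recast as
\begin{equation*}
 dV = -\frac{\rho_e}{\rho_m}\, d\Phi,
\end{equation*}
which shows that the one-forms $dV$ and $d\Phi$ are pointwise proportional. Equivalently, the gradients $\nabla_i V$ and $\nabla_i \Phi$ are parallel, which is exactly the statement $dV\wedge d\Phi=0$. I would then invoke the standard fact (a simple consequence of the Frobenius theorem in one dimension) that two smooth scalar functions on a connected region whose differentials are everywhere proportional are functionally dependent, so there exists a function such that $V=V(\Phi)$.

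From this functional dependence the statement about level surfaces follows immediately: a surface of constant $\Phi$ is automatically a surface of constant $V$, and conversely, wherever $d\Phi\neq 0$ the surfaces of constant $V$ locally coincide with those of constant $\Phi$. I would conclude by noting that this functional relation $V=V(\Phi)$ is nothing but the Weyl ansatz itself, so that in the rigid rotation limit the Weyl hypothesis is not an extra assumption but a consequence of the equilibrium of a rotating charged dust distribution in the Newton–Maxwell theory.

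The only delicate step will be handling the possibly degenerate loci where $d\Phi$ vanishes on some lower-dimensional subset; I would deal with this by working on the open set where $d\Phi\neq 0$ and extending by continuity, the same approach implicitly used in Theorem~\ref{teo:2}. The heart of the argument is simply the observation that with $d\Omega=0$ the third term in the equilibrium equation \eqref{eq:eulerp0} drops out, leaving a pure proportionality between $dV$ and $d\Phi$; no subtle use of the positivity argument for $\rho_m\Omega r^2-\rho_e\psi$ employed in the differential case is needed here.
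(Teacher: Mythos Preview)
Your proposal is correct and follows essentially the same approach as the paper: both start from the equilibrium equation $\rho_m\,dV+\rho_e\,d\Phi=0$ in the rigid rotation limit, note that the vanishing of one differential forces the vanishing of the other so that the level surfaces coincide, and conclude $dV/d\Phi=-\rho_e/\rho_m$ so that $V=V(\Phi)$. Your discussion is slightly more explicit about the functional-dependence argument (invoking $dV\wedge d\Phi=0$ and the handling of degenerate loci), but the substance is identical to the paper's proof.
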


\begin{proof}
The proof is similar to the one given in Theorem \ref{teo:2}. The conditions of constant $V$ and $\Phi$ define two-dimensional spaces (two surfaces) in ${\mathbb R}^3$. Additionally, Eq.~(\ref{eq:eulerp0rig}) implies that if any one of $d V$ or $d\Phi$ is zero, then both of them vanish and the surfaces of constant $V$  and $\Phi$ coincide. Besides, $d V /d\Phi = - \rho_e/ \rho_m$  implying in the relation $V  = V(\Phi)$.
\end{proof}

In other words, Theorem \ref{teo:p0rigidN1} implies that, for rigidly rotating charged dust fluid distributions in the Newton-Maxwell theory, the Weyl ansatz $V= V(\Phi)$ is not a necessary hypothesis. The functional dependency $V= V(\Phi)$ is, in fact, a consequence of the equilibrium condition. This is the rigidly rotating version of a theorem by Bonnor~\cite{Bonnor:1980nw} for static Newton-Maxwell systems, which in turn is the Newtonian version of a theorem stated by De and Raychaudhuri for charged dust fluids in the Einstein-Maxwell theory \cite{deray68}. It is also worth mentioning that, in \cite{Bonnor1980b}, Bonnor investigated equilibrium solutions for rigidly rotating charged dust in general relativity by considering  a priori that $V$ and $\Phi$ are functionally related.

\subsubsection{Rigidly rotating charged dust fluids of Weyl-type in the Newton-Maxwell theory}

Besides the statements made above, cf. Theorem \ref{teo:p0rigidN1}, additional general properties of rigidly rotating charged dust fluids may be obtained through a deeper analysis of the set of field equations given in Eqs.~(\ref{eq:pot1}), (\ref{eq:pot2}), and (\ref{eq:pot3}).  Since $\Omega$ is a constant, after some manipulation those three equations are reduced to just two equations for $V$ and $\Phi$, namely,
\begin{align}
    & \nabla^2 V =  4 \pi  \rho_m - 2 \Omega^2, \label{eq:pot4} \\
    & \nabla^2 \Phi =  - 4\pi \rho_e\big(1 - \Omega^2r^2\big) + \frac{2 \Omega}{r} \partial_r \psi. \label{eq:pot5}
\end{align}
By inspecting the last two equations it is natural to interpret the quantities $\rho_m^{eff} = \rho_m - \frac{\Omega^2}{2\pi} $ and $\rho_e^{eff}=  \rho_e\left(1 - \Omega^2r^2\right) - \frac{ \Omega}{2\pi\,r} \partial_r \psi$ as the effective mass density and the effective charge density, that are the sources for the effective potentials $V$ and $\Phi$, respectively.
By using the fact that $V= V(\Phi)$, Eqs.~\eqref{eq:pot4} and \eqref{eq:pot5} furnish 
\begin{align}
    V'^2 \,\nabla^2\Phi + V'\,V'' \left(\nabla_i \Phi\right)^2 = - 4 \pi \rho_e - 2 \Omega^2 V', \label{eq:field1}
\end{align}
where we also used the equilibrium equation written in the form $\rho_m V' + \rho_e = 0$. Now, given that $- 4\pi \rho_e = \nabla^2 \Phi - \Omega \nabla^2 \psi$, see Eqs.~\eqref{eq:pot2} and \eqref{eq:eff-electpot}, by eliminating the charge density $\rho_e$ in Eq.~(\ref{eq:field1}) we arrive at
\begin{align}
    \big(V'^2 -1\big) \nabla^2 \Phi + V'V'' \big(\nabla_i \Phi\big)^2 + \Omega\big( \nabla^2 \psi + 2 \Omega V'\big) = 0. \label{eq:field2}
\end{align}
When $\Omega = 0$, i.e., for static dust fluids, Eq. (\ref{eq:field2}) reduces to $\left(V'^2 -1\right) \nabla^2 \Phi + V'V'' \big(\nabla_i \Phi\big)^2 = 0$ with $V = U$ and $\Phi = \phi$, a result that was first obtained in \cite{Bonnor:1980nw} and rederived in \cite{Lemos:2009} for spaces with an arbitrary number of dimensions. By defining $\mathcal{Z} \equiv \sqrt{V'^2 - 1}$, Eq. (\ref{eq:field2})  can be recast as
\begin{align}
    \mathcal{Z} \mathbf{\nabla}_i  \big(\mathcal{Z} \mathbf{\nabla}^i\Phi\big) =  - \Omega\big( \nabla^2 \psi + 2 \Omega V'\big). \label{eq:field3}
\end{align}
This allows us to state a new theorem, as follows.

\begin{teo}[\textit{rigidly rotating and axisymmetric Newtonian version of Bonnor 1980}] \label{teo:3} 

Supposing that the right-hand side of
Eq.~(\ref{eq:field3}) vanishes, i.e., if 
\begin{equation}
    \nabla^2 \psi + 2 \Omega V'=0, \label{eq:cond-teo5}
\end{equation} then,

(i) In the Newton-Maxwell theory, if the surfaces of any rigidly rotating charged dust distribution in equilibrium are closed equipotential surfaces and inside these surfaces there are no singularities, holes or alien matter, then the relation between $V$ and $\Phi$ is linear
\begin{equation}
   V = - \epsilon \Phi + \gamma, \label{eq:linear}
\end{equation}
where $\epsilon=\pm 1$ and $\gamma$ is an integration constant, and it follows that the fluid quantities satisfy the Majumdar-Papapetrou condition
\begin{align}
    \rho_e = \epsilon \rho_m. \label{eq:linear1}
\end{align}

(ii) In the Newton-Maxwell theory, if the ratio $\rho_e/\rho_m$ equals a constant $\kappa$, and there are no singularities, holes or alien matter in that region, then the potentials $V$ and $\Phi$ are related by Eq.~(\ref{eq:linear}) with $\kappa = \epsilon$.
\end{teo}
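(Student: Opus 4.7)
The overall strategy is to exploit the fact that, under the standing hypothesis $\nabla^2\psi + 2\Omega V' = 0$, Eq.~(\ref{eq:field3}) collapses to $\nabla_i(\mathcal{Z}\nabla^i\Phi)=0$ on any region where $\mathcal{Z}=\sqrt{V'^2-1}$ is nonzero, and then to deploy a Bonnor-type energy integral to force $\mathcal{Z}\equiv 0$ throughout the fluid, so that $V'^2=1$ pointwise. For part (i), I would first invoke Theorem~\ref{teo:p0rigidN1} to note that $V=V(\Phi)$ is automatic for any rigidly rotating charged dust in equilibrium, so that the closed equipotential surface $S$ of $V$ is simultaneously a level set of $\Phi$; let $\mathcal{R}$ denote the region it encloses, which by hypothesis is free of singularities, holes, and alien matter. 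Multiplying $\nabla_i(\mathcal{Z}\nabla^i\Phi)=0$ by $\Phi$, integrating over $\mathcal{R}$, and applying the divergence theorem yields
\begin{equation*}
\int_{\mathcal{R}}\mathcal{Z}\,\nabla_i\Phi\,\nabla^i\Phi\, dV = \Phi\big|_S \oint_{S}\mathcal{Z}\,\nabla^i\Phi\, n_i\, dA,
\end{equation*}
and the remaining boundary integral itself equals $\int_{\mathcal{R}}\nabla_i(\mathcal{Z}\nabla^i\Phi)\,dV=0$. With $\mathcal{Z}$ of fixed sign the integrand is definite, so $\mathcal{Z}(\nabla\Phi)^2=0$ throughout $\mathcal{R}$; discarding the trivial possibility of constant $\Phi$, this forces $\mathcal{Z}=0$, i.e., $V'=-\epsilon$ with $\epsilon=\pm 1$, and integration then yields $V=-\epsilon\Phi+\gamma$. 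Inserting this into the rigid-rotation equilibrium equation $\rho_m V'+\rho_e=0$ inherited from Eq.~(\ref{eq:eulerp0rig}) immediately delivers the Majumdar-Papapetrou condition $\rho_e=\epsilon\rho_m$.

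For part (ii), the hypothesis $\rho_e/\rho_m=\kappa$ combined with the same equilibrium equation already yields $V'=-\kappa$, hence $V=-\kappa\Phi+\gamma$ upon integration. To pin down $\kappa^2=1$, I would substitute $V'=-\kappa$ and $V''=0$ into Eq.~(\ref{eq:field2}), use the standing hypothesis $\nabla^2\psi+2\Omega V'=0$ to kill the last bracket, and reduce the field equation to $(\kappa^2-1)\nabla^2\Phi=0$. Excluding the degenerate case $\nabla^2\Phi\equiv 0$---which via Eq.~(\ref{eq:pot2}) and the assumed form of $\nabla^2\psi$ would pin the charge density to a single constant value everywhere---this forces $\kappa^2=1$, so $\kappa=\epsilon=\pm 1$.

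The main obstacle is the sign handling in part (i): if $V'^2<1$ somewhere in $\mathcal{R}$, the $\mathcal{Z}$ as written becomes imaginary. I would circumvent this by working with $\tilde{\mathcal{Z}}=\sqrt{1-V'^2}$ in that regime, which merely flips the overall sign of the Green identity but still delivers $\tilde{\mathcal{Z}}=0$, so $V'^2=1$ again. A related subtlety to be flagged is the need to exclude loci inside $\mathcal{R}$ where $V'^2-1$ changes sign, which if permitted would undermine the sign-definiteness step on which the whole of part (i) rests.
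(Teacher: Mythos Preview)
Your argument is correct and follows essentially the same Bonnor-type energy-integral strategy as the paper. The one technical difference worth noting is the choice of multiplier in part~(i): the paper first introduces an auxiliary potential $\Psi$ via $\nabla_i\Psi=\mathcal{Z}\,\nabla_i\Phi$ (which is exact because $\mathcal{Z}=\mathcal{Z}(\Phi)$), observes that $\Psi$ is harmonic, and then integrates $\nabla_i(\Psi\nabla^i\Psi)$ to obtain $\int_{\mathcal{R}}\mathcal{Z}^2(\nabla\Phi)^2\,d\mathbb{V}=0$ with a \emph{squared} prefactor. This automatically gives a nonnegative integrand and makes your ``fixed-sign'' caveat unnecessary, so the sign-change worry you flag at the end simply does not arise in the paper's version. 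Your direct multiplication by $\Phi$ is a legitimate shortcut that reaches the same conclusion (since $\mathcal{Z}=\sqrt{V'^2-1}\ge 0$ whenever real), and your treatment of the $V'^2<1$ regime via $\tilde{\mathcal{Z}}$ is in fact more careful than the paper, which tacitly assumes $V'^2\ge 1$. Part~(ii) is argued identically in both: $V'=-\kappa$ from equilibrium, substitution into Eq.~(\ref{eq:field2}) under the standing hypothesis gives $(\kappa^2-1)\nabla^2\Phi=0$, and nontriviality of $\nabla^2\Phi$ forces $\kappa=\epsilon$.
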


\begin{proof}
We first prove assertion \textit{(i)}. Assuming $\mathcal{Z} \neq 0$ and defining $\mathbf{\nabla}_i \Psi \equiv \mathcal{Z} \mathbf{\nabla}_i \Phi$, since one has $ \nabla^2 \psi + 2 \Omega V' = 0$, it follows that $\nabla^2 \Psi = 0$. Therefore $\Psi$ is a harmonic function, which implies in $ \int_{\mathbb{V}_{\!\sscrst S}} \nabla^2 \Psi d \mathbb{V} = \int_{\sscrst S} \left(\mathbf{\nabla}_i \Psi\right) n^i d S =  0$, with $S$ being the boundary of the finite volume $\mathbb{V}_{\sscrst S}$ in the Euclidean space $\mathbb{R}^3$, $n^i$ being the unit vector normal to $S$, and the Gauss theorem has been used. Now, by integrating the divergence $\nabla_i (\Psi \nabla^i \Psi)$ through the finite volume $\mathbb{V}_{\sscrst S}$, it gives
\begin{align}
    \int_{\mathbb{V}_{\!\sscrst S}}\!\! \nabla_i \left(\Psi \nabla^i \Psi\right) d \mathbb{V} = \int_{\mathbb{V}_{\!\sscrst S}} \left(\nabla_i \Psi\right)^2 d\mathbb{V} = \int_{S} \left(\Psi \nabla_i \Psi\right) n^i dS, \label{eq:teofield2}
\end{align}
where we used Gauss theorem again.  If there exist a closed surface which is an equipotential surface for $\Phi$ and satisfies the other assumptions of assertion {\it (i)}, by identifying such a surface with $S$, it follows that $d \Phi = 0$. Contracting $\nabla_i \Psi =  \mathcal{Z} \nabla_i \phi$ and $dx^i$, it follows that $d \Psi = Z d \Phi = 0$ and so $\Psi$ is constant on $S$. Using this result and Eq.~\eqref{eq:teofield2} one gets
\begin{align}
        \int_{\mathbb{V}_{\! S}} \mathcal{Z}^2 (\nabla_i \Phi)^2 d\mathbb{V} = \Psi \int_{S} (\nabla_i \Psi) n^i dS = 0.
\end{align}
Once the integrand is a positive definite function in which $\nabla_i \Phi \neq 0$, it remains $\mathcal{Z}$ must vanish within the whole volume $\mathbb{V}_{\sscrst S}$, which implies in $V' = - \epsilon$. The straightforward integration of such a relation furnishes Eq.~(\ref{eq:linear}), and finally the equilibrium condition $\rho_m V' + \rho_e = 0$ promptly gives Eq.~(\ref{eq:linear1}).

The proof of assertion \textit{(ii)} is straightforward. Given that $\rho_e / \rho_m = \kappa$, by the equilibrium condition \eqref{eq:eulerp0rig}, it follows that $V' = - \kappa$. Therefore $ \mathcal{Z} \nabla_i (\mathcal{Z} \nabla^i \Phi) = \mathcal{Z}^2 \nabla^2 \Phi = 0$, once $ \nabla^2 \psi + 2 \Omega V' = 0$. Since we are considering a region where there is a charge distribution $\rho_e$, then by virtue of Eq.~(\ref{eq:pot5}) one has $\nabla^2 \Phi \neq 0$, remaining that $\mathcal{Z} = \sqrt{V'^2 - 1} = 0$, which gives $V= -\epsilon \Phi+\,$constant, i.e., Eq.~(\ref{eq:linear}) is recovered so that $V'=-\epsilon$ and then the constant $\kappa$ has to be equal to $\epsilon = \pm 1$, what completes the proof.  
\end{proof}

Theorem \ref{teo:3} generalizes to rotating charged dust fluids a theorem by Bonnor \cite{Bonnor:1980nw} for static charged dust fluids, which is the rigidly rotating Newtonian version of the results for relativistic static fluids found in \cite{Papa,das62, deray68}. Here the charged dust fluid is under rigid rotation. Assertion $(i)$ follows by imposing condition \eqref{eq:cond-teo5}, what turns Eq.~\eqref{eq:field3} equivalent to the case without rotation. On the other hand, the assumption that the ratio $\rho_e/\rho_m$ is a constant implies that the potentials $V$ and $\Phi$ are linearly related, exactly as it happens for nonrotating charged dust fluids. 

The assumed condition $\rho_e/\rho_m=\,$constant is reasonable from the physical grounds because it makes the charge density proportional to the mass density. It is also well motivated from the mathematical point of view since it simplifies the system of equations to be solved and the resulting models may be compared to the many results in the literature for static charged fluids. 

On the other hand, the implications of condition \eqref{eq:cond-teo5} need to be further investigated.
For this task we resort to the studies on magnetohydrodynamics and plasma physics, and make use of the Goldreich-Julian density, which is a charge density arising in the  magnetospheres of rotating stars due to the effects of the electromagnetic fields induced in the medium, see Refs.~\cite{Goldreich, Shapiro}. Such a charge density is given approximately by $
\rho_{\sscrst GJ}= -\frac{\Omega_i B^i}{2 \pi}$, and for the kind of systems we are considering here, i.e., axially symmetric charged dust fluids in rigid rotation, it  reduces to $\rho_{\sscrst GJ}= \frac{\Omega }{2\pi\, r}\,\partial_r \psi$,
where $\psi$ is the magnetic potential, see Eqs.~\eqref{eq:Br} and \eqref{eq:Bz}. Then, by using such a charge density together with Eqs.~\eqref{eq:eff-electpot} and \eqref{eq:pot5}, we find
\begin{align}
\Omega\big( \nabla^2 \psi + 2 \Omega V'\big)& =  \Omega\left(4\pi \rho_e\Omega\,r^2+ \frac{2}{r} \partial_r \psi+ 2 \Omega\,V'\right)\nonumber\\
& =  4\pi \rho_e\Omega^2\,r^2+ 4\pi\,\rho_{GJ}+ 2 \Omega^2\,V'. \label{eq:rightside}
\end{align}
From this result we see that the constraint \eqref{eq:cond-teo5} gives a type of equation of state for the effective charge density. In particular, when the condition is valid and Eq.~(\ref{eq:linear}) holds, one has $V'=\epsilon$ and $V''=0$, so that
\begin{align}
   \rho_{GJ}= \frac{\epsilon \Omega^2}{2\pi } - \rho_e \Omega^2 r^2. \label{eq:rhogj}
\end{align}
Hence, $\rho_{\sscrst GJ}$ can be interpret as an effective charge density due to rotation, with the second term on the right-hand side depending explicitly on the electric charge density of the medium $\rho_e$, while  the first term is present even when $\rho_e$ vanishes locally. 
 
 It is noteworthy that explicit exact solutions in which the right-hand side of Eq.~(\ref{eq:field3}) vanishes, and thus relations~(\ref{eq:linear}), (\ref{eq:linear1}), and the equation of state \eqref{eq:rhogj} are valid, have been found by Islam in \cite{Islam1983}.
Other consequence of the condition $\nabla^2 \psi + 2 \Omega V' = 0$ is that it implies a further restriction over the $z$-component of the magnetic field, which results in the form $ B_z =  -\epsilon \Omega_z + 2 \pi \varepsilon_{zjk}x^j J^k$, where $J^k$ is the current density and $\mathcal{M}_z = \frac{1}{2} \varepsilon_{zjk}x^j J^k$ is the $z$-component of the magnetic moment density. 

Finally, let us analyze a situation where the right-hand side of Eq.~(\ref{eq:field3}) does not vanish, and then the relation between $V$ and $\Phi$ cannot be Eq.~\eqref{eq:linear}.
In such a case, it is interesting to consider a more general linear relation between $V$ and $\Phi$ than~\eqref{eq:linear} and investigate the consequences.  For instance,  Islam \cite{Islam1978} and Bonnor \cite{Bonnor1980b} proposed a simple linear relation of the form  
\begin{align}
V = -\epsilon \beta \Phi + \gamma, \label{eq:lin}    
\end{align}
where $\beta$ and $\gamma$ are arbitrary constant parameters. With this hypothesis, after using also Eqs.~\eqref{eq:pot5} to bring back the  effective charge density,  Eq.~(\ref{eq:field2}) reduces to  
\begin{align}
    \rho_{\sscrst GJ}= \frac{\epsilon \beta \Omega^2}{2\pi} -\left[1 -\beta^2\left( 1 - r^2 \Omega^2\right)\!\right]\!\rho_e.   \label{eq:cond1} 
\end{align}
This relation gives the Goldreich-Julian induced charge density in terms of the fluid quantities, generalizing the relationship \eqref{eq:rhogj} which holds for $\beta =1$.
In turn, the continuity equation yields the Newtonian version of the Weyl-Guilfoyle condition,
\begin{align}
    \rho_e = \epsilon\,\beta\, \rho_m. \label{eq:rhom-rhoe2}
\end{align}
Notice that parameter $\beta$ in relation~(\ref{eq:lin}) produces a similar effect on the rotating dust fluid as the pressure in the static fluid case. In fact, as shown in \cite{Lemos:2009}, a relation in the form~\eqref{eq:rhom-rhoe2} follows directly from the hypothesis~(\ref{eq:lin}) for nonrotating fluids with pressure in the Newton-Coulomb theory. See also \cite{Guilfoyle:1999yb} for the relativistic counterpart of this analysis.

\subsection{Nonzero-pressure rotating charged fluids in the Newton-Maxwell theory}
\label{sec:2c}

\subsubsection{Differentially rotating charged pressure fluids in the Newton-Maxwell theory}

Here we consider differentially rotating charged fluids with nonzero pressure in the Newton-Maxwell theory. The general properties of such kind of systems are investigated by following the previous works on static systems, see e.g. \cite{Bonnor:1980nw,Lemos:2009}.

As in the preceding section, the equilibrium equation (\ref{eq:equi1}) is written in terms of total derivatives by contracting with $d x^i$, i.e.,
\begin{align}
   \rho_md V  +  \rho_e  d\Phi + \left(\rho_m \Omega r^2 - \rho_e  \psi\right)  d \Omega + d p   = 0.  \label{eq:equilp} 
\end{align}
The potentials $V$ and $\Phi$ and the fluid quantities $\Omega$ and $p$ are then functionally related through Eq.~\eqref{eq:equilp}. This means that, for instance, the pressure is a function of the other three quantities, i.e., $p = p(V, \Phi,\Omega)$, what gives  $ (\partial p/ \partial V)_{\Phi,\Omega} = - \rho_m $, $(\partial  p/ \partial \Phi)_{V,\Omega}= - \rho_e$, and $\left(\partial  p/ \partial \Omega\right)_{V,\Phi}= - \left(\rho_m \Omega r^2 - \rho_e  \psi\right)$. Therewith, we can state a new theorem.

\begin{teo}[\textit{differentially rotating and axisymmetric Newtonian version of Guilfoyle 1999}] \label{teo:4dif}
For any differentially rotating charged pressure fluid in equilibrium in the Newton-Maxwell theory, if any three of the four surfaces of constant $V$, $\Phi$, $\Omega$, and $p$ coincide, then the fourth also coincides with the other three.
\end{teo}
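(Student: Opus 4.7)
The plan is to mimic the structure of the proof of Theorem~\ref{teo:2}, with the equilibrium equation~\eqref{eq:equilp} playing the role of Eq.~\eqref{eq:eulerp0}. The four quantities $V$, $\Phi$, $\Omega$, and $p$ are scalar fields on $\mathbb{R}^3$, so each of the conditions ``$V={}$const'', ``$\Phi={}$const'', ``$\Omega={}$const'', and ``$p={}$const'' defines a two-dimensional surface, on which the corresponding differential $dV$, $d\Phi$, $d\Omega$, or $dp$ vanishes. I would state the hypothesis as: three of these four differentials vanish on a common surface; the goal is to show the fourth vanishes there as well, so that the four surfaces coincide.

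The key step is to read off from Eq.~\eqref{eq:equilp} that the coefficients of the four differentials are $\rho_m$, $\rho_e$, $\rho_m\Omega r^2-\rho_e\psi$, and $1$, respectively. I would then argue that for a physical, rotating, charged fluid these four coefficients are all nonzero: $\rho_m\neq 0$ in the fluid region by assumption, $\rho_e\neq 0$ because we deal with a charged fluid, the coefficient $\rho_m\Omega r^2-\rho_e\psi$ is nonzero by the energy-density argument already given in the text just before Theorem~\ref{teo:2} (which shows this combination vanishes only in the strictly static case), and the coefficient of $dp$ is identically $1$. Hence whenever three of the four differentials vanish at a point, Eq.~\eqref{eq:equilp} forces the fourth differential, multiplied by a nonzero coefficient, to vanish as well, and therefore the fourth quantity is constant on the same surface. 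This establishes the claim.

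I do not expect any serious obstacle: the proof is essentially a one-line consequence of Eq.~\eqref{eq:equilp} together with the nonvanishing-coefficient discussion already carried out in the zero-pressure case. The only point that deserves explicit mention, and is the mildly delicate part, is the reuse of the inequality $\rho_m\Omega r^2-\rho_e\psi\neq 0$ for $\Omega\neq 0$ established earlier, since without it one of the four implications would fail; I would simply invoke that earlier argument rather than rederive it.
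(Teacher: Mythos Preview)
Your proposal is correct and follows essentially the same approach as the paper's own proof: both argue that the four quantities are scalar fields on $\mathbb{R}^3$ defining level surfaces, and that the equilibrium equation~\eqref{eq:equilp} forces the fourth differential to vanish whenever three do. Your version is in fact slightly more explicit than the paper's, which simply asserts that Eq.~\eqref{eq:equilp} ``is valid in its full extent'' without spelling out the nonvanishing of each coefficient; your explicit invocation of the earlier energy-density argument for $\rho_m\Omega r^2-\rho_e\psi\neq 0$ is a welcome clarification.
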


\begin{proof} The proof is similar to Theorems \ref{teo:2} and \ref{teo:p0rigidN1}.
The potentials $V$, $\Phi$, and the fluid quantities $\Omega$ and $p$ are scalar functions in the Euclidean space $\mathbb{R}^3$, and then the
conditions of constant $V$, $\Phi$, $\Omega$, or $p$ define level surfaces in $\mathbb{R}^3$ implying in $dV=0$, $d\Phi=0$, $d\Omega=0$, and $dp=0$ on each respective surface. Moreover, since Eq.~(\ref{eq:equilp}) is valid in its full extent, it is straightforward to see that if any three of the four differentials $dV$, $d\Phi$, $d \Omega$, or $d p$ vanish, then all the four of them vanish, meaning that the four surfaces coincide. 
\end{proof}

This theorem is the Newton-Maxwell analogous of the one stated by Guilfoyle \cite{Guilfoyle:1999yb} for relativistic static charged fluids in the Einstein-Maxwell theory. It generalizes Theorem \ref{teo:2} by including the fluid pressure.

\subsubsection{Differentially rotating charged pressure fluids of Weyl-type in the Newton-Maxwell theory}

Let us consider the Weyl ansatz $V = V(\Phi)$ and explore some of its consequences for a differentially rotating charged fluid with pressure. 
The equilibrium equation now reads $(\rho_m V' + \rho_e) d \Phi + (\rho_m \Omega r^2 - \rho_e \psi) d \Omega + dp = 0$, where the prime stands for the derivative with respect to $\Phi$. Therefore, the pressure of the fluid can be considered as a function of $\Phi$ and $\Omega$, i.e., $p = p(\Phi, \Omega)$. This allows us to state two new theorems.
\begin{teo}[\textit{new}]\label{teo:newt}
If a differentially rotating charged pressure fluid is of Weyl-type and is in equilibrium in the Newton-Maxwell theory, and obey the constraint $\rho_m V' + \rho_e \neq 0$, then if any   
pair among the three surfaces of constant $\Phi$, $\Omega$, and $p$ coincide, the third also coincides with the other two.
\end{teo}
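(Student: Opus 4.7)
The plan is to mirror the style of Theorems \ref{teo:2}, \ref{teo:p0rigidN1}, and \ref{teo:4dif}, exploiting a single algebraic equation relating three one-forms on $\mathbb{R}^3$ in which all coefficients are nonvanishing. First I would recall that $\Phi$, $\Omega$, and $p$ are smooth scalar fields on the three-dimensional Euclidean configuration space, so the conditions $\Phi=\,$const., $\Omega=\,$const., and $p=\,$const. define level surfaces on which $d\Phi$, $d\Omega$, and $dp$ vanish respectively.

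Next I would invoke the Weyl ansatz $V=V(\Phi)$ to rewrite the full equilibrium equation \eqref{eq:equilp} as
\begin{align}
(\rho_m V'+\rho_e)\,d\Phi+(\rho_m\Omega r^2-\rho_e\psi)\,d\Omega+dp=0,\nonumber
\end{align}
which is exactly the expression already displayed just before the statement. The key observation is that the coefficients of $d\Phi$ and $d\Omega$ are both nonzero: the coefficient of $d\Phi$ is nonzero by the hypothesis $\rho_m V'+\rho_e\neq 0$, and the coefficient of $d\Omega$ is nonzero by the general argument given earlier in Sec.~\ref{sec:2b}, where $\rho_m\Omega r^2-\rho_e\psi$ was shown to vanish only in the static limit (which is excluded here since the system is differentially rotating).

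From this point the proof is a one-line bookkeeping argument: if any two of the three differentials $d\Phi$, $d\Omega$, $dp$ vanish simultaneously on a surface, then the single linear relation above forces the remaining term to vanish as well, because the two relevant coefficients are nonzero. Hence the third surface coincides with the coinciding pair. I anticipate no real obstacle, since the theorem is essentially a rank-one linear algebra statement once the nonvanishing of the two coefficients is secured; the only subtle point is to cite the earlier nonvanishing argument for $\rho_m\Omega r^2-\rho_e\psi$ rather than prove it again, and to flag that the hypothesis $\rho_m V'+\rho_e\neq 0$ is genuinely needed (its failure would force $d\Omega=0$ and collapse the theorem into the rigidly rotating case).
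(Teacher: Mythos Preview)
Your proposal is correct and follows essentially the same approach as the paper: reduce the equilibrium equation via the Weyl ansatz to a single linear relation among $d\Phi$, $d\Omega$, and $dp$, and observe that vanishing of any two forces the third to vanish. If anything, you are slightly more explicit than the paper in invoking the nonvanishing of $\rho_m\Omega r^2-\rho_e\psi$ from Sec.~\ref{sec:2b}, which the paper's own proof leaves implicit.
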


\begin{proof}
By assumption, the equilibrium equation reads $(\rho_m V' + \rho_e) d \Phi + (\rho_m \Omega r^2 - \rho_e \psi) d \Omega + dp = 0$, then one can see that if any two of the three differentials $d\Phi$, $d \Omega$, and $dp$ are zero, then the third differential is also zero, meaning that the three surfaces of constant $\Phi$, $\Omega$, and $p$  coincide.
\end{proof}

This theorem generalizes Theorem \ref{teo:2a} by including the fluid pressure.
It is worth noticing that, unlike the term $\rho_m \Omega r^2 - \rho_e \psi$ being non-negative and vanishing only
in the static case, there is no restriction on the term $\rho_m V' + \rho_e$ such
that nothing prevents the case where $\rho_m V' + \rho_e = 0$. In this case, we have another theorem.

\begin{teo}[\textit{new}]\label{teo:newt1}
If a differentially rotating charged pressure fluid is of Weyl-type, is in equilibrium in the Newton-Maxwell theory, and obeys the relation $\rho_m V' + \rho_e = 0$, then the surfaces of constant angular velocity coincide with the surface of constant pressure, and vice versa.
\end{teo}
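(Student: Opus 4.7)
The plan is to reduce the equilibrium equation to a two-term relation in $d\Omega$ and $dp$ only, and then to apply the same ``coincident surfaces'' reasoning used in the preceding theorems.

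First, I would begin with the total-differential form of the equilibrium equation for a differentially rotating charged pressure fluid, namely
\begin{equation*}
\rho_m\,dV + \rho_e\,d\Phi + (\rho_m\Omega r^{2} - \rho_e\psi)\,d\Omega + dp = 0,
\end{equation*}
which is Eq.~\eqref{eq:equilp}. Using the Weyl ansatz $V=V(\Phi)$ so that $dV = V'\,d\Phi$, this takes the form $(\rho_m V' + \rho_e)\,d\Phi + (\rho_m\Omega r^{2} - \rho_e\psi)\,d\Omega + dp = 0$, exactly as recorded just before Theorem~\ref{teo:newt}.

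Next, I would substitute the extra hypothesis of the theorem, $\rho_m V' + \rho_e = 0$, which kills the $d\Phi$ contribution and leaves
\begin{equation*}
(\rho_m\Omega r^{2} - \rho_e\psi)\,d\Omega + dp = 0.
\end{equation*}
At this point I can invoke the result established earlier in Sec.~\ref{sec:2b}, namely that $\rho_m\Omega r^{2} - \rho_e\psi \neq 0$ except in the static limit (where $\Omega = 0 = \psi$, so the statement is trivial). In the genuinely rotating case the coefficient of $d\Omega$ is nonzero, so the displayed equation defines a strict proportionality $dp = -(\rho_m\Omega r^{2}-\rho_e\psi)\,d\Omega$.

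Finally, since $\Omega$ and $p$ are scalar fields on $\mathbb{R}^{3}$, their level sets are two-dimensional surfaces on which $d\Omega=0$ and $dp=0$, respectively. The proportionality above implies that $d\Omega=0$ if and only if $dp=0$, so the surfaces of constant $\Omega$ coincide with the surfaces of constant $p$ and vice versa, which is the claim. I do not anticipate any real obstacle here; the argument is a direct specialization of the technique used in Theorems~\ref{teo:2}, \ref{teo:p0rigidN1}, \ref{teo:4dif}, and \ref{teo:newt}, the only subtle point being to recall that the non-vanishing of $\rho_m\Omega r^{2}-\rho_e\psi$ in the non-static regime has already been established in the discussion preceding Theorem~\ref{teo:2}, so that the step from the reduced equilibrium relation to the coincidence of level surfaces is rigorous.
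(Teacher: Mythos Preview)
Your proof is correct and follows essentially the same approach as the paper's own proof: reduce the equilibrium equation via the Weyl ansatz and the hypothesis $\rho_m V' + \rho_e = 0$ to the two-term relation $(\rho_m\Omega r^{2}-\rho_e\psi)\,d\Omega + dp = 0$, then conclude that the level surfaces of $\Omega$ and $p$ coincide. Your write-up is in fact slightly more careful than the paper's, since you explicitly invoke the previously established non-vanishing of $\rho_m\Omega r^{2}-\rho_e\psi$ in the non-static case, whereas the paper leaves this implicit.
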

\begin{proof}
 Taking into account the hypothesis $\rho_m V' + \rho_e = 0$, the equilibrium condition \eqref{eq:equilp} reduces to $(\rho_m \Omega r^2 - \rho_e \psi)d \Omega + dp = 0$. Therefore, it is straightforward to see that if any one of the differentials $d\Omega$ or $d p$ is zero, then both of them are zero, meaning that the surfaces of constant $\Omega$ and $p$ coincide.  
\end{proof}

This theorem is similar to Theorem \ref{teo:p0rigidN1} since it implies that the potentials $V$ and $\Phi$ are functionally related, i.e., $V=V(\Phi)$.

\subsubsection{Rigidly rotating charged pressure fluids in the Newton-Maxwell theory}

\label{subsection:newtweylfluid}
Here we consider rotating charged fluid with nonzero pressure in rigid rotation, for which $d\Omega=0$. 
In this case, written in terms of total derivatives, the equilibrium equation~(\ref{eq:equi1}) gives us  $\rho_m d V + \rho_e d \Phi + d p = 0$. It is then seen that $V$, $\Phi$ and $p$ are functionally related. For instance, we may assume that $p = p(V, \Phi)$, with $\rho_m = - (\partial p/ \partial V)_{\Phi}$ and $\rho_e = - (\partial  p/ \partial \Phi)_{V}$. Therewith, we can state the following theorem, whose static (nonrotating) general relativistic version can be found in \cite{Guilfoyle:1999yb}.

\begin{teo}[\textit{rigidly rotating and axisymmetric Newtonian version of Guilfoyle 1999}] \label{teo:4}

For any rigidly rotating charged pressure fluid in equilibrium in the Newton-Maxwell theory, if any pair of the three surfaces of constant $V$, $\Phi$, and $p$ coincide, then the third also coincides with the other two.
\end{teo}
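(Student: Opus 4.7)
The plan is to mirror the reasoning used in Theorems \ref{teo:2}, \ref{teo:p0rigidN1}, and \ref{teo:4dif}, adapting it to the rigidly rotating pressure case. The starting point is the equilibrium equation~\eqref{eq:equi1} contracted with $dx^i$, specialized to $d\Omega=0$, which collapses the four-term relation~\eqref{eq:equilp} to
\begin{equation}
\rho_m\, dV + \rho_e\, d\Phi + dp = 0. \label{eq:plan-rig}
\end{equation}
This is the only ingredient that is really needed.

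First, I would observe that $V$, $\Phi$, and $p$ are scalar fields on the three-dimensional Euclidean space $\mathbb{R}^3$, so the level sets $V=\text{const}$, $\Phi=\text{const}$, and $p=\text{const}$ are (generically) two-dimensional surfaces in $\mathbb{R}^3$ on which $dV=0$, $d\Phi=0$, and $dp=0$ respectively. The claim of coincidence of two such surfaces means that restricted to a common surface two of the three differentials vanish. Plugging this into Eq.~\eqref{eq:plan-rig} immediately forces the third differential to vanish on that same surface, so the third level surface also coincides.

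To make this rigorous, I would split into the three possible cases (which pair of surfaces is assumed to coincide), and in each case use Eq.~\eqref{eq:plan-rig} to read off the vanishing of the remaining differential. For instance, if the surfaces $V=\text{const}$ and $\Phi=\text{const}$ coincide on some open region, then $dV=0=d\Phi$ there, and \eqref{eq:plan-rig} gives $dp=0$; the other two pairings are handled identically. As a complementary remark, I would point out that Eq.~\eqref{eq:plan-rig} further implies the functional dependence $p=p(V,\Phi)$ with $\rho_m=-(\partial p/\partial V)_\Phi$ and $\rho_e=-(\partial p/\partial \Phi)_V$, exactly as anticipated in the text preceding the statement, so the coincidence of level surfaces is consistent with the smooth functional relation among $V$, $\Phi$, and $p$.

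There is essentially no real obstacle here: the argument is purely algebraic once the rigid-rotation simplification of \eqref{eq:equilp} is in place. The only point that warrants a brief remark is the tacit regularity assumption that the gradients do not degenerate on open sets (so that ``level surfaces coincide'' has the intended meaning); this is implicit in the analogous Theorems \ref{teo:2} and \ref{teo:p0rigidN1}, and I would simply adopt the same convention rather than belabor it. The proof is therefore short: state \eqref{eq:plan-rig}, read off the three cases, conclude.
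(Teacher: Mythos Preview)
Your proposal is correct and mirrors the paper's own proof almost verbatim: the paper simply notes that constant $V$, $\Phi$, and $p$ define level surfaces and that the relation $\rho_m\,dV+\rho_e\,d\Phi+dp=0$ forces the third differential to vanish whenever any two do. Your additional remarks on the functional dependence $p=p(V,\Phi)$ and the implicit regularity assumption are consistent with the surrounding text and add nothing that conflicts with the paper's argument.
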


\begin{proof}
The proof is similar to Theorem \ref{teo:4dif} and it follows by noticing that the condition of constant $V$, $\Phi$,  or $p$ defines a level surface. Moreover, since $\rho_m d V + \rho_e d \Phi + d p = 0$, it is straightforward to see that if any pair of the differentials $dV,\, d\Phi$, and $d p$ vanish, then all of them vanish.
\end{proof}

Theorem \ref{teo:4} generalizes Theorem \ref{teo:p0rigidN1} by including the fluid pressure, and it states the analogous result as for the nonrotating charged pressure fluids studied in Ref.~\cite{Lemos:2009}. This coincident result can be understood by noting that a constant angular velocity $\Omega$ does not contribute to the equilibrium equation.

\subsubsection{Rigidly rotating charged pressure fluids of Weyl-type in the Newton-Maxwell theory}
 
Now, let us consider the Weyl ansatz $V = V(\Phi)$ and see its consequences for a rigidly rotating fluid with nonzero pressure.  In this case, the equilibrium equation \eqref{eq:equi1} can be rewritten as $\left(\rho_m V' + \rho_e\right) d \Phi + d p = 0$.  If $\rho_m V' + \rho_e \neq 0$, the pressure results also a function of $\Phi$, i.e., $p = p(\Phi)$. This allow us to state a new theorem.

\begin{teo}[\textit{rigidly rotating and axisimmetric Newtonian version of Guilfoyle 1999}] \label{teo:5}
(i) If a rigidly rotating charged pressure fluid is of Weyl-type and is in equilibrium in the Newton-Maxwell theory, then the equipotentials surfaces are also surfaces of constant pressure, and vice versa.

(ii) If a rigidly rotating charged fluid with pressure is of Weyl-type and is in equilibrium in the Newton-Maxwell theory, then either the pressure gradient vanishes at the surface of the fluid, or the surface is an equipotential.
\end{teo}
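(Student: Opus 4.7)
The plan is to exploit the equilibrium equation, written in terms of total differentials and with the Weyl ansatz inserted, namely $(\rho_m V' + \rho_e)\,d\Phi + dp = 0$, together with the observation that $V = V(\Phi)$ identifies the level surfaces of $V$ with those of $\Phi$, so an equipotential surface may be characterized equivalently by $d\Phi = 0$. Both assertions then fall out of this single relation combined with elementary facts about level sets.

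For part $(i)$, I would mimic the arguments used in Theorems \ref{teo:p0rigidN1} and \ref{teo:4}. The working hypothesis of this subsection is $\rho_m V' + \rho_e \neq 0$, which, as noted in the paragraph preceding the theorem, already yields $p = p(\Phi)$. Inserted in the relation above, this gives $dp = 0 \iff d\Phi = 0$. Therefore the level surfaces of $p$ and of $\Phi$ coincide in $\mathbb{R}^3$; combined with the Weyl relation $V = V(\Phi)$, this establishes both directions of the claimed equivalence between equipotential and constant-pressure surfaces.

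For part $(ii)$, I would rewrite the same equilibrium relation in gradient form, $(\rho_m V' + \rho_e)\nabla_i \Phi + \nabla_i p = 0$, and evaluate it on the bounding surface of the fluid. The pressure is constant on that surface (taking its boundary value, typically zero), so wherever $\nabla_i p$ does not vanish it must point along the normal to the surface. The identity above then forces $\nabla_i \Phi$ to be parallel to $\nabla_i p$, hence also normal to the surface, so $\Phi$ is constant along it and the surface is equipotential. The only remaining alternative is that $\nabla_i p$ vanishes on the surface, which is precisely the dichotomy stated in the theorem.

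The main subtlety is disposing of the degenerate possibility $\rho_m V' + \rho_e = 0$ that sits outside the working hypothesis: in that case the equilibrium relation immediately collapses to $dp = 0$ throughout the fluid, placing the system in the second branch of part $(ii)$ uniformly and making the equivalence of part $(i)$ trivial (both sides are automatic). Apart from this brief case analysis and the implicit use of the standard fact that the gradient of a function that is constant on a surface is normal to that surface, no technically difficult step is anticipated; the theorem is essentially a direct reading of the Weyl-reduced equilibrium equation, parallel in spirit to the nonrotating version due to Guilfoyle.
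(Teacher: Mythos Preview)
Your proposal is correct and follows essentially the same route as the paper: both arguments rest on the Weyl-reduced equilibrium relation $(\rho_m V' + \rho_e)\,d\Phi + dp = 0$ together with $dV = V'\,d\Phi$, and for part \textit{(ii)} both use that the boundary is a constant-pressure surface. The only minor difference is that for \textit{(ii)} the paper simply invokes part \textit{(i)} once the boundary is recognized as a constant-$p$ surface, whereas you spell out the gradient-normal argument explicitly; your added remark on the degenerate case $\rho_m V' + \rho_e = 0$ is a useful completeness check that the paper leaves implicit.
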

\begin{proof}
To prove assertion \textit{(i)} we use the Weyl ansatz $V = V(\Phi)$ and the equilibrium condition in the form $(\rho_m V' + \rho_e) d \Phi + d p = 0$. Then, the surfaces of constant $V$ and $\Phi$ coincide since $d V = V' d \Phi$, and Theorem \ref{teo:4} implies that the surface of constant $p$ also coincides. On the other hand, with the condition $\rho_m V' + \rho_e \neq 0$ obeyed, a surface of constant $p$ implies in $d \Phi = 0$ and thus, since $d V = V' d \Phi$, it also implies that $d V = 0$. To prove assertion \textit{(ii)}, we first note that the boundary conditions require that the pressure is zero at the surface of the fluid distribution. Therefore, the surface of the fluid is a surface of constant pressure, i.e., $d p = \mathbf{\nabla}_i p\, d x^i = 0$ on such a surface. This implies that either the pressure gradient vanishes at the surface, or by assertion \textit{(i)} the surface of the fluid is an equipotential surface.
\end{proof}

This theorem is the rigidly rotating Newtonian version of a theorem for relativistic nonrotating charged fluids found in Ref. \cite{Guilfoyle:1999yb}.

Other properties of rigidly rotating charged pressure fluids of Weyl-type in the Newton-Maxwell theory may be drawn by a deeper analysis of the field equations.  By assuming the Weyl ansatz $V=V(\Phi)$, from Eqs.~(\ref{eq:pot4}) and (\ref{eq:pot5}) and using the equilibrium condition, it follows an equation for $\Phi$ in terms of derivatives of $V$ and $p$, namely,
\begin{align}
    \big(V'^2 -1\big) \nabla^2 \Phi & + V'V'' \big(\nabla_i \Phi\big)^2 + 4 \pi p' \nonumber \\ & + \Omega\big( \nabla^2 \psi + 2 \Omega V'\big) = 0. \label{eq:field4}
\end{align}
Hence, given $V(\Phi)$ and $p(\Phi)$ this equation can be solved for $\Phi$ and the other quantities may be immediately calculated. For $\Omega = 0$, Eq.~(\ref{eq:field4}) reduces to the same equation obtained by Lemos and Zanchin for the static case (see Eq.~(26) in \cite{Lemos:2009}). 

Now, turning once again to Eqs.~(\ref{eq:pot4}) and (\ref{eq:pot5}) and following \cite{Lemos:2009}, we obtain 
\begin{align}
    \nabla^2\big(V + \epsilon \beta \Phi - \gamma\big) = &\, 4 \pi\Big(\rho_m - \frac{\Omega^2}{2\pi} \nonumber \\\, &- \epsilon \beta \rho_e\big(1 -  \Omega^2 r^2\big) + \epsilon \beta\rho_{GJ}\Big), \label{eq:field5}
\end{align}
where $\beta$ and $ \gamma$ are arbitrary constants, and $\epsilon = \pm 1$.
With this equation in hand, we can state a new theorem that is the rigidly rotating version of a result found by Lemos and Zanchin \cite{Lemos:2009}.

\begin{teo}[\textit{rigidly rotating and axisymmetric version of Lemos and Zanchin 2009}] \label{teo:7}
 (i) If a rigidly rotating charged pressure fluid is in equilibrium in the Newton-Maxwell theory, and obeys the relation  $V + \epsilon \beta \Phi - \gamma = 0$, where $\beta$ and $\gamma$ are constants, then it follows the constraint
 \begin{align}
     \rho_m - \frac{\Omega^2}{2\pi} - \epsilon \beta \rho_e(1 -  \Omega^2 r^2) + \epsilon \beta\rho_{GJ}= 0. \label{eq:equicon}
 \end{align}
 (ii) If a rigidly rotating charged pressure fluid is in equilibrium in the Newton-Maxwell theory, obeys the relation $\rho_m - \frac{\Omega^2}{2\pi} - \epsilon \beta \rho_e\left(1 -  \Omega^2 r^2\right) + \epsilon \beta\rho_{GJ}= 0$, and there is a closed surface with no singularities, holes, or alien matter inside, and where $V+ \epsilon \beta \Phi - \gamma $ vanishes, then it follows that
 \begin{align}
     V = - \epsilon \beta \Phi + \gamma \label{eq:linearweyl}
 \end{align}
 everywhere.
\end{teo}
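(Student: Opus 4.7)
The plan is to read off both assertions directly from Eq.~\eqref{eq:field5}, which is the master equation derived just before the theorem statement; the nontrivial content is to use basic properties of harmonic functions in the converse direction.

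For assertion \textit{(i)}, I would simply substitute the hypothesis $V+\epsilon\beta\Phi-\gamma=0$ into Eq.~\eqref{eq:field5}. Since this quantity vanishes identically in the region, so does its Laplacian, and the right-hand side of Eq.~\eqref{eq:field5} must vanish. Factoring out $4\pi$ yields exactly the constraint~\eqref{eq:equicon}. This is essentially a one-line manipulation and requires no further structure.

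For assertion \textit{(ii)}, I would invert the logic. The hypothesis \eqref{eq:equicon} makes the right-hand side of Eq.~\eqref{eq:field5} vanish, so the function $\chi \equiv V + \epsilon\beta\Phi - \gamma$ satisfies Laplace's equation $\nabla^{2}\chi = 0$ throughout the region of interest. By assumption, there is a closed surface inside which there are no singularities, holes, or alien matter, and on which $\chi = 0$. Thus $\chi$ is harmonic in the bounded domain enclosed by the surface and vanishes on its boundary. Invoking either the maximum/minimum principle for harmonic functions or, equivalently, uniqueness of the solution to the Dirichlet problem for Laplace's equation with zero boundary data, one concludes $\chi\equiv 0$ in the interior, which is precisely Eq.~\eqref{eq:linearweyl}.

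The main obstacle is not computational but interpretive: one must justify that $\chi$ is well-defined and regular throughout the enclosed volume so that the uniqueness/maximum-principle argument applies. This is exactly what the hypothesis ``no singularities, holes, or alien matter inside'' guarantees, in direct analogy with the assumption used in Theorem~\ref{teo:3} and in the proofs by De--Raychaudhuri~\cite{deray68} and Bonnor~\cite{Bonnor:1980nw}. Once that regularity is invoked, the argument is standard and mirrors the static case of Lemos--Zanchin~\cite{Lemos:2009}, with the rotation contributions neatly packaged into the effective source $\rho_m-\Omega^2/(2\pi) - \epsilon\beta\rho_e(1-\Omega^2 r^2)+\epsilon\beta\rho_{GJ}$ on the right-hand side of Eq.~\eqref{eq:field5}.
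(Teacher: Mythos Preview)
Your proof is correct and follows the same overall strategy as the paper: both directions are read off directly from Eq.~\eqref{eq:field5}, with assertion \textit{(i)} being immediate and assertion \textit{(ii)} reducing to showing that a harmonic function vanishing on a closed bounding surface vanishes in the interior.

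The only difference is in the final step of \textit{(ii)}. You invoke the maximum principle\,/\,Dirichlet uniqueness for Laplace's equation, whereas the paper uses the energy-integral argument in the style of Theorem~\ref{teo:3}: writing $F\equiv V+\epsilon\beta\Phi-\gamma$, one integrates $\nabla_i(F\nabla^iF)$ over the enclosed volume, applies Gauss' theorem, and uses $F=0$ on the boundary to conclude $\int(\nabla_iF)^2\,d\mathbb{V}=0$, hence $F$ is constant (and thus zero). Both arguments are standard and equivalent; the paper's choice keeps the exposition parallel to the earlier theorems, while your maximum-principle version is slightly quicker and avoids the integral identity.
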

\begin{proof}
The proof of assertion \textit{(i)} is straightforward. By hypothesis, if $V + \epsilon \beta \Phi - \gamma = 0$ everywhere in the fluid, it follows that the right-hand side of Eq.~(\ref{eq:field5}) must be zero, given Eq.~\eqref{eq:equicon} as a result. The proof of assertion \textit{(ii)} is similar to the proof of Theorem \ref{teo:3}\textit{(ii)}.  From Eq.~(\ref{eq:field5}), the assumption $\rho_m - \frac{\Omega^2}{2\pi} - \epsilon \beta \rho_e\left(1 -  \Omega^2 r^2\right) + \epsilon \beta\rho_{GJ}= 0$ implies in $\nabla^2\big(V + \epsilon \beta \Phi - \gamma\big) = 0$, i.e., $F \equiv V + \epsilon \beta \Phi - \gamma$ is a harmonic function. Integrating the divergence $\nabla_i \left(F \nabla^i F\right)$ over a finite volume in the Euclidean space $\mathbb{R}^3$, it gives
\begin{align}
    \int_{\mathbb{V}_{\!S}}\!\! \nabla_i \left(F \nabla^i F\right) d \mathbb{V} = \int_{\mathbb{V}_{\!S}} \left(\nabla_i F\right)^2 d\mathbb{V} = \int_{\!S} \big(F \nabla_i F\big) n^i dS, 
\end{align}
with $S$ being the boundary of the finite volume $\mathbb{V}_{\!S}$, $n^i$ being the unit vector normal to $S$, and the Gauss theorem has been used. Now, if there is a closed surface where $F = 0$, then by identifying such a surface with $S$ one finds $\int_{\mathbb{V}_{\!S}} \big(\nabla_i F\big)^2 d\mathbb{V}  = 0$, which is satisfied only if $\nabla_i F=0$ in the whole volume $\mathbb{V}_{\!S}$. Therefore, $F = \text{constant}$ throughout the volume $\mathbb{V}_S$ of the fluid.
\end{proof}

Analogously to what has been noted by Lemos and Zanchin for the nonrotating case~\cite{Lemos:2009}, Eq.~(\ref{eq:equicon}) can be seen as the most general condition relating the densities of a rigidly rotating charged fluid with pressure in the Newton-Maxwell theory obeying the Weyl-type relation given by Eq.~(\ref{eq:linearweyl}). In the nonrotating case we have $\Omega = 0$ and it follows $ 
    \nabla^2\big(V + \epsilon \beta \phi - \gamma\big) = 4 \pi \left(\rho_m - \epsilon \beta \rho_e\right)$,
thus recovering the result obtained by Lemos and Zanchin \cite{Lemos:2009}.

Lastly, we notice that, in the rigidly rotating case, by assuming assertion \textit{(i)} of the Theorem~\ref{teo:7} and from Eq.~(\ref{eq:field4}), the following relation between $p'(\Phi), \rho_m$ and $\rho_e$ must be satisfied
\begin{align}
    p' + \rho_e = \epsilon \beta \rho_m.
\end{align}
A similar relation was obtained by Lemos and Zanchin \cite{Lemos:2009} for the static case. Besides, for rigidly rotating charged fluids of Weyl-type with nonzero pressure, $\rho_m$ and $\rho_e$ are directly proportional only when $p(\Phi)$ is a constant. In such a case, one has $\rho_e = \epsilon \beta \rho_m$, which recovers the results discussed at the end of Sec.~\ref{sec:2b} for a rigidly rotating fluid with zero pressure.

\section{Rotating Einstein-Maxwell charged fluids with pressure}
\label{sec:3}

\subsection{The model and the basic equations}
\label{sec:3a}

With the aim of exploring some general properties of rotating charged fluids in the Einstein-Maxwell theory, in this section we write the basic equations governing such a kind of systems.  For the sake of comparison, we follow closely the works by Islam \cite{Islam1978} and Bonnor~\cite{Bonnor1980b}, with the key difference that we also include pressure in the rotating charged fluid. The relevant equations are obtained from the Einstein-Maxwell system of equations, 
\begin{align}
    & G_{\mu \nu} =  8\pi\left(E_{\mu \nu} +M_{\mu\nu}\right),\label{eq:Einst}\\
    &\nabla_{\nu} F^{\mu \nu} = 4 \pi J^{\mu}, \label{eq:Maxw}
\end{align}
where Greek indices range from $0$ to $3$. $G_{\mu \nu}= R_{\mu \nu}- \frac{1}{2}g_{\mu \nu} R$ is the Einstein tensor,  with $R_{\mu\nu}$ being the Ricci tensor, $g_{\mu \nu}$ being the metric tensor, and $R$ being the Ricci scalar. $T_{\mu \nu}$ stands for the energy-momentum tensor. $F^{\mu\nu}$ is the Faraday-Maxwell strength tensor which may be written in terms of a gauge vector potential $A_\mu$ as $F_{\mu \nu} =  \nabla_{\nu} A_{\mu} - \nabla_{\mu} A_{\nu}$, while $\nabla_{\mu}$ is the covariant derivative compatible with the four-dimensional Lorentzian metric, and $J^\mu$ is the current density. 

The energy-momentum tensor is composed by two parts, $T_{\mu \nu} = E_{\mu \nu} + M_{\mu \nu}$, the first part coming from the electromagnetic field and the second one from the matter itself. The electromagnetic part $E_{\mu \nu} $ is given by
\begin{align} 
    E_{\mu \nu} = \frac{1}{4\pi}\left( F_{\mu \alpha} F_{\nu}^{\ \alpha} - \frac{1}{4}g_{\mu \nu} F_{\alpha \beta}F^{\alpha \beta}\right).
\end{align}
Meanwhile, $M_{\mu \nu}$ is the fluid energy-momentum tensor given by
\begin{align}
    M_{\mu \nu} = \big(\rho_m + p\big)u_{\mu}u_{\nu} + p g_{\mu \nu},
\end{align}
where $\rho_m$ is the matter energy density,  $p$ is the fluid pressure,  and $u^{\mu}$ is the fluid four-velocity satisfying $u^{\mu}u_{\mu} = -1$. 

Additionally, we consider just a convective  current density of the form 
\begin{equation} 
J^{\mu} = \rho_e u^{\mu}, \label{eq:current}
\end{equation}
with $\rho_e$ standing for the electric charge density.

We consider stationary and axisymmetric spacetimes such that, given a coordinate system of cylindrical type $(x^0, x^1, x^2, x^3) = (t, r, z, \varphi)$, the metric can be written in the form \eqref{eq:metric3a}, which we rewrite here for convenience, 
\begin{align}
    ds^2 = -f\, dt^2 + 2k\, dt d\varphi + l\, d \varphi^2 + e^{\mu}\left(dr^2 + dz^2\right) \label{eq:metric3}
\end{align}
with the metric coefficients being functions of $r$ and $z$ only, i.e., $f= f(r,z)$, $k= k(r,z)$, etc. This form of the metric instead of \eqref{eq:metric2} is convenient for the present analysis, in particular for comparison of our results with previous works in the literature.  In terms of the metric functions appearing in \eqref{eq:metric2}, we have the relations  $W^2=f$,
$\omega_i = -f^{-1} k \delta_i^\varphi= \left(0,\,0,\,\omega_\varphi\right)$, $h_{ij}= f\,{\rm diag}\big(e^\mu f, \,e^{\mu}f,\,  k^2+ lf \big)$, where $\delta_i^j$ is the Kronecker delta tensor. 

With the metric in the form \eqref{eq:metric3}, the gauge potential and the velocity of the rotating charged fluid may be written, respectively, as
\begin{align}
A_{\mu} =&\, \phi \delta_{\mu}^{\ t} + \psi \delta_{\mu}^{\ \varphi}, \label{eq:gaugepot}\\
    u^{\mu} =&\, {\cal F}^{-1}\left(\delta^{\mu}_{\ t} + \Omega \delta^{\mu}_{\ \varphi}\right),\label{eq:4velocity}
\end{align}
where 
\begin{equation} 
\mathcal{F}^2  \equiv  f - 2k\Omega - l \Omega^2, \label{eq:calF}
\end{equation}
with $\Omega \equiv u^{\varphi}/ u^t = d \varphi/ dt$ being the angular velocity of the fluid, and $\phi=\phi(r,z)$ and $\psi=\psi(r,z)$ are two functions representing the electric and the magnetic potentials, respectively. The function $\mathcal{F}$ is also called the redshift factor \cite{thorne}.

The important geometric quantities for the present analysis are the three metric potentials $g_{tt} = -f$, $g_{t\varphi} = k $, and $g_{\varphi \varphi} = l$.
From the Einstein field equations \eqref{eq:Einst} we get three nontrivial independent equations for these three metric potentials,
\begin{align}
&  \partial_j\left(D^{-1} \partial_j f\right) + D^{-3}f \left(\partial_j f \partial_j l + \left(\partial_j k\right)^2\right) \nonumber \\
    &\hskip 1cm  = 16 \pi e^{\mu} D^{-1}\left(M_{tt} + E_{tt} + \frac{1}{2} f M_\mu^\mu\right), \label{eq:Einst1} \\
    &  \partial_j\left(D^{-1} \partial_j k\right) + D^{-3}k \left(\partial_j f \partial_j l + \left(\partial_j k\right)^2\right)  \nonumber \\ & \hskip 1cm =  -16 \pi e^{\mu} D^{-1}\left(M_{t\varphi} + E_{t\varphi} - \frac{1}{2} k M_\mu^\mu\right), \label{eq:Einst2} \\ 
    & \partial_j(D^{-1} \partial_j l) + D^{-3}l \left(\partial_j f \partial_j l + \left(\partial_j k\right)^2\right)\nonumber \\
    &\hskip 1cm = -16 \pi e^{\mu} D^{-1}\left(M_{\varphi \varphi} + E_{\varphi \varphi} - \frac{1}{2} l M_\mu^\mu\right), \label{eq:Einst3}
\end{align}
where we defined the quantity $D$ as
\begin{equation}
D^2 \equiv fl + k^2, \label{eq:D2}
\end{equation}
with the Roman indexes $i, \, j,\, $ etc., ranging from $1$ to $2$, i.e., $x^1 = r$, $x^2 = z$.
Besides, the repeated covariant  indices in the $r-z$ plane are to be summed over as well. 

The nontrivial components of the tensors $M_{\mu \nu}$ and $E_{\mu \nu}$, as well as the trace of the matter energy-momentum tensor $M_\mu^\mu$,  appearing on the right-hand side of Eqs.~\eqref{eq:Einst1}--\eqref{eq:Einst3} are given in Appendix \ref{sec:app1}, see Eqs.~\eqref{eq:Mtt}--\eqref{eq:Ett}.
Some other important relations regarding the electromagnetic fields and the related energy density are also presented in Appendix \ref{sec:app1}. 

The important electromagnetic fields for the present analysis are the two scalar potentials, the electric potential $\phi$ and the magnetic potential $\psi$. 
In turn, from the Maxwell field equations \eqref{eq:Maxw} we get two independent equations for $\phi$ and $\psi$, namely,
\begin{align}
 & k \nabla^{\dagger 2}_{-} \psi - l \nabla^{\dagger 2}_{-} \phi + \partial_j k \partial_j \psi - \partial_j l \partial_j \phi = 4\pi r^2 e^{\mu} \rho_e \mathcal{F}^{-1} ,  \label{eq:Maxa}\\
& f \nabla^{\dagger 2}_{-} \psi + k \nabla^{\dagger 2}_{-} \phi + \partial_j k \partial_j \phi + \partial_j f \partial_j \psi = 4\pi r^2 e^{\mu} \rho_e \mathcal{F}^{-1} \Omega, \label{eq:Maxb} 
\end{align}
where the operators $\nabla^{\dagger 2}$ and $ \nabla^{\dagger 2}_-$ are defined, respectively, by 
\begin{align}
    &\nabla^{\dagger 2} = \partial_r^2 + \frac{\partial_r D}{D} \partial_r + \partial_z^2 + \frac{\partial_z D}{D} \partial_z, \label{eq:nabla1} \\
    & \nabla^{\dagger 2}_{-} = \partial_r^2 - \frac{\partial_r D}{D} \partial_r + \partial_z^2 - \frac{\partial_z D}{D} \partial_z. \label{eq:nabla*}
    \end{align}
 
The last important equations for the present analysis come from the conservation of the energy-momentum tensor, i.e., $\nabla_{\nu} T^{\mu \nu} = 0$, which yields two equations
\begin{align}
    & \big(\rho_m + p\big) \partial_j \mathcal{F} + \rho_e \partial_j \Phi \nonumber  \\
    & +  \left[\big(\rho_m + p \big)\mathcal{F}^{-1}K -  \rho_e \psi \right]\partial_j \Omega + \mathcal{F} \partial_j p = 0,
    \label{eq:relequi}
\end{align}
where the index $j$ ranges from $1 $ to $2$, $\mathcal{F}$ is the redshift function defined in Eq.~\eqref{eq:calF}, and where we defined the effective electromagnetic potential $\Phi$ and the relativistic centrifugal potential $K$ by
\begin{align}
    & \Phi = \phi+ \Omega\, \psi, \label{eq:Phi}\\
    & K = k +\Omega\, l, \label{eq:K}
\end{align}
respectively. 
Equation~(\ref{eq:relequi}) generalizes the equilibrium equation obtained in \cite{Islam1978, Bonnor1980b} here by taking into account the fluid pressure.

By comparing Eqs.(\ref{eq:equi1}) and (\ref{eq:relequi}), one can see that $\mathcal{F}$ plays the same role of the effective Newtonian potential $V$, given in Eq.~\eqref{eq:eff-gravpot}. In fact, in \cite{Bonnor1980b} Bonnor has shown that, in the weak-field limit, $\mathcal{F}$ reduces to $V$. For the sake of comparison, we write $f=1+2U$ and expand ${\cal F}$ up to the second order in the quantities $U$, $k$, and $\Omega$, that are considered small when compared to unity. The result is
\begin{equation}
    {\cal F}= 1 + U -  \Omega k - \frac12 \Omega^2 l\equiv 1+ V_{rel},  \label{eq:Fapprox}
\end{equation}
where terms like $\Omega^2 k^2$ have been neglected. Therefore, when compared to the corresponding potential in the Newton-Maxwell theory given by $ V= U-\dfrac12 \Omega^2 r^2$, the relativistic gravitational potential \begin{equation} \label{eq:Vrel}
V_{rel}= \mathcal{F}-1= U - \frac12\Omega \left( 2k +\Omega l\right)
\end{equation} 
has the contribution of the shift function $k$. This difference is a consequence of the fact that the shift function has no analogue counterpart in the Newtonian theory. 

Notice also that the effective relativistic electromagnetic potential $\Phi=\phi+\Omega\psi$ in the Einstein-Maxwell theory, given in Eq.~\eqref{eq:Phi}, has the same form of the corresponding effective electromagnetic potential in the Newton-Maxwell theory, given in Eq.~\eqref{eq:eff-electpot}.  

It is worth also mentioning that the quantity $\big(\rho_m+p\big){\cal F}^{-1}K = \big(\rho_m+p\big)u_{\varphi} $ in Eq.~\eqref{eq:relequi} can be interpreted as the angular momentum density of the fluid \cite{abra}. Analogously to the rotating charged fluid in the Newton-Maxwell theory, the quantity $\rho_e\psi$ is interpreted as the angular momentum density of the electromagnetic field. 

Now the relevant equations for the relativistic problem are  the Einstein equations~(\ref{eq:Einst1}), (\ref{eq:Einst2}), and (\ref{eq:Einst3}), the Maxwell equations \eqref{eq:Maxa} and \eqref{eq:Maxb},  and the equilibrium equation for the system given by Eq.~(\ref{eq:relequi}).

\subsection{Zero-pressure rotating charged fluids in the Einstein-Maxwell theory}
\label{sec:3b}

\subsubsection{Differentially rotating charged dust fluids in the Einstein-Maxwell theory}

Let us begin the analysis by considering rotating charged dust fluids with differential rotation, for which $p = 0$, in the Einstein-Maxwell theory. Some general properties of this kind of systems have already been investigated in previous works, see e.g. \cite{Raycha,RayDe}. Here we investigate further such systems and present new results.

In the case of zero pressure, and after contracting with $d x^j$, the equilibrium Eq.~(\ref{eq:relequi}) reduces to
\begin{align}
     \rho_m d \mathcal{F} + \rho_e d \Phi + \Big(\rho_m \mathcal{F}^{-1}K -  \rho_e \psi \Big)d \Omega = 0. \label{eq:relequi2}
\end{align}
Equation \eqref{eq:relequi2} implies that, for  $\rho_m \mathcal{F}^{-1}K -  \rho_e \psi \neq 0$, the metric potential $\mathcal{F}$, the electromagnetic potential $\Phi$, and the angular velocity $\Omega$ are functionally related. Similarly to the rotating charged fluid in the Newton-Maxwell theory, we shall assume that the term $\rho_m \mathcal{F}^{-1}K -  \rho_e \psi$ vanishes only in the static limit. With such an assumption, we can state a new theorem that is the general relativistic version of Theorem \ref{teo:2}.

\begin{teo}[\textit{new}] \label{teo:rel2}
For any distribution of a differentially rotating charged dust in the Einstein-Maxwell theory obeying the constraint $\rho_m \mathcal{F}^{-1}K -  \rho_e \psi \neq 0$, if any two out of the three surfaces of constant $\mathcal{F}$, $\Phi$, and $\Omega$ coincide, then the third surface also coincides.
\end{teo}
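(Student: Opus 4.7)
The plan is to mirror the proof of Theorem~\ref{teo:2}, simply replacing the Newtonian equilibrium identity by its general-relativistic counterpart \eqref{eq:relequi2},
\[
\rho_m\, d\mathcal{F} + \rho_e\, d\Phi + \left(\rho_m \mathcal{F}^{-1}K - \rho_e \psi\right) d\Omega = 0.
\]
By stationarity and axisymmetry, the scalars $\mathcal{F}$, $\Phi$, and $\Omega$ depend only on the meridional coordinates $(r,z)$, so each condition of constancy defines a two-dimensional level surface on which the corresponding differential $d\mathcal{F}$, $d\Phi$, or $d\Omega$ vanishes identically.

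The core step is then a pointwise linear-dependence argument applied to the displayed identity. By the standing hypothesis of the theorem, the coefficient $\rho_m \mathcal{F}^{-1}K - \rho_e \psi$ multiplying $d\Omega$ is nonzero, and inside the charged dust support the coefficients $\rho_m$ and $\rho_e$ multiplying $d\mathcal{F}$ and $d\Phi$ are also nonzero. Under these conditions, whenever any two of the three differentials vanish on a common surface, the equilibrium equation forces the third one to vanish there as well, and hence the three level surfaces coincide.

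I do not anticipate any real technical obstacle, since the argument is purely algebraic once equation \eqref{eq:relequi2} is in hand. The only subtlety, exactly as in Theorem~\ref{teo:2}, is to interpret the statement inside the matter support where $\rho_m \neq 0$ and $\rho_e \neq 0$, so that all three coefficients in the identity are genuinely nonzero and the resulting linear relation among $d\mathcal{F}$, $d\Phi$, and $d\Omega$ is truly three-term. This is consistent with the physical setting, in which the equipotential and angular-velocity level surfaces are being compared within the rotating charged dust distribution itself.
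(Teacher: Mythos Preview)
Your proposal is correct and follows essentially the same approach as the paper's own proof, which likewise invokes Eq.~\eqref{eq:relequi2} on the constant-$t$ hypersurfaces and observes that the vanishing of any two of $d\mathcal{F}$, $d\Phi$, $d\Omega$ forces the third to vanish. Your explicit remark that the argument lives inside the matter support where $\rho_m\neq 0$ and $\rho_e\neq 0$ is a slight sharpening of what the paper leaves implicit.
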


\begin{proof}
The proof is similar to the corresponding theorem stated for the Newton-Maxwell theory. Since $\mathcal{F}$, $\Phi$, and $\Omega$ are scalar fields in a four-dimensional spacetime and since they do not depend upon time, we can consider spacelike hypersurfaces $\Sigma_t$ defined by $t = \text{constant}$ such that the conditions of constant $\mathcal{F}$, $\Phi$, and $\Omega$ define two-dimensional spatial surfaces on each $\Sigma_t$. Now, from Eq.~(\ref{eq:relequi2}) if any two of the differentials $d \mathcal{F}$, $d\Phi$, and $d\Omega$ vanish, then all of them vanish, implying that the three surfaces coincide.
\end{proof}

The problem of finding exact solutions for differentially rotating charged dust fluids in the Einstein-Maxwell theory was investigated by Islam \cite{Islam1978, Islam1979, Islam1983}, as well  by Van der Bergh, Wils and Islam \cite{Bergh, Wils, Islam1984}. In the case of vanishing Lorentz force, we have $\partial_j \Phi - \psi \partial_j \Omega = 0$, which implies that $\Phi$ is a function of $\Omega$, $\Phi =  \Phi(\Omega)$ with $ d \Phi/ d \Omega=\psi$. Therewith, Eq.~(\ref{eq:relequi2}) implies that $\mathcal{F}$ and $\Omega$ are functionally related with $K = - \frac{1}{2} d F / d \Omega$, where $F = \mathcal{F}^2$. For more details on this particular case see \cite{Wils, Islam1984}. Here we are not interested in finding explicit solutions for these kind of systems and so we move on to study other general properties they present. However, it is worth pointing out that the solutions obtained in \cite{Wils, Islam1984} satisfy Theorem~\ref{teo:rel2}.

\subsubsection{Differentially rotating charged dust fluids of Weyl-type in the Einstein-Maxwell theory}

Here we consider the Weyl ansatz between the redshift function $\mathcal{F}$ and the effective electromagnetic potential $\Phi$, i.e., a assume these two potentials are functionally related, $ \mathcal{F} = \mathcal{F}(\Phi)$, and explore some of its consequences for a differentially rotating charged dust.  The equilibrium equation now reads 
\begin{align}
    \big(\rho_m \mathcal{F}' + \rho_e\big) d \Phi + \Big(\rho_m \mathcal{F}^{-1}K -  \rho_e \psi\Big) d \Omega = 0 ,\label{eq:relequi3}
\end{align}
where the prime stands for the derivative with respect to $\Phi$. Moreover, we assume that $\rho_m \mathcal{F}^{-1}K -  \rho_e \psi  \neq 0$ and also $\rho_m \mathcal{F}' + \rho_e \neq 0$. Therefore, the angular velocity of the fluid is also a function of $\Phi$, i.e., $\Omega = \Omega(\Phi)$. This allow us to state the following theorem, which is the general relativistic version of Theorem \ref{teo:2a}.

\begin{teo}[\textit{new}] \label{teo:rel2a} If a differentially rotating charged dust fluid is of Weyl-type and is in equilibrium in the Einstein-Maxwell theory, then the equipotentials surfaces are also surfaces of constant angular velocity, and vice versa.
\end{teo}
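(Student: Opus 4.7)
The plan is to mirror, in the relativistic setting, the short proof of Theorem~\ref{teo:2a} by exploiting the reduced equilibrium equation \eqref{eq:relequi3} together with Theorem~\ref{teo:rel2}. First I would make explicit the two non-degeneracy conditions assumed throughout this part of the section, namely $\rho_m \mathcal{F}' + \rho_e \neq 0$ and $\rho_m \mathcal{F}^{-1} K - \rho_e \psi \neq 0$, since these are precisely what guarantee that neither coefficient in \eqref{eq:relequi3} collapses trivially. Under the Weyl ansatz $\mathcal{F} = \mathcal{F}(\Phi)$, one has $d\mathcal{F} = \mathcal{F}'(\Phi)\, d\Phi$, so the level surfaces of $\mathcal{F}$ and $\Phi$ automatically coincide (the mapping between the two foliations is one-to-one wherever $\mathcal{F}'\neq 0$, and if $\mathcal{F}'= 0$ on some surface then the nondegeneracy assumption forces $\rho_e=0$ there, degenerating back to the static case which we exclude).

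Having identified the $\mathcal{F}$- and $\Phi$-foliations, I would invoke Theorem~\ref{teo:rel2}: since two of the three surfaces of constant $\mathcal{F}$, $\Phi$, $\Omega$ already coincide, the third must coincide as well, yielding that surfaces of constant $\Phi$ (equivalently, of constant $\mathcal{F}$) are also surfaces of constant $\Omega$. This handles the forward direction. For the converse, start on a surface with $d\Omega = 0$; the equilibrium equation~\eqref{eq:relequi3} collapses to $(\rho_m \mathcal{F}' + \rho_e)\, d\Phi = 0$, and the hypothesis $\rho_m \mathcal{F}' + \rho_e \neq 0$ forces $d\Phi = 0$. The Weyl relation $d\mathcal{F} = \mathcal{F}'\, d\Phi$ then yields $d\mathcal{F}=0$, so surfaces of constant $\Omega$ are surfaces of constant $\Phi$ and $\mathcal{F}$, completing the equivalence.

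Conceptually, the main obstacle here is not the algebra (which is essentially immediate once \eqref{eq:relequi3} is in place) but rather clearly articulating the role of the non-degeneracy assumptions. In particular, the case $\rho_m \mathcal{F}' + \rho_e = 0$ is a genuine alternative branch: exactly as noted after Theorem~\ref{teo:2a} in the Newtonian treatment, this vanishing forces $d\Omega = 0$ throughout the fluid and therefore corresponds to rigid rotation rather than to a failure of the result. I would therefore flag this alternative explicitly before the proof so that the reader understands that the statement presumes the truly differential case. The proof itself can then be written in just a few lines, using only \eqref{eq:relequi3} and Theorem~\ref{teo:rel2}, with no further computation required.
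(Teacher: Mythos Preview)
Your proposal is correct and follows essentially the same approach as the paper: use the Weyl ansatz to identify the $\mathcal{F}$- and $\Phi$-level surfaces, invoke Theorem~\ref{teo:rel2} to force coincidence with the $\Omega$-surfaces, and for the converse use \eqref{eq:relequi3} with the nondegeneracy assumption $\rho_m\mathcal{F}'+\rho_e\neq 0$. Your explicit discussion of the nondegeneracy hypotheses and the rigid-rotation branch is a useful clarification, but the core argument is identical to the paper's.
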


\begin{proof}
By using the Weyl ansatz $\mathcal{F} = \mathcal{F}(\Phi)$ and Eq.~(\ref{eq:relequi3}) we verify that the surfaces of constant $\mathcal{F}$ and $\Phi$ coincide, and then Theorem \ref{teo:rel2} implies that the surface of constant $\Omega$ also coincides. On the other hand, a surface of constant $\Omega$ implies $d \Phi = 0$ and thus, by the Weyl ansatz, it also implies in $d \mathcal{F} = 0$ completing the proof. 
\end{proof}

At last we consider the particular case where $\rho_m \mathcal{F}' + \rho_e = 0$.  
Therefore, since  for $\rho_m \mathcal{F}^{-1}K -  \rho_e \psi  \neq 0$, Eq.~\eqref{eq:relequi3} implies in $d \Omega = 0$ and, in this case, the rotating charged dust fluids are necessarily in rigidly rotation.

\subsubsection{Rigidly rotating charged dust fluids in the Einstein-Maxwell theory}

Given that the fluid is in rigid rotation $d\Omega=0$, besides being a dust fluid, Eq.~(\ref{eq:relequi2}) reduces to
\begin{align}
    \rho_m d \mathcal{F} + \rho_e d \Phi  =0\label{eq:relequirig}.
\end{align}
From this relation it is possible to state the general relativistic version of our Newtonian Theorem \ref{teo:p0rigidN1}.

\begin{teo}[\textit{rigidly rotating and axisymmetric version of De and Raychaudhuri 1968}]\label{teo:rel1}
For any rigidly rotating charged dust distribution in the Einstein-Maxwell theory, the surfaces of constant redshift $\mathcal{F}$ coincide with the surfaces of constant $\Phi$, and $\mathcal{F}$ is functionally related to $\Phi$, i.e., $\mathcal{F} = \mathcal{F}(\Phi)$.
\end{teo}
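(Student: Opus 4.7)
The plan is to mirror the structure of the proof of Theorem~\ref{teo:p0rigidN1}, adapting it to the stationary axisymmetric setting along the lines already used in Theorem~\ref{teo:rel2}. The starting point is the rigidly rotating equilibrium equation \eqref{eq:relequirig}, namely $\rho_m d\mathcal{F} + \rho_e d\Phi = 0$, which follows from the full equation~\eqref{eq:relequi2} by imposing $d\Omega = 0$ and $p=0$. Everything then reduces to an argument about level sets of two scalar functions related by a single one-form identity.

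First I would note that, since the spacetime is stationary and axisymmetric, the scalars $\mathcal{F}$ and $\Phi$ depend only on $(r,z)$. Restricting to a hypersurface $\Sigma_t$ of constant $t$, the conditions $\mathcal{F}=\text{const}$ and $\Phi=\text{const}$ define two-dimensional level surfaces in $\Sigma_t$ (in fact cylindrically symmetric two-surfaces in the $(r,z,\varphi)$ space). Next, inside the fluid we have $\rho_m \neq 0$ and $\rho_e \neq 0$, so Eq.~\eqref{eq:relequirig} shows that if either $d\mathcal{F}$ or $d\Phi$ vanishes at a point, then so does the other. Consequently the level surfaces of $\mathcal{F}$ coincide with those of $\Phi$ throughout the support of the dust.

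Finally, to extract the functional relation $\mathcal{F} = \mathcal{F}(\Phi)$, I would rewrite \eqref{eq:relequirig} along any curve with $d\Phi \neq 0$ as
\begin{equation}
    \frac{d\mathcal{F}}{d\Phi} = -\frac{\rho_e}{\rho_m},
\end{equation}
which combined with the fact that $\mathcal{F}$ is constant on each level surface of $\Phi$ implies that $\mathcal{F}$ depends on the spatial coordinates only through $\Phi$. I do not expect any serious technical obstacle: the argument is a direct transcription of the one used for Theorem~\ref{teo:p0rigidN1}, with $V$ replaced by $\mathcal{F}$. The only conceptual point worth emphasising --- and this is really the content of the theorem rather than a difficulty --- is that the Weyl ansatz $\mathcal{F} = \mathcal{F}(\Phi)$ is \emph{derived} from the equilibrium condition plus rigid rotation rather than postulated, thereby justifying Bonnor's a priori assumption in \cite{Bonnor1980b} and yielding the rigidly rotating extension of the static result by De and Raychaudhuri \cite{deray68}.
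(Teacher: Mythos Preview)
Your proposal is correct and follows essentially the same approach as the paper: start from the rigid-rotation equilibrium equation $\rho_m\, d\mathcal{F} + \rho_e\, d\Phi = 0$, argue that the vanishing of one differential forces the vanishing of the other so the level surfaces coincide, and then read off $d\mathcal{F}/d\Phi = -\rho_e/\rho_m$ to obtain the functional relation $\mathcal{F} = \mathcal{F}(\Phi)$. The paper's proof is slightly terser but identical in substance.
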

\begin{proof}
The condition of constant $\mathcal{F}$ or constant $\Phi$ defines a two-dimensional surface in the spacetime. 
Equation~(\ref{eq:relequirig}) implies that if any one of the differentials $d \mathcal{F}$ or $d\Phi$ is zero, then both of them vanish and the surfaces of constant $\mathcal{F}$  and $\Phi$ coincide.
Besides, the equilibrium equation  \eqref{eq:relequirig} gives $d \mathcal{F}/d\Phi = - \rho_e/ \rho_m$ what implies that both scalar fields are functionally related, namely, $\mathcal{F} = \mathcal{F}(\Phi)$.
\end{proof}

Theorem \ref{teo:rel1} is the rigidly rotating version of a theorem due to De and Raychaudhuri \cite{deray68}, that has been considered also in Ref.~\cite{Lemos:2009}. Similarly to the nonrotating case, this theorem implies that, for rigidly rotating charged dust fluid distributions, the relation $\mathcal{F} = \mathcal{F}(\Phi)$ is a consequence of the equilibrium condition. 
Hence, the relation $\mathcal{F} = \mathcal{F}(\Phi)$ may be interpreted as a generalized version of the Weyl ansatz for general relativistic rigidly rotating systems. With this nomenclature, Theorem \ref{teo:rel1} may be rephrased as {\it rigidly rotating charged dust fluid are necessarily Weyl-type systems}. 

The ansatz $\mathcal{F} = \mathcal{F}(\Phi)$ was made a priori by Bonnor \cite{Bonnor1980b}. 
More specifically, in order to obtain a solution of the system of equations for a rigidly rotating charged fluid in the Einstein-Maxwell theory, Bonnor considered the linear relation  $\mathcal{F} = - \epsilon \sqrt{\alpha}\, \Phi$. Furthermore, Bonnor assumed also that the ratio between the metric potentials $K$ and $\mathcal{F}$ would be a function of the magnetic potential $\psi$ alone, i.e., $K/\mathcal{F} = g(\psi)$ and, to further simplifying the system of equations, he made the simplest choice $g(\psi) = \epsilon \sqrt{\alpha}\, \psi$, and then took $\alpha = 4$. Later on, Raychaudhuri \cite{Raycha} realized that this linear relation assumed by Bonnor is equivalent to choosing a linear relation between the fluid velocity $u_\mu$ and the electromagnetic gauge potential $A_\mu$, viz,  
\begin{align}
    u_{\mu} = -a\, A_{\mu}. \label{eq:bonnor}
\end{align}
with $a$ being a constant parameter. In our notation, $a = - \epsilon \sqrt{\alpha}$. From now on we refer to this choice involving the two relations $\mathcal{F} = - \epsilon \sqrt{\alpha}\, \Phi$ and  $K= \epsilon \sqrt{\alpha}\, \psi\mathcal{F}$ as the Bonnor-Raychaudhuri ansatz. Additionally, in the case of a rigidly rotating fluid, Raychaudhuri \cite{Raycha} showed that the expansion and the shear of the fluid vanish, thus satisfying the shear-free conjecture \cite{Collins}. Besides, as noticed by Som and Raychaudhuri \cite{Som}, by Islam \cite{Islam1977}, and by Bonnor \cite{Bonnor1980b}, in the case of vanishing Lorentz force, the equilibrium condition \eqref{eq:relequirig} implies that $\mathcal{F}$ is a constant.

\subsubsection{Rigidly rotating charged dust fluids of Weyl-type in the Einstein-Maxwell theory}
\label{sec:relrigid}

Further interesting properties of rigidly rotating charged dust fluids in the Einstein-Maxwell theory may be drawn through a deeper analysis of the full set of field equations \eqref{eq:Einst1}--\eqref{eq:Maxb}, after being adapted to this particular kind of fluid. 
For such a task we follow Islam \cite{Islam1978} and adopt the functions $F\equiv\mathcal{F}^2$, $K$, $L$, $\Phi$, and $\psi$, where $K$ is the effective centrifugal potential \eqref{eq:K}, $L \equiv l$, and $\Phi$ is the effective electromagnetic potential \eqref{eq:Phi}. In the case of a rigidly rotating fluid, this change of variables is equivalent to transforming to a coordinate system whcih rotates with the fluid (see Appendix \ref{sec:app2}). In fact, such a transformation holds also for differential rotation, but the interpretation of the local coordinate transformation as a change of frame is not consistent (see \cite{Islam1978} for more details on this subject). Furthermore, since we are dealing with a zero pressure fluid, we may take $D=r$ in Eq.~\eqref{eq:nabla*} (see also Appendix \ref{sec:app1}).

With the choices just mentioned, the relevant field equations are rewritten in a more convenient form.  In particular, by combining Eqs.~\eqref{eq:Einst1}, \eqref{eq:Einst2}, and \eqref{eq:Einst3} we get an equation for the function $F$,
\begin{align}
      & \nabla^{2} F + \frac{1}{r^2F} \big(F \nabla K -  K \nabla F \big)^2 -  \frac{1}{2 F} \big(\nabla F\big)^2 \nonumber \\ &\qquad\; = 8 \pi F \big(\rho_m + 2\rho_{em}\big), \label{eq:relpot9c} 
\end{align}
where $\nabla^2\equiv \nabla_j \nabla^j$, with $\nabla_j$ standing for the covariant derivative defined in the hypersurface $\Sigma_t$ of constant time $t$ (see Appendix \ref{sec:app2}, cf. Eq.~\eqref{eq:indlaplc}), and the squares indicate contraction of indices on $\Sigma_t$, e.g., $(\nabla F)^2 \equiv \nabla_j F \nabla^j F = e^{-\mu} \left(\partial_j F\right)^2 $. Besides, $\rho_{em}$ is the electromagnetic energy density as measured by a observer comoving with the fluid, whose expression is written explicitly in Appendix \ref{sec:app2}, cf. Eq.~\eqref{eq:rhoem2}. Some details on the derivation of Eq.~\eqref{eq:relpot9c} from Eqs.~\eqref{eq:Einst1}, \eqref{eq:Einst2}, and \eqref{eq:Einst3} are presented in Appendix~\ref{sec:app3}. The electromagnetic energy density $\rho_{em}$ may be written as
\begin{align} \label{eq:rhoem}
        \rho_{em}  = &\frac{e^{-\mu}}{8 \pi r^2F }\Big[\left( r^2 +K^2 \right)\big(\partial_j \Phi\big)^2 + 2 F K \partial_j \Phi \partial_j \psi \nonumber\\ &\qquad\; + F^2 \big(\partial_j \psi\big)^2 \Big].
\end{align}

For the analysis of this section it is useful decomposing $\rho_{em}$ into its electric and magnetic parts, $\rho_{em} = \rho_{el} + \rho_{mg}$, as is done in Appendixes  \ref{sec:app1} and \ref{sec:app2}.  In the case of a rigidly rotating dust fluid, these two part are given in Eqs.~\eqref{eq:rhoel2} and \eqref{eq:rhomg2}, respectively. For convenience, we rewrite the expression for these two parts here,
\begin{align}
    & \rho_{el}\ = \frac{e^{-\mu}}{8 \pi F} \big(\partial_j \Phi\big)^2, \label{eq:rhoel} \\
    & \rho_{mg} = \frac{e^{-\mu}}{8\pi r^2 F}\big(K \partial_j \Phi + F \partial_j \psi\big)^2. \label{eq:rhomg}
\end{align}

In turn, from the combination of the Maxwell equations, Eqs.~\eqref{eq:Maxa} and \eqref{eq:Maxb}, we get
\begin{align}
    & \nabla^{2} \Phi  + \frac{1}{r^2 F}\big(K \nabla F\!- F \nabla K \big) \cdot \left( F \nabla \psi + K \nabla \Phi \right)  \nonumber \\ &\qquad\; - \frac{1}{2 F} \nabla F \cdot  \nabla \Phi = - 4\pi F^{1/2} \rho_e, \label{eq:relpot10c}
\end{align}
where the dot indicates contraction of indexes as, e.g.,  $\nabla F \cdot  \nabla \Phi \equiv \nabla_j F \nabla^j \Phi = e^{-\mu} \partial_j F \partial_j \Phi$, etc. See Appendix~\ref{sec:app3} for more details on the derivation of Eq.~\eqref{eq:relpot10c}.

From Theorem \ref{teo:rel1} it follows that $\mathcal{F} = \mathcal{F}(\Phi)$ and, since $\mathcal{F} = F^{1/2}$, it gives $F=F(\Phi)$. Therefore, by combining Eqs.~(\ref{eq:relpot9c}) and (\ref{eq:relpot10c}) and eliminating $\rho_m$ and $\rho_e$ through the equilibrium equation $\rho_m \mathcal{F}' + \rho_e = 0$, we arrive at
\begin{align}
  &   \big(\mathcal{F}'^2 - 1\big)\nabla^2\Phi + \mathcal{F}'\mathcal{F}''\big(\nabla \Phi\big)^2   \nonumber \\ & \qquad = 8 \pi  {\cal F} \mathcal{F}'\rho_{mg} 
  - \frac{\mathcal{F}'}{2r^2 {\cal F} F} \Big(F \nabla K -  K \nabla F \Big)^2\nonumber \\
  & \quad\qquad + \frac{1 }{r^2 F}\big(K \nabla F\!- F \nabla K \big) \cdot ( F \nabla \psi + K \nabla \Phi ). \label{eq:relpot11}
\end{align}
The left-hand side of Eq.~(\ref{eq:relpot11}) vanishes if the relation between $\mathcal{F}$ and $\Phi$ is of the Majumdar-Papapetrou-type, i.e., if $\mathcal{F} = - \epsilon \Phi + \gamma$ with constant $\gamma$. 

Now, by defining the function $\mathcal{Z} \equiv \sqrt{\mathcal{F}'^2 - 1}$, Eq.~(\ref{eq:relpot11}) can be recast as
\begin{align}
  &   \mathcal{Z} \nabla_j (\mathcal{Z} \nabla^j \Phi) \nonumber \\ &\;\qquad  = 8 \pi  {\cal F} \mathcal{F}'  \rho_{mg}  - \frac{\mathcal{F}'}{2 r^2{\cal F} F} \big(F \nabla K -  K \nabla F \big)^2 \nonumber\\ & \;\quad\qquad + \frac{1 }{r^2 F}\big(K \nabla F- F \nabla K \big) \cdot \big( F \nabla \psi + K \nabla \Phi \big)  . \label{eq:relpot12a}
\end{align}
This equation allows us to state a new theorem.

\begin{teo}[\textit{rigidly rotating and axisymmetric version of Das 1962, De-Raychaudhuri 1968 and Bonnor 1980}] \label{teo:11}

If the right-hand side of Eq.~(\ref{eq:relpot12a}) vanishes, then

(i) In the Einstein-Maxwell theory, if the surfaces of any rigidly rotating charged dust distribution are closed equipotential surfaces and inside these surfaces there are no singularities, holes or alien matter, then  $\mathcal{F}$ and $\Phi$ are related by
\begin{align}
    \mathcal{F} = - \epsilon \Phi + \gamma, \label{eq:lineara}
\end{align}
where $\gamma$ is a constant of integration, $\epsilon = \pm 1$, and it follows that $\rho_e = \epsilon \rho_m$.

(ii) In the Einstein-Maxwell theory, if in a spacetime region the ratio $\rho_e/\rho_m$ equals a constant $\mathcal{K}$, and there are no
singularities, holes or alien matter in that region, then it follows  the relation $\mathcal{F} = - \epsilon \Phi + \gamma$, and $\mathcal{K} = \epsilon.$

\end{teo}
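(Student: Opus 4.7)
The plan is to mirror the proof of Theorem~\ref{teo:3} step by step, replacing the flat Laplacian on $\mathbb{R}^3$ with the covariant Laplacian $\nabla^2$ compatible with the induced Riemannian metric on the spatial hypersurface $\Sigma_t$ (so that the divergence theorem and positivity of squared gradients remain available). Theorem~\ref{teo:rel1} guarantees $\mathcal{F}=\mathcal{F}(\Phi)$ for rigidly rotating dust, so the machinery leading to Eq.~\eqref{eq:relpot12a} applies and everything may be phrased in terms of $\mathcal{Z}\equiv\sqrt{\mathcal{F}'^2-1}$.

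For assertion \textit{(i)}, I would assume $\mathcal{Z}\neq 0$ and introduce an auxiliary scalar $\Psi$ on $\Sigma_t$ by setting $\nabla_j\Psi\equiv \mathcal{Z}\,\nabla_j\Phi$. Since the right-hand side of Eq.~\eqref{eq:relpot12a} vanishes by hypothesis, that equation immediately reads $\nabla^2\Psi=0$, so $\Psi$ is harmonic on $\Sigma_t$. Applying Gauss' theorem twice on a finite volume $\mathbb{V}_{\!S}\subset\Sigma_t$ bounded by the closed equipotential surface $S$ yields
\begin{equation*}
\int_{\mathbb{V}_{\!S}} \nabla_j\Psi\,\nabla^j\Psi\, d\mathbb{V}=\int_{S}\Psi\,(\nabla_j\Psi)\,n^j\,dS,
\end{equation*}
since the bulk term $\int_{\mathbb{V}_{\!S}}\Psi\,\nabla^2\Psi\,d\mathbb{V}$ vanishes by harmonicity. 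On $S$ the condition $d\Phi=0$ gives $d\Psi=\mathcal{Z}\,d\Phi=0$, so $\Psi$ is constant on $S$ and may be factored out of the surface integral; harmonicity together with $\int_S(\nabla_j\Psi)n^j\,dS=\int_{\mathbb{V}_{\!S}}\nabla^2\Psi\,d\mathbb{V}=0$ then kills the right-hand side entirely. Positivity of the integrand forces $\nabla_j\Psi=0$ on $\mathbb{V}_{\!S}$; in the absence of holes or alien matter this gives $\mathcal{Z}=0$ throughout the fluid, i.e.\ $\mathcal{F}'=-\epsilon$ with $\epsilon=\pm 1$. Integration yields Eq.~\eqref{eq:lineara}, and the equilibrium relation $\rho_m\mathcal{F}'+\rho_e=0$ coming from Eq.~\eqref{eq:relequirig} delivers $\rho_e=\epsilon\rho_m$.

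For assertion \textit{(ii)}, the constancy of $\rho_e/\rho_m=\mathcal{K}$ combined with the equilibrium condition gives $\mathcal{F}'=-\mathcal{K}$ constant, whence $\mathcal{F}''=0$. Feeding this into Eq.~\eqref{eq:relpot12a} and using that its right-hand side vanishes by hypothesis, the equation collapses to $\mathcal{Z}^2\,\nabla^2\Phi=0$. In any region where charge is present, the Maxwell-derived equation~\eqref{eq:relpot10c}, together with the vanishing of the very same geometric combinations that make the right-hand side of \eqref{eq:relpot12a} zero, implies that $\nabla^2\Phi$ cannot vanish since the source term $-4\pi F^{1/2}\rho_e$ is nonzero and cannot be compensated by the remaining terms. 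Hence $\mathcal{Z}=0$, which forces $\mathcal{K}^2=1$, so $\mathcal{K}=\epsilon=\pm 1$, and Eq.~\eqref{eq:lineara} is recovered.

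The main obstacle is not algebraic but geometric: one has to be careful that the Laplacian appearing in Eq.~\eqref{eq:relpot12a} is genuinely the covariant Laplacian on $\Sigma_t$, so that Gauss' theorem, the positivity of $\nabla_j\Psi\,\nabla^j\Psi$, and the uniqueness of harmonic functions with prescribed boundary values on a regular domain all carry over intact; this is ensured by the framework of Appendixes~\ref{sec:app1}--\ref{sec:app3}. A secondary subtlety in \textit{(ii)} is verifying that the auxiliary combinations in Eq.~\eqref{eq:relpot10c} cannot conspire to cancel the charge source; the cleanest route is to observe that those combinations are purely geometric in $F$, $K$, $\psi$ and $\Phi$ and are precisely the ones the hypothesis removes, leaving $\nabla^2\Phi\propto\rho_e$ as the only possibility.
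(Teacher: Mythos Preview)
Your proposal is correct and follows essentially the same route as the paper: define $\Psi$ via $\nabla_j\Psi=\mathcal{Z}\nabla_j\Phi$, use harmonicity and the divergence theorem on $\Sigma_t$ to force $\mathcal{Z}=0$ in part \textit{(i)}, and for part \textit{(ii)} use constancy of $\mathcal{F}'$ to reduce to $\mathcal{Z}^2\nabla^2\Phi=0$ with $\nabla^2\Phi\neq 0$ forced by the presence of charge.

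One small inaccuracy worth flagging: in your justification of $\nabla^2\Phi\neq 0$ for \textit{(ii)}, you assert that the extra geometric terms in Eq.~\eqref{eq:relpot10c} are ``precisely the ones the hypothesis removes''. This is not quite right---the combinations appearing on the right-hand side of Eq.~\eqref{eq:relpot12a} (which involve $\mathcal{F}'$ multiplying $\rho_{mg}$ and the square $(F\nabla K-K\nabla F)^2$) are not the same as the cross terms in Eq.~\eqref{eq:relpot10c}, so the vanishing of the former does not automatically kill the latter. The paper handles this point more loosely, simply asserting that $\nabla^2\Phi\neq 0$ in a region with nonzero charge; your instinct to seek a cleaner justification is good, but the specific matching you invoke does not hold.
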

\begin{proof}
To prove assertion \textit{(i)} we first take into account that the right-hand side of Eq.~(\ref{eq:relpot12a}) vanishes to get $\mathcal{Z} \nabla_j (\mathcal{Z} \nabla^j \Phi) = 0$, then define the function $\Psi$ through $\nabla_j \Psi = \mathcal{Z} \nabla_j \Phi$, and then follow the same steps as in Theorem \ref{teo:3}\textit{(i)} for the Newtonian case. The only subtlety here is that, in order to apply Gauss theorem, a finite volume must be properly defined. For this, we consider the spacelike hypersurfaces $\Sigma_t$ of constant $t$ and choose a finite $\mathbb{V}_S$ in each  $\Sigma_t$ (see also Appendix~\ref{sec:app2}). After that, integrating the divergence $\nabla_j \left(\Psi  \nabla^j \Psi \right)$ over the volume $\mathbb{V}_S$ and using Gauss theorem it is shown that $\mathcal{Z} = \sqrt{\mathcal{F}'^2 - 1} = 0$.  This implies in $\mathcal{F} = - \epsilon \Phi + \gamma$, where $\gamma$ is a constant parameter, and the condition $\rho_e = \epsilon \rho_m$ immediately follows. 

The proof of assertion \textit{(ii)} is analogous to the proof of Theorem \ref{teo:3}\textit{(ii)}. Given that $\rho_e / \rho_m = \mathcal{K}$, by the equilibrium condition \eqref{eq:relequirig} it follows the relation $\mathcal{F}' = - \mathcal{K}$. Taking in account that, by assumption,  the right-hand side of Eq.~(\ref{eq:relpot12a}) vanishes, it  gives $ \mathcal{Z} \nabla_j (\mathcal{Z} \nabla^j \Phi) = \mathcal{Z}^2 \nabla^2 \Phi = 0$. Since we are considering a region where there is a charge distribution $\rho_e$, then $\nabla^2 \Phi \neq 0$, remaining that $\mathcal{Z} = \sqrt{\mathcal{F}'^2 - 1} = 0$, and then ${\cal F}=\pm1$. Therefore, it follows that the relation between $\mathcal{F}$ and $\Phi$ is given by Eq.~(\ref{eq:lineara}), while the relation between $\rho_m$ and $\rho_e$ is such that $\rho_e = \epsilon \rho_m$, and then one has $\mathcal{K} = \epsilon$.
\end{proof}

Theorem \ref{teo:11} is the rigidly rotating version of a result by Bonnor \cite{Bonnor:1980nw} for nonrotating axisymmetric charged fluids. It is also the general relativistic version of our Theorem \ref{teo:3}. The first interesting thing to notice regarding this theorem is that, in the nonrotating case for which one has $K =0,\, \psi =0, \, \Omega = 0$, $F = f$, $L = l$, and $\Phi = \phi$, Eq.~(\ref{eq:relpot11}) reduces to
\begin{align}
    \big(\mathcal{F}'^2 - 1\big) \nabla^2 \Phi + \mathcal{F}'\mathcal{F}'' \big(\nabla \Phi\big)^2 = 0.
\end{align}
This equation is equivalent to an equation obtained by Bonnor in \cite{Bonnor:1980nw} and, therefore, Theorem \ref{teo:11} reduces to the result for nonrotating charged dust fluids obtained in that work. Besides, the imposition that the right-hand side of Eq.~(\ref{eq:relpot12a}) vanishes implies a constraint between the magnetic energy density $\rho_{mg}$ and metric and electromagnetic potentials. Such a relation is the relativistic analog of the Newton-Maxwell case displayed on the right-hand side of Eq.~\eqref{eq:rightside}, it involves the relativistic version of the Goldreich-Julian charge density but it is much more intricate and we dot not display the details here to avoid more cumbersome equations.

It is also noteworthy that, in the literature cited in this work, there is no explicitly solution to the Einstein-Maxwell system of equations for rotating fluids in which the relations  $\mathcal{F} = - \epsilon \Phi + \gamma$ and $\rho_e = \epsilon \rho_m$ are obeyed, and the Lorentz force is nonvanishing. This is contrary to the nonrelativistic rotating charged systems where there is a known solution of the Newton-Maxwell system of equations, due to Islam \cite{Islam1983}.
In fact, in the cases with nonvanishing Lorentz force, in order to simplify the set of field equations, it is also considered an additional relation between the functions $K$ and $\psi$,
as in Refs.~\cite{Islam1978, Islam1980, Islam1983, Islam1983a, Bonnor1980b, Raycha, Bergh1984}. In these works, all the solutions with $\rho_e = \epsilon \rho_m$ correspond to static systems. 
Notice, however, that Theorem \ref{teo:11} does not impose any restriction between $K$ and $\psi$. 

On the other hand, for vanishing Lorentz force, there are known solutions where $\mathcal{F} = - \epsilon \Phi + \gamma$ and $\rho_e = \epsilon \rho_m$, with constant $\mathcal{F}$ and $\Phi$, and also with $K$ and $\psi$ being functionally related \cite{Som, Islam1977, Banerjee}. It is somewhat interesting that such solutions do not admit the static limit. In this regard, we shall discuss an example solution due to Islam \cite{Islam1977} in Sec.~\ref{sec:islamsol}.

Finally, in connection with the study of the present section, it is worth mentioning the work by Breipthaupt et al.~\cite{meinel2015}, where a rigidly rotating disc of charged dust fluid that satisfies the Majumdar-Papapetrou relation \eqref{eq:lineara} is studied, and where it is verified that the quasiblack hole limit may be attained.  
The authors also present an interesting relation between the gravitational mass $M$, the angular momentum $J$, the angular velocity $\Omega$, and the baryonic mass of the disc $M_0$.  In our notation, such a relation reads $M - 2 \Omega J = \int_{\Sigma_t}(\mathcal{F} + \epsilon \Phi)d M_0 = \gamma M_0$. This relation is a particular case of the general mass formula obtained by Lemos and Zaslavskii \cite{Lemos:2009wj} for nonextremal rotating charged quasiblack holes.

\subsubsection{Rigidly rotating charged dust fluid of Weyl-type: Islam ansatz}

In order to simplify the set of field equations, inspired by the works of Bonnor \cite{Bonnor1980b}, Islam \cite{Islam1978} and Raychaudhuri \cite{Raycha}, let us consider here the following relations between the metric and electromagnetic potentials, 
\begin{align}
    & \partial_j F = 2 \alpha \Phi \partial_j \Phi, \label{eq:ansatz} \\
    & \partial_j K = -2 \alpha \Phi \partial_j \psi. \label{eq:ansatz1}
\end{align}
These hypotheses may be viewed as a new ansatz, which we shall call the Islam ansatz.

It is easy to see that Eq.~(\ref{eq:ansatz}) implies in $F = \alpha \Phi^2 +\beta$, with $\beta$ being an arbitrary constant. In the particular case with $\beta =0$ and $\alpha =4$, such a relation reduces to the same form assumed by Bonnor \cite{Bonnor1980b}, i.e., ${\cal F} = -2\epsilon\Phi$.  
On the other hand, Eq.~(\ref{eq:ansatz1}) tell us that $K$ is an arbitrary function of $\psi$ only, $K=K(\psi)$, which in general differs from the relation between $K$ and $\psi$ assumed by Bonnor. However, when the Lorentz force vanishes, i.e., for $\partial_j \Phi = 0$, the Islam ansatz given by Eqs.~ (\ref{eq:ansatz}) and (\ref{eq:ansatz1}) is equivalent to the Bonnor-Raychaudhuri ansatz expressed in Eq.~(\ref{eq:bonnor}). 

After making the Islam ansatz, we can state a new theorem, as follows. 

\begin{teo}[\textit{new}]\label{teo:12}
If a rigidly rotating charged dust described by the Einstein-Maxwell theory satisfies the Islam ansatz, given by Eqs.~(\ref{eq:ansatz}) and (\ref{eq:ansatz1}), then the fluid satisfies the following relation
\begin{align}
    \big(\rho_m + 2\left(1 - \alpha\right) \rho_{el} + 2 \rho_{mg}\big)\mathcal{F} + \alpha \Phi \rho_e = 0. \label{eq:stateq}
\end{align}
\end{teo}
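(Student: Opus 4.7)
The plan is to verify the claimed identity \eqref{eq:stateq} by feeding the Islam ansatz into the two fundamental balance equations \eqref{eq:relpot9c} and \eqref{eq:relpot10c}, using the equilibrium relation from Theorem~\ref{teo:rel1} only implicitly through $\rho_e$ appearing as a source in Maxwell's equation. The key algebraic observation is that the ansatz is tuned so that the two combinations appearing in those field equations, namely $\nabla F$ and $F\nabla K-K\nabla F$, factor through $\Phi$ against the same vectors $\nabla\Phi$ and $K\nabla\Phi+F\nabla\psi$ that define $\rho_{el}$ and $\rho_{mg}$ via \eqref{eq:rhoel} and \eqref{eq:rhomg}.

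First I would record the direct consequences of \eqref{eq:ansatz}--\eqref{eq:ansatz1}: integration gives $F=\alpha\Phi^2+\beta$ and $K=K(\psi)$, and at the level of gradients
\begin{equation}
    F\nabla K-K\nabla F = -2\alpha\Phi\bigl(K\nabla\Phi+F\nabla\psi\bigr),
\end{equation}
so squaring and using \eqref{eq:rhomg} yields $(F\nabla K-K\nabla F)^2 = 32\pi\alpha^2\Phi^2\, r^2 F\,\rho_{mg}$. Likewise $(\nabla F)^2=4\alpha^2\Phi^2(\nabla\Phi)^2 = 32\pi\alpha^2\Phi^2 F\,\rho_{el}$ by \eqref{eq:rhoel}, and $\nabla F\cdot\nabla\Phi=16\pi\alpha\Phi F\rho_{el}$. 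This turns the two offending nonlinear terms in \eqref{eq:relpot9c} and \eqref{eq:relpot10c} into simple multiples of $\rho_{el}$ and $\rho_{mg}$.

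Second, I would use the Maxwell equation \eqref{eq:relpot10c} to solve for $\nabla^2\Phi$, which after the substitutions above reduces to
\begin{equation}
    \nabla^2\Phi = -4\pi\mathcal{F}\rho_e - 16\pi\alpha\Phi\rho_{mg}+8\pi\alpha\Phi\rho_{el}.
\end{equation}
Then I would compute $\nabla^2 F = 2\alpha(\nabla\Phi)^2+2\alpha\Phi\nabla^2\Phi = 16\pi\alpha F\rho_{el}+2\alpha\Phi\nabla^2\Phi$ and insert the expression for $\nabla^2\Phi$ into \eqref{eq:relpot9c}. The Einstein equation then becomes
\begin{equation}
    16\pi\alpha F\rho_{el}-8\pi\alpha\Phi\mathcal{F}\rho_e = 8\pi F\rho_m+16\pi F(\rho_{el}+\rho_{mg}),
\end{equation}
after the contributions proportional to $\alpha^2\Phi^2\rho_{mg}$ and $\alpha^2\Phi^2\rho_{el}$ cancel pairwise. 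Dividing by $8\pi\mathcal{F}$ (using $F=\mathcal{F}^2$) rearranges this into precisely \eqref{eq:stateq}.

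The only real obstacle is bookkeeping: making sure the signs, the factor of $r^2$, and the conformal weight $e^{-\mu}$ absorbed in the three-dimensional contractions $(\nabla\Phi)^2$, etc., are consistent with the definitions of $\rho_{el}$ and $\rho_{mg}$ given in Appendix~\ref{sec:app2}, so that the combinations one obtains from the ansatz match the quantities in \eqref{eq:rhoel}--\eqref{eq:rhomg} without spurious prefactors. Once that match is confirmed, the two cross terms quadratic in $\alpha\Phi$ cancel cleanly and the theorem follows without invoking the equilibrium equation \eqref{eq:relequirig} explicitly; in fact the resulting identity is consistent with (and slightly stronger than) what one would get by combining equilibrium with the Einstein and Maxwell constraints in this Weyl-type setting.
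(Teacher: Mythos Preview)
Your proposal is correct and follows essentially the same route as the paper: substitute the Islam ansatz into the $F$-equation and the $\Phi$-equation and subtract, then use the identities $(\nabla\Phi)^2=8\pi F\rho_{el}$ and $(K\nabla\Phi+F\nabla\psi)^2=8\pi r^2F\rho_{mg}$ to turn the surviving gradient terms into fluid densities. The only cosmetic difference is that the paper works with the coordinate forms \eqref{eq:relpot9} and \eqref{eq:relpot10} (operator $\nabla^{\dagger 2}$, explicit $e^{\mu}$) and observes that the common bracket $F(\partial_jK)^2-2K\partial_jK\partial_jF-L(\partial_jF)^2$ cancels on subtraction without ever being converted into $\rho_{mg}$, whereas you work with the covariant forms \eqref{eq:relpot9c}--\eqref{eq:relpot10c} and track every nonlinear piece through $\rho_{el}$ and $\rho_{mg}$ before cancelling the $\alpha^2\Phi^2$ terms; both yield \eqref{eq:stateq} after one division by $8\pi\mathcal{F}$.
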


\begin{proof}
The proof is straightforward. Given that, by assumption, Eqs.~(\ref{eq:ansatz}) and (\ref{eq:ansatz1}) hold, then it follows the relation $F = \alpha \Phi^2 +\beta$, with constant $\beta$, and this implies in $\nabla^{\dagger 2} F = 2\alpha \Phi \nabla^{\dagger 2} \Phi + 2\alpha \big(\partial_j \Phi\big)^2$. Then, from these results and Eqs.~(\ref{eq:relpot9}) and (\ref{eq:relpot10}), we get
\begin{align}
        2\alpha \Phi \nabla^{\dagger 2} \Phi & + 2\alpha \big(\partial_j \Phi\big)^2 + \frac{1}{r^2} \Big(F \big(\partial_j K\big)^2 - 2 K \partial_j K \partial_j F \nonumber \\ 
        & - L \big(\partial_j F\big)^2 \Big)  = 8 \pi e^{\mu} F \big(\rho_m + 2\rho_{em}\big), \label{eq:relpot9a}\\
         2 \alpha \Phi \nabla^{\dagger 2} \Phi & + \frac{1}{r^2} \Big(F \big(\partial_j K\big)^2 - 2 K \partial_j K \partial_j F - L \big(\partial_j F\big)^2 \Big) \nonumber \\
        & = - 8\pi e^{\mu} \mathcal{F} \alpha \Phi \rho_e. \label{eq:relpot10a}
\end{align}
By replacing Eq.~(\ref{eq:relpot10a}) into Eq.~(\ref{eq:relpot9a}) and introducing $\rho_{el}$ by means of Eq.~(\ref{eq:rhoel}), it follows that $ (\rho_m + 2(1 - \alpha) \rho_{el} + 2 \rho_{mg})\mathcal{F} + \alpha \Phi \rho_e  = 0 $.
\end{proof}

This theorem establishes the energy balance in configurations of rigidly rotating charged dust fluids satisfying the Islam ansatz. The first term in Eq.~(\ref{eq:stateq}), i.e., the term containing $\rho_m$, $\rho_{el} $, $\rho_{mg}$ and ${\cal F}$, 
represents the gravitational energy density,  while the term $\rho_e\Phi$ represents the electromagnetic binding energy of the system \cite{Lemos:2009}.  Some consequences of this theorem are discussed next.

First, consider a system with vanishing Lorentz force.  Then one has $\partial_j \Phi=0$, which means a constant $\Phi$ and, after Eq.~\eqref{eq:rhoel}, a vanishing  $\rho_{el}$,  Eq.~\eqref{eq:ansatz} implies in a constant metric potential $F$, and Eq.~(\ref{eq:stateq}) reduces to $\rho_e = {\rm const.}\times  \left(\rho_m +  2 \rho_{mg}\right) $. This condition is a consequence of the Islam ansatz and, in analogy to the Majundar-Papapetrou condition for static charged dust fluids, we name it the first Islam condition.

Second, for rigidly rotating charged dust fluids with nonvanishing Lorentz force but satisfying the Islam ansatz, we may choose $\mathcal{F} = -\epsilon \sqrt{\alpha}\Phi$, and an alternative form of Eq.~\eqref{eq:stateq} turns out, namely, $\rho_m + 2(1 - \alpha) \rho_{el} + 2 \rho_{mg} -  \epsilon \sqrt{\alpha} \rho_e = 0$. Additionally, the equilibrium condition \eqref{eq:relequirig} reduces to $\rho_e = \epsilon \sqrt{\alpha} \rho_m$ and then Theorem \ref{teo:12} gives  
$ \left(1 - \alpha\right)(\rho_m + 2 \rho_{el}) +  2 \rho_{mg} = 0$, or equivalently $ \rho_m = 2\rho_{mg}/\left(\alpha-1\right) - 2\rho_{el}$. This condition is also a consequence of the Islam ansatz and we name it the second Islam condition. In this case, the energy density $\rho_m$ is related to the electric and magnetic energy densities, and to have the condition $\rho_m>0$ fulfilled one must take $\alpha > 1$ and $\rho_{mg} > (\alpha -1)\rho_{el}$. For $\alpha > 2$ we can see that the magnetic energy density dominates over the electric energy density. This situation was also noticed for systems obeying the Bonnor-Raychaudhuri ansatz for $\alpha = 4$, see e.g. \cite{Raycha}.

A last comment on the consequences of Theorem \ref{teo:12} we make here is regarding the particular case with $\alpha =1$, which does not fit in the analysis presented in the last paragraph.  In such a case, one has a vanishing magnetic field strength $B_\mu=0$ (see Appendix \ref{sec:app2}) and a vanishing magnetic energy density $\rho_{mg} = 0$. Hence, from Eq.~\eqref{eq:rhomg} it follows that $K \partial_j \psi = - F\partial_j \Phi$ which, together with ansatz \eqref{eq:ansatz1} implies in $K^2 = \Phi^4 + \gamma = F^2 + \gamma$, where $\gamma$ is an integration constant. In the case  $\gamma=0$, it results $K=\pm F$ and it can be shown that the charged dust fluid with $\alpha =1$ and satisfying the Islam ansatz does not rotate at all. This result has been already noticed in Refs.~\cite{Islam1977, Raycha, Bergh1984}.

\subsubsection{The Islam's solution: Vanishing Lorentz force}
\label{sec:islamsol}
Here we consider a class of solutions obtained by Islam \cite{Islam1977} for a distribution of rigidly rotating axisymmetric charged dust in which the Lorentz force vanishes, and show that such solutions obey the Islam ansatz. The corresponding class of solutions is characterized by the following relations for the relevant functions,
\begin{align}
    & \Phi = \Phi_0, \ \  F = F_0, \ \ \psi = b \xi, \ \ K = a \xi, \label{eq:phi0} \\
    & L = F^{-1}\left(r^2 - K^2\right) = F_0^{-1}\left(r^2 - a^2 \xi^2\right), \\
    & \partial_r \mu = \frac{1}{2 r}\left(a^2 - 4 b^2 F_0\right)\left(\left(\partial_z \xi\right)^2 - \left(\partial_r \xi\right)^2\right), \\
    & \partial_z \mu = -\frac{1}{r}\left(a^2 - 4 b^2 F_0\right)\partial_r \xi \partial_ z \xi, 
\end{align}
where $\Phi_0, F_0, a$ and $b$ are arbitrary constant parameters, and $\xi$ is a function satisfying $\nabla^{\dagger 2}_- \xi = 0$. For instance, taking $\xi = a_0 r^2 + a_1$, where $a_0$ and $a_1$ are constants, the Islam solution reduces to the Som-Raychaudhuri solution~\cite{Som}. For this solution, one has $K = \frac{a}{b}\psi$ and then the ansatz given by Eq.~(\ref{eq:ansatz1}) is obeyed if the constant parameters satisfy the relation 
\begin{equation}
     \alpha= -\frac{ a} {2b\,\Phi_0}. \label{eq:constraint}
\end{equation}

The energy density and the charge density become
\begin{align}
  & \rho_m = \big(a^2 - 2F_0 b^2\big)\frac{e^{-\mu}}{8\pi r^2}\big(\partial_j \xi\big)^2, \label{eq:rhom} \\
    & \rho_e = ab \mathcal{F}_0 \frac{e^{-\mu}}{4 \pi r^2} \big(\partial_j \xi\big)^2,\label{eq:rhoe}
\end{align}
 respectively.

Additionally, since $\partial_j \Phi = 0$, it follows from Eqs. \eqref{eq:rhoem}, \eqref{eq:rhoel}, and \eqref{eq:rhomg} that $\rho_{el} = 0$ and the electromagnetic energy density reads
\begin{align}
    \rho_{em} = \rho_{mg} = \frac{e^{-\mu}}{8 \pi r^2}F_0 b^2 \big(\partial_j \xi\big)^2. \label{eq:rhoemg}
\end{align}
Therefore, Eq.~\eqref{eq:stateq} implies in $(\rho_m + 2\rho_{mg})\mathcal{F} + \alpha \Phi \rho_e = 0$. Now we show that such a relation is indeed satisfied. In fact, by using Eqs.~\eqref{eq:rhomg},~\eqref{eq:rhom}, and  \eqref{eq:rhoe} we obtain
\begin{align}
  & \big(\rho_m + 2\rho_{mg}\big)\mathcal{F}_0 + \alpha \Phi_0 \rho_e = \nonumber \\  & =\frac{e^{\mu}}{8 \pi r^2} \Big[\Big(\big(a^2\! - 2 F_0 b^2\big) + 2F_0 b^2\Big)\mathcal{F}_0 + 2\alpha a b \Phi_0 \mathcal{F}_0 \Big]\big(\partial_j \xi\big)^2\nonumber \\
    & = \frac{e^{\mu}}{8 \pi r^2}\big(a + 2\alpha b \Phi_0\big)a\mathcal{F}_0 \big(\partial_j \xi\big)^2 = 0, 
\end{align}
where we have used the relation $\alpha= -a/(2b\,\Phi_0)$, cf. Eq. \eqref{eq:constraint}. 
In conclusion, we find that  the class of solutions obtained by Islam \cite{Islam1977} satisfies Theorem \ref{teo:12}.

Finally, from Eqs.~(\ref{eq:rhom}) and (\ref{eq:rhoe}) it follows a linear relation between $\rho_e$ and $\rho_m$. It is given by
\begin{align}
    {\rho_e} = \frac{2ab \mathcal{F}_0}{a^2 - 2F_0 b^2}\,{\rho_m},
\end{align}
which can assume any finite arbitrary value. The inequality $\rho_m>0$ is guaranteed by appropriate choices of the free parameters $a$, $b$, $F_0$, and ${\cal F}_0 = \pm \sqrt{F_0}$. In particular, we get $\rho_m=\epsilon\rho_e$ for $a=\epsilon b{\cal F}_0\left(1+\sqrt{3}\right)$.

\subsection{Nonzero-pressure rotating charged fluids in the Einstein-Maxwell theory}
\label{sec:3c}

\subsubsection{Differentially rotating charged pressure fluids in the Einstein-Maxwell theory}

Here we study the general properties of differentially rotating charged fluids with nonzero pressure in the Einstein-Maxwell theory. In this general case, the equilibrium equation~(\ref{eq:relequi}), written in terms of total derivatives, reads
\begin{align}
    & (\rho_m + p) d \mathcal{F} + \mathcal{F} d p + \rho_e d \Phi \nonumber  \\ & +  \Big((\rho_m + p )\mathcal{F}^{-1}K -  \rho_e \psi \Big)d \Omega = 0, \label{eq:relequit},
\end{align}
which implies that the four quantities $\mathcal{F}$, $p$, $\Phi$ and $\Omega$ are functionally related. As in the newtonian case, this means that, for instance, the pressure is a function of the other three quantities, i.e., $p = p(\mathcal{F}, \Phi, \Omega)$ which, in turn, implies that $(\partial p/ \partial \mathcal{F})_{\Phi, \Omega} = - (\rho_m + p)/ \mathcal{F}$, $(\partial p/ \partial \Phi)_{\mathcal{F}, \Omega} = - \rho_e/ \mathcal{F}$ and $(\partial p/ \partial \Omega)_{\mathcal{F}, \Phi} = - \Big((\rho_m + p )\mathcal{F}^{-1}K -  \rho_e \psi \Big)/ \mathcal{F}$. Therewith, we can state the following theorem, whose static version can be found in \cite{Guilfoyle:1999yb}. .

\begin{teo}[\textit{differentially rotating and axisymmetric version of Guilfoyle 1999}] \label{teo:4reldif}
For any differentially rotating charged pressure fluid in the Einstein-Maxwell theory with $(\rho_m + p )\mathcal{F}^{-1}K -  \rho_e \psi \neq 0 $, if any three out of the four surfaces of constant $\mathcal{F}$, $\Phi$, $\Omega$, and $p$ coincide, then the fourth also coincides with the other three.
\end{teo}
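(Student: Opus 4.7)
My plan is to mimic the template already used in the excerpt for Theorems~\ref{teo:4dif}, \ref{teo:2}, and \ref{teo:p0rigidN1}: reduce to a spacelike slice, read off a single one-form identity from the equilibrium equation, and then argue pointwise that three vanishing differentials force the fourth. Because the system is stationary and axisymmetric, the four scalars $\mathcal{F}$, $\Phi$, $\Omega$, $p$ are functions of $(r,z)$ only and therefore descend to well-defined scalar fields on any hypersurface $\Sigma_t$ of constant $t$. Their level sets are two-dimensional surfaces in $\Sigma_t$, and asserting that such a surface is a level surface is equivalent to saying that the corresponding differential $d\mathcal{F}$, $d\Phi$, $d\Omega$, or $dp$ vanishes when restricted to tangent directions along it.

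The workhorse is Eq.~\eqref{eq:relequit}, contracted with $dx^j$:
\begin{equation*}
(\rho_m + p)\, d\mathcal{F} + \mathcal{F}\, dp + \rho_e\, d\Phi + \big[(\rho_m+p)\mathcal{F}^{-1}K - \rho_e\psi\big] d\Omega = 0.
\end{equation*}
If three of the four differentials vanish on some common surface, three of the terms above are zero on it, so the fourth term must vanish identically there. Since the fourth term is a product of one non-vanishing coefficient (see the next paragraph) and the fourth differential, the fourth differential must itself vanish, which means the fourth surface coincides with the other three. This is the entire argument; it is a direct relativistic transcription of the proof of Theorem~\ref{teo:4dif}.

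The one point that I expect to require real care, and which I would therefore emphasize in the written-up proof, is verifying that the coefficient of the ``surviving'' differential is non-zero in each of the four cases. The hypothesis $(\rho_m+p)\mathcal{F}^{-1}K - \rho_e\psi \neq 0$ is tailored exactly to the case where one must conclude $d\Omega = 0$ from the vanishing of $d\mathcal{F}$, $dp$, $d\Phi$. For the remaining three cases, I would invoke mild physical conditions already implicit throughout the paper: $\mathcal{F}\neq 0$ because the timelike Killing field is non-degenerate in the region of interest, $\rho_m + p > 0$ for a physical fluid, and $\rho_e \neq 0$ in the charged region under consideration. Under these standing assumptions the implication goes through in all four cases, and the theorem generalizes Theorem~\ref{teo:rel2} exactly as Theorem~\ref{teo:4dif} generalizes Theorem~\ref{teo:2} in the Newtonian setting.
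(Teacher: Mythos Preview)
Your proposal is correct and follows essentially the same approach as the paper: both reduce to the equilibrium identity \eqref{eq:relequit} and observe that if three of the four differentials vanish then the fourth must as well. You are actually more careful than the paper in explicitly checking the non-vanishing of the coefficient in each of the four cases; the paper's proof simply asserts ``it is straightforward to see'' without isolating those hypotheses.
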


\begin{proof}
The proof of this theorem is similar to that for Theorem \ref{teo:rel2}. The quantities $\mathcal{F}$, $\Phi$, $\Omega$ and $p$ are scalar fields in spacetime and the conditions of constant $\mathcal{F}$, $\Phi$, $\Omega$, $p$ define two-dimensional surfaces. Moreover, since Eq.~(\ref{eq:relequit}) is valid in its full extent, it is straightforward to see that if any three of the differentials $d\mathcal{F}, d\Phi$, $d \Omega$, and $d p$ vanish, then all of them vanish and the theorem follows.
\end{proof}

This theorem generalizes Theorem \ref{teo:rel2} by including the fluid pressure.

\subsubsection{Differentially rotating charged pressure fluids of Weyl-type in the Einstein-Maxwell theory}
Let us then consider the Weyl ansatz $\mathcal{F} = \mathcal{F}(\Phi)$ and explore some of its consequences for a differentially rotating charged fluid with pressure.  The equilibrium equation \eqref{eq:relequit} now yields
\begin{align}
      &   \left[  \big(\rho_m + p\big) \mathcal{F}' + \rho_e\right] d \Phi + \mathcal{F} d p \nonumber  \\
     &   \qquad  \qquad   \qquad +  \left[\big (\rho_m + p \big)\mathcal{F}^{-1}K -  \rho_e \psi \right] d \Omega = 0, \label{eq:relequi1}
\end{align}
where the prime stands for the derivative with respect to $\Phi$. Therefore, considering $ (\rho_m + p) \mathcal{F}' + \rho_e\neq 0$ and $(\rho_m + p )\mathcal{F}^{-1}K -  \rho_e \psi \neq 0$,  the three quantities $\Phi$, $p$ and $\Omega$ are functionally related and, for instance, the  pressure of the fluid can be considered as a function of $\Phi$ and $\Omega$, i.e., $p = p(\Phi, \Omega)$. This allows us to state two new theorems, which are the general relativistic versions of Theorems~\ref{teo:newt} and \ref{teo:newt1}.  

\begin{teo}[\textit{new}]
If a differentially rotating charged pressure fluid is of Weyl-type, is in equilibrium in the Einstein-Maxwell theory, and obeys the constraints $(\rho_m + p) \mathcal{F}' + \rho_e \neq 0$ and $(\rho_m + p )\mathcal{F}^{-1}K -  \rho_e \psi \neq 0$, then if any two of the surfaces of constant $\Phi$, $\Omega$, or $p$ coincide, then the third also coincides with the other two.
\end{teo}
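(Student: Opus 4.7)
The plan is to mirror the proof strategy used for Theorems \ref{teo:newt} and \ref{teo:newt1} in the Newton-Maxwell case, now applied to the Weyl-reduced equilibrium equation \eqref{eq:relequi1} in the Einstein-Maxwell theory. This equation is linear in the three one-forms $d\Phi$, $dp$, and $d\Omega$, with respective coefficients
\begin{align}
C_1 &\equiv (\rho_m + p)\mathcal{F}' + \rho_e, \nonumber\\
C_2 &\equiv \mathcal{F}, \nonumber\\
C_3 &\equiv (\rho_m + p)\mathcal{F}^{-1}K - \rho_e\psi. \nonumber
\end{align}
The theorem's hypotheses guarantee $C_1 \neq 0$ and $C_3 \neq 0$, while $C_2 = \mathcal{F}\neq 0$ is implicit (as $\mathcal{F}$ is the redshift factor for the timelike congruence of the fluid flow).

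First I would remark that $\Phi$, $\Omega$, and $p$ are scalar fields on the stationary, axisymmetric spacetime which, by the imposed symmetries, are functions of the two spatial coordinates $(r,z)$ only. Hence the level sets $\Phi = \text{const}$, $\Omega = \text{const}$, $p = \text{const}$ define two-dimensional hypersurfaces (foliated over $t$ and $\varphi$) in the spacetime, on which the corresponding differentials vanish.

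Second, I would argue purely by inspection of \eqref{eq:relequi1}: if any two of the three differentials $d\Phi$, $dp$, $d\Omega$ vanish along a common surface, then the term remaining in \eqref{eq:relequi1} is a single nonzero coefficient times the third differential, which must therefore vanish as well. The three cases are symmetric. For instance, if $d\Phi = 0$ and $dp = 0$, then \eqref{eq:relequi1} reduces to $C_3\, d\Omega = 0$, and $C_3 \neq 0$ forces $d\Omega = 0$; similarly for the other two pairings, using $C_1\neq 0$ or $C_2 = \mathcal{F}\neq 0$. Thus the three surfaces coincide.

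This proof is essentially algebraic, once the Weyl ansatz has been used to eliminate $d\mathcal{F}$ in favor of $\mathcal{F}'\,d\Phi$, and poses no real obstacle — the only subtlety is being explicit that $\mathcal{F}\neq 0$ is a tacit nondegeneracy assumption on the redshift factor, so that the coefficient of $dp$ is nonvanishing. Since both explicit nonvanishing hypotheses $C_1\neq 0$ and $C_3\neq 0$ are built into the statement, the conclusion follows immediately. The resulting theorem is the Einstein-Maxwell, differentially-rotating analog of Theorem \ref{teo:newt}, and reduces to Theorem \ref{teo:rel2a} when $p=0$.
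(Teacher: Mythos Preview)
Your proof is correct and follows essentially the same approach as the paper's: both argue directly from the Weyl-reduced equilibrium equation \eqref{eq:relequi1} that if any two of $d\Phi$, $d\Omega$, $dp$ vanish then the third must as well, given the nonvanishing coefficients. You are simply more explicit about labeling the coefficients and noting that $\mathcal{F}\neq 0$ is needed for the $dp$ term, which the paper leaves tacit.
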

\begin{proof}
The proof is similar to theorems \ref{teo:rel2} and \ref{teo:4reldif}. From Eq.~(\ref{eq:relequi1}), it is straightforward to see that if any two of the differentials $d\Phi$, $d\Omega$, and $dp$ are zero, then the third differential is also zero and the theorem follows. 
\end{proof}

\begin{teo}[\textit{new}] \label{teo:18}
If a differentially rotating charged pressure fluid is of Weyl-type, is in equilibrium in the Einstein-Maxwell theory, and obeys the constraints $(\rho_m + p) \mathcal{F}' + \rho_e = 0$ and $(\rho_m + p )\mathcal{F}^{-1}K -  \rho_e \psi \neq 0$, then the  surfaces of constant angular velocity coincide with the surfaces of constant pressure.
\end{teo}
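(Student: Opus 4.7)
The plan is to follow the same direct algebraic route used to prove Theorem~\ref{teo:newt1} in the Newton--Maxwell setting, now transplanted to the Einstein--Maxwell equilibrium equation~(\ref{eq:relequi1}). First I would invoke the Weyl ansatz $\mathcal{F}=\mathcal{F}(\Phi)$, so that $d\mathcal{F}=\mathcal{F}'\,d\Phi$, and write the total-derivative form of the equilibrium condition as
\begin{equation*}
\left[(\rho_m+p)\mathcal{F}'+\rho_e\right]d\Phi + \mathcal{F}\,dp + \left[(\rho_m+p)\mathcal{F}^{-1}K-\rho_e\psi\right]d\Omega = 0.
\end{equation*}

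Next I would impose the first hypothesis, $(\rho_m+p)\mathcal{F}'+\rho_e=0$. This annihilates the coefficient of $d\Phi$ and collapses the constraint to
\begin{equation*}
\mathcal{F}\,dp + \left[(\rho_m+p)\mathcal{F}^{-1}K-\rho_e\psi\right]d\Omega = 0.
\end{equation*}
Since $\mathcal{F}$ is a nonvanishing redshift factor throughout the fluid support and, by the second hypothesis, $(\rho_m+p)\mathcal{F}^{-1}K-\rho_e\psi\neq 0$, both remaining coefficients are nonzero.

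The final step is a one-line geometric observation completely analogous to the proofs of Theorems~\ref{teo:rel2} and~\ref{teo:newt1}: viewing $p$ and $\Omega$ as scalar fields on the spacelike hypersurface $\Sigma_t$ of constant $t$, the conditions $p=\mathrm{const}$ and $\Omega=\mathrm{const}$ define two-dimensional level surfaces on which $dp=0$ and $d\Omega=0$, respectively. The reduced constraint then immediately implies that $dp=0$ if and only if $d\Omega=0$, so a surface of constant $\Omega$ is necessarily a surface of constant $p$, and vice versa.

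There is no real obstacle to this proof; it is strictly parallel to its Newton--Maxwell counterpart. The only subtlety worth flagging is that the conclusion is local to the region where the two hypotheses are simultaneously satisfied, which is automatic wherever $\mathcal{F}\neq 0$ in the fluid interior. It may also be worth remarking, in a closing comment, that this theorem is in the same spirit as Theorem~\ref{teo:rel1}: the reduced equilibrium equation forces $p$ and $\Omega$ into a functional relationship, $p=p(\Omega)$, so that the Weyl ansatz augmented by the extra constraint $(\rho_m+p)\mathcal{F}'+\rho_e=0$ is essentially a barotropic-like closure of the problem in the variable $\Omega$.
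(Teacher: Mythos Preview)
Your proposal is correct and follows essentially the same approach as the paper's own proof: impose the Weyl ansatz and the constraint $(\rho_m+p)\mathcal{F}'+\rho_e=0$ to reduce the equilibrium equation~(\ref{eq:relequi1}) to $\mathcal{F}\,dp + \left[(\rho_m+p)\mathcal{F}^{-1}K-\rho_e\psi\right]d\Omega = 0$, then observe that with both coefficients nonzero, $dp=0$ if and only if $d\Omega=0$. Your version is slightly more explicit in noting that $\mathcal{F}\neq 0$ and in framing the level-surface argument on $\Sigma_t$, but there is no substantive difference.
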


\begin{proof}
After the hypothesis $\left(\rho_m + p\right) \mathcal{F}' + \rho_e  = 0$,  the equilibrium equation \eqref{eq:relequit} reduces to $\big[\big(\rho_m + p \big)\mathcal{F}^{-1}K -  \rho_e \psi\big] d \Omega + \mathcal{F}dp = 0$. Therefore, it is straightforward to see that if any one of the differentials $d\Omega$ or $d p$ is zero, then both of them are zero meaning that the surfaces of constant $\Omega$ and $p$ coincide.  
\end{proof}

A consequence of the assumptions made in Theorem \ref{teo:18} is that the ratio between the charge density $\rho_e$ and the enthalpy density $\rho_m +p$, $\rho_e/\left(\rho_m +p\right)=-{\cal F}'$, is a function of $\Phi$ alone. This allows us simplifying the system of equations and finding simple relations between the fluid quantities. For instance, if we consider the simplest ansatz between $\mathcal{F}$ and $\Phi$ in which ${\cal F}= -\epsilon\alpha\Phi + \beta$, with constant $\alpha$ and $\beta$, we get the simple relation $\rho_e= \epsilon\alpha \left(\rho_m +p\right)$. This condition may be of interest in finding exact solutions for differentially rotating charged fluids and is a possible line of studies for future work.

\subsubsection{Rigidly rotating charged pressure fluids in the Einstein-Maxwell theory}

Let us now restrict the analysis to rotating charged pressure fluids in rigid rotation. The equilibrium equation (\ref{eq:relequit}) then reads
\begin{align}
    (\rho_m + p) d \mathcal{F} + \mathcal{F} d p + \rho_e d \Phi = 0. \label{eq:relequi1a}
\end{align}
This relation implies that $\mathcal{F}$, $\Phi$, and $p$ are functionally related as, for instance,  $p = p(\mathcal{F}, \Phi)$, with $ \left(\partial p/ \partial \mathcal{F}\right)_{\Phi}=- \left(\rho_m + p\right) / \mathcal{F}$ and $\left(\partial  p/ \partial \Phi\right)_{\mathcal{F}}=-\rho_e/\mathcal{F}$. Therewith, we can state a new  following theorem which is the rigidly rotating version of a theorem stated by Guilfoyle  \cite{Guilfoyle:1999yb} for nonrotating charged pressure fluids.

\begin{teo}[\textit{Kloster and Das 1977, rigidly rotating and axisymmetric version of Guilfoyle 1999}] \label{teo:4a}
For any rigidly rotating charged pressure fluid in the Einstein-Maxwell theory, if any two among the three surfaces of constant $\mathcal{F}$, $\Phi$, and $p$ coincide, then the third surface also coincides.
\end{teo}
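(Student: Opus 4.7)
The plan is to mimic the structure of the proofs already given for Theorems \ref{teo:4dif}, \ref{teo:4}, and \ref{teo:rel2}, exploiting the fact that the rigidly-rotating equilibrium equation~\eqref{eq:relequi1a} is a single linear relation among the three one-forms $d\mathcal{F}$, $d\Phi$, and $dp$, with coefficients $(\rho_m+p)$, $\rho_e$, and $\mathcal{F}$, respectively. This is the relativistic counterpart, with the extra factor of $\mathcal{F}$ multiplying $dp$, of the Newton-Maxwell relation used in Theorem~\ref{teo:4}.

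First I would observe that, by the stationarity and axisymmetry assumptions, $\mathcal{F}$, $\Phi$, and $p$ are scalar functions of $(r,z)$ only, so that they can be regarded as scalar fields on any spacelike hypersurface $\Sigma_t$ of constant coordinate time $t$. The loci $\mathcal{F}=\text{const}$, $\Phi=\text{const}$, and $p=\text{const}$ therefore define smooth two-dimensional surfaces in each $\Sigma_t$, along which the respective differential vanishes identically. This is the same geometric set-up used in Theorems~\ref{teo:rel2} and~\ref{teo:rel1}.

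Next I would invoke Eq.~\eqref{eq:relequi1a} and note that, in the physically relevant regime where the theorem is intended to apply, the three coefficients are all nonzero: $\mathcal{F}\neq 0$ away from horizons (it is the redshift factor of the comoving observer, cf.\ Eq.~\eqref{eq:calF}); $\rho_m+p>0$ by the ordinary energy condition for a perfect fluid; and $\rho_e\neq 0$ in the region occupied by the charged matter. Under these nondegeneracy assumptions, if any two of the differentials $d\mathcal{F}$, $d\Phi$, $dp$ vanish on a given surface, the equilibrium equation~\eqref{eq:relequi1a} forces the third to vanish as well, since the remaining coefficient does not vanish. It follows that the corresponding third level surface coincides with the first two, which is the statement of the theorem.

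The main (and essentially only) subtlety I anticipate will be to state explicitly the nondegeneracy hypotheses on $\mathcal{F}$, $\rho_m+p$, and $\rho_e$ in a way that is consistent with the assumptions tacitly used in the companion Theorems~\ref{teo:4} and~\ref{teo:rel1}; once these are in place the argument is a direct three-way analogue of the two-way argument used in the charged-dust case of Theorem~\ref{teo:rel1}, and no further calculation is required.
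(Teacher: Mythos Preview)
Your proposal is correct and follows essentially the same approach as the paper's own proof, which likewise observes that $\mathcal{F}$, $\Phi$, and $p$ are time-independent scalars defining level surfaces and then uses Eq.~\eqref{eq:relequi1a} to conclude that the vanishing of any two of $d\mathcal{F}$, $d\Phi$, $dp$ forces the third to vanish. Your explicit discussion of the nondegeneracy of the coefficients $(\rho_m+p)$, $\rho_e$, and $\mathcal{F}$ is in fact more careful than the paper's version, which leaves these conditions implicit.
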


\begin{proof}
As already said above,  $\mathcal{F}$, $\Phi$, and $p$ are scalar functions in spacetime that do not depend upon time $t$. Then, the
conditions of constant $\mathcal{F}$, $\Phi$, and $p$ define level surfaces in the spacetime. Moreover, by Eq.~(\ref{eq:relequi1a}), it is straightforward to see that if any two of the three  differentials $d \mathcal{F} , d\Phi$, and $d p$ vanish, then all of them vanish and the theorem follows.
\end{proof}
This theorem generalizes a result obtained by Guilfoyle \cite{Guilfoyle:1999yb} for static charged pressure fluids by including rigid rotation. The fact that in the case of isometric motion the pressure is a function of the potentials, $p= p({\cal F},\Phi)$,  is also noticed in Ref.~\cite{Kloster}.

\subsubsection{Rigidly rotating charged pressure fluids of Weyl-type in the Einstein-Maxwell theory}
\label{sec:relrigid-p}
Now, let us consider the Weyl ansatz $\mathcal{F} = \mathcal{F}(\Phi)$ and see its consequences for a rigidly rotating fluid with nonzero pressure.  The equilibrium equation now reads
\begin{equation}
    \left[\big(\rho_m +p\big) \mathcal{F}' + \rho_e\right] d \Phi + d p = 0. \label{eq:relequi1b}
\end{equation}
First, let us notice that if $(\rho_m +p) \mathcal{F}' + \rho_e = 0$ then the pressure is constant throughout the fluid. On the other hand, if $(\rho_m +p) \mathcal{F}' + \rho_e \neq 0$, the pressure is also a function of $\Phi$, i.e., $p = p(\Phi)$. This allow us to state a new theorem. 

\begin{teo}[\textit{rigidly rotating and axisymmetric version of Guilfoyle 1999}] \label{teo:5a} 

(i) If a rigidly rotating charged pressure fluid is of Weyl-type and is in equilibrium in the Einstein-Maxwell theory, then the equipotential surfaces are also surfaces of constant pressure, and vice versa.

(ii) If a rigidly rotating charged pressure fluid is of Weyl-type and is in equilibrium in the Einstein-Maxwell theory, then either the pressure gradient vanishes at the surface of the fluid, or the surface is an equipotential surface.
\end{teo}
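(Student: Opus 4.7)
The plan is to adapt the Newton-Maxwell proof given in Theorem \ref{teo:5} to the relativistic setting, using the relativistic equilibrium equation \eqref{eq:relequi1b} in place of its Newtonian counterpart, together with Theorem \ref{teo:4a} as the analogue of Theorem \ref{teo:4}.

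For assertion \textit{(i)}, I would begin by imposing the Weyl ansatz $\mathcal{F} = \mathcal{F}(\Phi)$, which immediately gives $d\mathcal{F} = \mathcal{F}'\, d\Phi$, so that the surfaces of constant $\mathcal{F}$ and of constant $\Phi$ automatically coincide. Substituting into the equilibrium equation for a rigidly rotating pressure fluid yields the one-form identity $\bigl[(\rho_m+p)\mathcal{F}' + \rho_e\bigr]d\Phi + dp = 0$. Under the standing hypothesis $(\rho_m+p)\mathcal{F}' + \rho_e \neq 0$ (the complementary case forces $p$ to be constant, which is trivial for the statement), $dp = 0$ is equivalent to $d\Phi = 0$; hence the surfaces of constant pressure coincide with the equipotential surfaces of $\Phi$, which in turn coincide with those of $\mathcal{F}$ via the Weyl ansatz. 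The converse direction follows the same chain of equivalences, and Theorem \ref{teo:4a} then guarantees that the three surfaces genuinely coincide as two-dimensional level sets in spacetime.

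For assertion \textit{(ii)}, the argument is essentially a boundary-condition observation. The physical requirement that the pressure vanish at the surface of the fluid distribution means that $p$ is constant (namely zero) on that bounding surface, so $dp = \partial_j p\, dx^j = 0$ along tangent directions. The gradient $\partial_j p$ is therefore either identically zero at the surface, in which case the pressure gradient vanishes, or it is nonzero and purely normal, in which case the surface is itself a level set of $p$. In the latter situation, assertion \textit{(i)} identifies such a level set with an equipotential surface of $\Phi$ (and hence of $\mathcal{F}$), completing the proof.

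I do not foresee any real obstacle: the whole argument is an almost mechanical transcription of the Newtonian Theorem~\ref{teo:5}, with the effective Newtonian potential $V$ replaced by the redshift factor $\mathcal{F}$ and with the relativistic equilibrium relation \eqref{eq:relequi1b} supplying the required one-form identity. The only point deserving explicit mention is that the factor $\mathcal{F}$ multiplying $dp$ in \eqref{eq:relequi1a} does not appear once one passes to \eqref{eq:relequi1b} through the Weyl ansatz, so no additional assumption on $\mathcal{F} \neq 0$ is needed beyond what is already implicit in the definition of the redshift factor for a physical stationary configuration.
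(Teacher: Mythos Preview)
Your proposal is correct and follows essentially the same route as the paper: invoke the Weyl ansatz to make the $\mathcal{F}$- and $\Phi$-level surfaces coincide, use the reduced equilibrium equation \eqref{eq:relequi1b} together with Theorem~\ref{teo:4a} to identify these with constant-$p$ surfaces, and for part \textit{(ii)} read off the dichotomy from the vanishing of $p$ on the boundary. Your explicit handling of the degenerate case $(\rho_m+p)\mathcal{F}'+\rho_e=0$ and your remark on the $\mathcal{F}$ factor in front of $dp$ are slightly more careful than the paper's own proof, but the argument is otherwise identical.
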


\begin{proof}
The proof of assertions \textit{(i)} and \textit{(ii)} are analogous of the proof of Theorem \ref{teo:5}. To prove \textit{(i)},  we use the Weyl ansatz $\mathcal{F} = \mathcal{F}(\Phi)$ and the equilibrium equation (\ref{eq:relequi1b}). By the Weyl ansatz, the surfaces of constant $\mathcal{F}$ and $\Phi$ coincide, and Theorem \ref{teo:4a} implies that the surface of constant $p$ also coincides. On the other hand, taking cognizance of Eq.~(\ref{eq:relequi1b}), a surface of constant $p$ implies in $d \Phi = 0$, and thus by the Weyl ansatz it also implies that $d \mathcal{F} = 0$. To prove \textit{(ii)}, we note that smooth boundary conditions require the pressure to vanish at the surface of the fluid. Therefore, the surface of the fluid is a surface of constant pressure, hence $d p = \partial_j p dx^j = 0$ which implies that, either the pressure gradient vanishes at the surface, or by assertion \textit{(i)} the surface of the fluid is an equipotential surface.
\end{proof}

Now, turning into the set of field equations for a rigidly rotating charged fluid with pressure and following the same steps as in Sec. \ref{sec:relrigid}, it is possible to get further properties of rigidly rotating charged pressure fluids. For this task, the two field equations that interest us here are the equations for $F$ and $\Phi$, namely,
\begin{align}
        & \nabla^{\dagger 2} F + \frac{1}{D^2} \Big(F \big(\partial_j K\big)^2 - 2 K \partial_j K \partial_j F - L \big(\partial_j F\big)^2 \Big)\nonumber \\
        & = 8 \pi e^{\mu} F \left(\rho_m + 3p + 2\rho_{em}\right), \label{eq:relpot12}\\
        & \nabla^{\dagger 2} \Phi\!  +\! \frac{1}{D^2}\Big[\!\big(K \partial_j F -\! F \partial_j K \big) \partial_j \psi -\! \big(K \partial_j K +\! L \partial_j F \big) \partial_j \Phi \Big] \nonumber \\
        & = - 4\pi e^{\mu} F^{1/2} \rho_e, \label{eq:relpot13}
\end{align}
where the operator $\nabla^{\dagger 2}$ is given by Eq.~\eqref{eq:nabla1}.
Such equations may be recast into more convenient forms (see Appendix \ref{sec:app3}),
\begin{align}
    & \frac{\partial^j}{D}\Big(\frac{D}{F} \partial_j F\Big) + \frac{1}{F^2 D^2}\big(K \partial^j F - F \partial^j K\big)\big(K \partial_j F - F \partial_j K\big) \nonumber\\
    & = 8 \pi \big(\rho_m + 3 p + 2 \rho_{em}\big), \label{eq:relpot14} \\
    & \frac{\partial^j}{D}\Big(\frac{D}{F} \partial_j \Phi\Big) + \frac{1}{F^2 D^2}\big(K \partial^j F - F \partial^j K\big)\big(K \partial_j \Phi + F \partial_j \psi\big)\nonumber\\ 
    & = - 4 \pi F^{-1/2}\rho_e, \label{eq:relpot15}
\end{align}
where $\partial^j = g^{ji} \partial_i = e^{-\mu} \delta^{ji} \partial_i $. 

In addition, from Eq.~\eqref{eq:rhoel}, we find
\begin{align}
     \frac{2 \Phi}{D}\partial^j \Big(\frac{D}{F} \partial_j \Phi \Big) = \frac{\partial^j}{D} \Big(\frac{D}{F} \partial_j \Phi^2 \Big) - 16\pi\, \rho_{el},\label{eq:relpot16}
\end{align}
 where the identity $\frac{1}{D}\partial^j \Big(\frac{D}{F} \partial_j \Phi^2 \Big) = \frac{2 \Phi}{D}\partial^j \Big(\frac{D}{F} \partial_j \Phi \Big) + \frac{2}{F}\partial^j \Phi \partial_j \Phi$ has been used. 
Hence, by adding Eq.~(\ref{eq:relpot15}) multiplied by $-2\epsilon\alpha \gamma $ to Eq.~(\ref{eq:relpot16}) multiplied by $\alpha$, with $\alpha$
and $\gamma$ being constant parameters, and subtracting the result from Eq.~\eqref{eq:relpot14}, it follows that
\begin{align}
    & \nabla_j \Big[\frac{1}{\mathcal{F}} \nabla^j \Big(F - \alpha( -\epsilon \Phi + \gamma)^2  - \beta\Big) \Big]  = \Xi \,+ \label{eq:relpot17} \\ 
    & 8 \pi \Big[ \big( \rho_m  + 3 p  + 2(1 - \alpha) \rho_{el} + 2 \rho_{mg}\big)\mathcal{F}
    +  \epsilon \alpha \left(\epsilon\Phi - \gamma\right)  \rho_e  \Big], \nonumber
\end{align}
where $\beta$ is another constant parameter and we also substituted the identity $\rho_{em} = \rho_{el} + \rho_{mg}$ into Eq.~(\ref{eq:relpot14}). The symbol $\Xi$ appearing in Eq.~\eqref{eq:relpot17} was introduced to shorten notation and stands for 
\begin{align} 
    \Xi = &  \frac{\mathcal{F}}{F^2D^2}\Big[2\alpha\Phi\big(K \nabla F - F \nabla K\big)\cdot\big(K \nabla \Phi + F \nabla \psi\big)\nonumber \\ &  -  \big(K \nabla F - F \nabla K\big)^2\Big]. 
\end{align}
From the above analysis, a theorem that generalizes a result from Ref.~\cite{Lemos:2009} to rigidly rotating charged fluids may be stated. 

\begin{teo}[\textit{rigidly rotating and axisymmetric version of Lemos and Zanchin 2009}] \label{teo:7a}
 (i) In a rigidly rotating charged pressure fluid is in equilibrium in the Einstein-Maxwell theory, and obeyes the constraint $F - \alpha( -\epsilon \Phi + \gamma)^2 - \beta = 0$, where $\alpha$, $\beta$, and $\gamma$ are constants, then it follows that the right-hand side of Eq.~(\ref{eq:relpot17}) vanishes.

 (ii) In a rigidly rotating charged pressure fluid is in equilibrium in the Einstein-Maxwell theory, the right-hand side of Eq.~(\ref{eq:relpot17}) vanishes, and there is a closed surface, with no singularities, holes, or alien matter inside it, where $F - \alpha( -\epsilon \Phi + \gamma)^2 - \beta= 0$, then it follows that
 \begin{align}
     F = \alpha( -\epsilon \Phi + \gamma)^2 + \beta \label{eq:linearweyl1}
 \end{align}
 everywhere in the fluid.
\end{teo}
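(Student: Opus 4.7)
The plan is to mirror the strategy used in Theorem~\ref{teo:7} for the Newtonian case, now adapted to the curved-space divergence structure of Eq.~(\ref{eq:relpot17}). I introduce the shorthand $G \equiv F - \alpha(-\epsilon \Phi + \gamma)^2 - \beta$, so that Eq.~(\ref{eq:relpot17}) reads schematically $\nabla_j[\mathcal{F}^{-1}\nabla^j G] = \Xi + 8\pi[\cdots]$, where $[\cdots]$ denotes the bracketed fluid-potential combination on the right-hand side.

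Assertion \textit{(i)} is the immediate direction. By hypothesis, $G = 0$ everywhere in the fluid, so $\nabla^j G = 0$ and consequently the entire left-hand side of Eq.~(\ref{eq:relpot17}) vanishes identically. Hence the right-hand side, namely $\Xi + 8\pi[\cdots]$, must also vanish, which is exactly what was to be shown.

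For assertion \textit{(ii)} the strategy is a Gauss-theorem/uniqueness argument, closely parallel to Theorem~\ref{teo:3}\textit{(i)} and Theorem~\ref{teo:11}\textit{(i)}. If the right-hand side of Eq.~(\ref{eq:relpot17}) vanishes, then $\nabla_j[\mathcal{F}^{-1}\nabla^j G] = 0$, i.e., $G$ is harmonic with respect to the weighted operator $\nabla_j(\mathcal{F}^{-1}\nabla^j \cdot)$. Multiplying by $G$ and rearranging leads to the divergence identity
\begin{equation*}
\nabla_j\!\left[\frac{G}{\mathcal{F}}\nabla^j G\right] = \frac{1}{\mathcal{F}}(\nabla G)^2 .
\end{equation*}
I then integrate this identity over the spatial volume $\mathbb{V}_S$ on a hypersurface $\Sigma_t$ of constant $t$, with $S$ being the closed surface on which $G = 0$ by assumption, and apply the Gauss theorem on $\Sigma_t$ (handled exactly as in Theorem~\ref{teo:11} using the machinery of Appendix~\ref{sec:app2}). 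The surface term vanishes because $G = 0$ on $S$, leaving
\begin{equation*}
\int_{\mathbb{V}_S} \frac{1}{\mathcal{F}}\,(\nabla G)^2\, d\mathbb{V} = 0.
\end{equation*}
With $\mathcal{F} > 0$ throughout the fluid region, the integrand is non-negative, forcing $\nabla^j G = 0$ in $\mathbb{V}_S$. Hence $G$ is constant in $\mathbb{V}_S$, and the boundary value $G|_S = 0$ fixes that constant to zero, yielding Eq.~(\ref{eq:linearweyl1}) everywhere inside the fluid.

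The main technical obstacle is the careful bookkeeping of the weighted divergence $\nabla_j(\mathcal{F}^{-1}\nabla^j\cdot)$ on the spacelike slice $\Sigma_t$ and the justification for applying the Gauss theorem with the induced Riemannian metric --- this is exactly the issue addressed in the proofs of Theorems~\ref{teo:11} and~\ref{teo:12}, and no new ideas are required beyond that framework. A secondary subtlety is ensuring $\mathcal{F}\neq 0$ inside $\mathbb{V}_S$ so that the weighted-Laplacian argument is well posed, which corresponds physically to the fluid region lying strictly outside any stationary limit surface; this is a reasonable regularity hypothesis consistent with the matter configurations under consideration.
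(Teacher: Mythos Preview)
Your proposal is correct and follows essentially the same approach as the paper: you introduce the same shorthand $G$, use the same divergence identity $\nabla_j[\mathcal{F}^{-1}G\,\nabla^j G] = \mathcal{F}^{-1}(\nabla G)^2$, integrate over a spatial volume bounded by the closed surface where $G=0$, and conclude via the Gauss theorem exactly as the paper does. Your explicit remark that the boundary value $G|_S=0$ fixes the constant, and your note on the need for $\mathcal{F}>0$, are minor clarifications that the paper leaves implicit.
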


\begin{proof}
The proof of this theorem is analogous to the Newton-Maxwell case given in Theorem \ref{teo:7}. The proof of assertion \textit{(i)} is straightforward. By hypothesis, if $F - \alpha( -\epsilon \Phi + \gamma)^2 - \beta = 0$ everywhere in the fluid, it follows that the right-hand side of Eq.~\eqref{eq:relpot17} must be zero. 

In order to proof assertion \textit{(ii)}, let us assume that the right-hand side of Eq.~(\ref{eq:relpot17}) vanishes and define $G \equiv F - \alpha( -\epsilon \Phi + \gamma)^2 - \beta$.  Now the proof can be accomplished by following the same steps as in Theorem \ref{teo:3}\textit{(ii)} for the Newtonian case.  Consider a finite volume $\mathbb{V}_S$
contained in the hypersurface $\Sigma_t$ of $t = \text{constant}$ (see Appendix~\ref{sec:app2}). Now, integrating the divergence 
$\nabla_j \left(\frac{G}{\mathcal{F}}  \nabla^j G \right)$ over $\mathbb{V}_S$, we get $
    \int_{\scriptstyle{\mathbb{V}_{\!S}}} \!\! \nabla_j \left(\frac{G}{\mathcal{F}} \nabla^j G \right) d \mathbb{V}  = \int_{\scriptstyle{\mathbb{V}_{\! \sscrst S}}} \!\mathcal{F}^{-1}\big(\nabla G\big)^2 d\mathbb{V}  = \int_{\!S} \left(\mathcal{F}^{-1} G \nabla_j G\right) n^j dS,$ 
with $S$ being the boundary of the finite volume $\mathbb{V}_S$, $n^j$ being the unit vector normal to $S$, and the Gauss theorem has been used.  If there is a closed surface where $G = 0$, then by identifying such a surface with $S$ one finds $\int_{\scriptscriptstyle\mathbb{V}_{\!S}} \mathcal{F}^{-1}(\nabla G)^2 d\mathbb{V}  = 0$, which is satisfied only if $\nabla_j G=0$ in the volume $\mathbb{V}_S$. Therefore, $G$ must be a constant in the region defined by the volume $\mathbb{V}_S$ of the fluid. Without lost of generality, such a constant may be set to zero and the result for $F$ given in Eq.~\eqref{eq:linearweyl1} follows. 
\end{proof}

Theorem \ref{teo:7a} is the rigidly rotating and axisymmetric version of a theorem stated by Lemos and Zanchin \cite{Lemos:2009}. To see this, let us consider the static  limit of Eq.~(\ref{eq:relpot17}). In this case, we have $K = 0$, $\Omega = 0$, $\psi = 0$,  $F=f$, $\Phi = \phi$, and $\rho_{em} = \rho_{el}$, so that that Eq.~(\ref{eq:relpot17}) reduces to
\begin{align}
    & \nabla_j \left[\frac{1}{\mathcal{F}} \nabla^j \left(F - \alpha\big( -\epsilon \Phi + \gamma\big)^2  - \beta\right) \right]      \\ & = 8 \pi \left[\left(\rho_m + 3 p + 2\left(1-\alpha\right) \rho_{em}\right)\mathcal{F} +\alpha\rho_e\big(\Phi- \epsilon \gamma\big)\right], \nonumber
\end{align}
which is identical to Eq.~(72) in \cite{Lemos:2009} in the case of a four-dimensional spacetime, i.e., with $d=4$. The only difference is that the static theorem stated in \cite{Lemos:2009} assumes no spatial symmetry while here we assumed axial symmetry.

\subsubsection{Rigidly rotating charged pressure fluids of Weyl-type: Islam ansatz}

Here we consider the consequences of the Islam ansatz given in Eqs.~\eqref{eq:ansatz} and \eqref{eq:ansatz1} for a rotating charged pressure fluid of Weyl-type.
In such a case, it is easy to see that Eq.~(\ref{eq:relpot17}) with $\gamma = 0$ yields
\begin{align} 
    & \nabla_j \left[\frac{1}{\mathcal{F}} \nabla^j \left(F - \alpha\,\Phi^2  - \beta\right) \right] =  \label{eq:relpot18} \nonumber \\ &  8 \pi  \left[\left(\rho_m + 3 p +  2\left(1 - \alpha\right)\rho_{el} + 2 \rho_{mg}\right)\mathcal{F} + \alpha \Phi \rho_e\right].
\end{align}
Now, the mentioned ansatz  $\partial_j F = 2\alpha \Phi \partial_j \Phi$ implies in $F = \alpha \Phi^2 + \beta$, and then  we have a new theorem.

\begin{teo}[\textit{new}]\label{teo:22}
If a rigidly rotating charged pressure fluid is in equilibrium in the Einstein-Maxwell theory and satisfies the Islam ansatz, Eqs.~(\ref{eq:ansatz}) and \eqref{eq:ansatz1}, then the fluid quantities obey the relation 
\begin{equation}
\left[\rho_m + 3 p + 2\big(1 - \alpha\big) \rho_{el} + 2 \rho_{mg}\right]\mathcal{F} + \alpha \Phi \rho_e = 0.  \label{eq:statef}
\end{equation}
\end{teo}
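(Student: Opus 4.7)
The plan is to deduce the claim directly from the identity already established as Eq.~(\ref{eq:relpot18}) (equivalently Eq.~(\ref{eq:relpot17}) with $\gamma=0$), which encodes the combination $\nabla_j\left[\mathcal{F}^{-1}\nabla^j(F-\alpha\Phi^2-\beta)\right]$ on the left and precisely the fluid-plus-field expression of Eq.~(\ref{eq:statef}) on the right. Under the Islam ansatz the left-hand side will collapse to zero, forcing the right-hand side to vanish and delivering Eq.~(\ref{eq:statef}).

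The first step is to integrate the first piece of the ansatz. Since $\partial_j F = 2\alpha\Phi\,\partial_j\Phi = \alpha\,\partial_j(\Phi^2)$ in the $(r,z)$ plane, it follows that $\partial_j(F-\alpha\Phi^2)=0$, so $F-\alpha\Phi^2=\beta$ is a genuine constant throughout the fluid. Consequently the scalar $G\equiv F-\alpha\Phi^2-\beta$ vanishes identically, $\nabla^j G=0$, and therefore $\nabla_j\!\left[\mathcal F^{-1}\nabla^j G\right]=0$. This already kills the left-hand side of Eq.~(\ref{eq:relpot18}).

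The second step is to justify that the passage from Eq.~(\ref{eq:relpot17}) to Eq.~(\ref{eq:relpot18}) is consistent, i.e., that the term $\Xi$ in Eq.~(\ref{eq:relpot17}) also drops out. Using the second piece of the ansatz, $\partial_j K=-2\alpha\Phi\,\partial_j\psi$, together with $\partial_j F=2\alpha\Phi\,\partial_j\Phi$, one finds the crucial factorization
\begin{equation*}
K\nabla F - F\nabla K \;=\; 2\alpha\Phi\bigl(K\nabla\Phi + F\nabla\psi\bigr).
\end{equation*}
Substituting this into $\Xi$, the first bracketed term becomes $2\alpha\Phi\cdot 2\alpha\Phi\,(K\nabla\Phi+F\nabla\psi)^{2}$ and the second becomes $4\alpha^{2}\Phi^{2}(K\nabla\Phi+F\nabla\psi)^{2}$; the two cancel exactly, so $\Xi=0$ under the full Islam ansatz.

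Combining these two observations, Eq.~(\ref{eq:relpot17}) reduces to $0 = 8\pi\left\{[\rho_m+3p+2(1-\alpha)\rho_{el}+2\rho_{mg}]\mathcal{F}+\alpha\Phi\rho_e\right\}$, which is exactly Eq.~(\ref{eq:statef}). There is no real obstacle here: the heavy lifting was done in deriving Eq.~(\ref{eq:relpot17}) from the Einstein equations, the Maxwell equations, and the equilibrium condition. The only non-trivial verification in the present proof is the two-line identity that makes $\Xi$ vanish, and that hinges on the second ansatz $\partial_jK=-2\alpha\Phi\,\partial_j\psi$ being precisely compatible with the first. I would therefore present the proof in three brief moves: integrate the first ansatz to obtain $F=\alpha\Phi^{2}+\beta$; show algebraically that the second ansatz implies $\Xi=0$; read off Eq.~(\ref{eq:statef}) from Eq.~(\ref{eq:relpot17}).
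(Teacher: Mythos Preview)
Your proposal is correct and follows essentially the same route as the paper: the authors derive Eq.~(\ref{eq:relpot18}) from Eq.~(\ref{eq:relpot17}) with $\gamma=0$ under the Islam ansatz, then observe that $\partial_j F=2\alpha\Phi\,\partial_j\Phi$ integrates to $F=\alpha\Phi^2+\beta$, annihilating the left-hand side. You actually supply more detail than the paper does, in particular the explicit verification that the factorization $K\nabla F - F\nabla K = 2\alpha\Phi(K\nabla\Phi + F\nabla\psi)$ makes $\Xi$ vanish, which the paper simply subsumes under ``it is easy to see.''
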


\begin{proof}
The theorem is a straightforward consequence of the hypotheses and Eq.~(\ref{eq:relpot18}) and we skip the details. 
\end{proof}

Equation \eqref{eq:statef} can be interpreted as an equation of state analogous to the equations obeyed by Weyl-Guilfoyle nonrotating systems (see Ref. \cite{gautreau}). In fact, it is a generalization of a relation found in \cite{Lemos:2009}. 

Let us now investigate some of the consequences of Theorem \ref{teo:22}. First, consider a system with vanishing Lorentz force. Then one has $\partial_j \Phi=0$, which means a constant $\Phi$, a vanishing $\rho_{el}$ and, moreover, Eq.~\eqref{eq:ansatz} implies in a constant metric potential $F$. Therefore, the equilibrium equation \eqref{eq:relequi1a} implies that $p$ is a constant and Eq.~(\ref{eq:statef}) reduces to $\rho_e = {\rm const.}\times  \left(\rho_m + 3p +  2 \rho_{mg}\right) $. This condition is a consequence of the Islam ansatz and we name it the first Islam-Guilfoyle condition. 

Lastly, for rigidly rotating charged pressure fluids with a nonvanishing Lorentz force, but satisfying the Islam ansatz, we may choose $\mathcal{F} = -\epsilon \sqrt{\alpha}\Phi$, and an alternative form of Eq.~\eqref{eq:statef} turns out, namely, $\rho_m + 3p + 2(1 - \alpha) \rho_{el} + 2 \rho_{mg} -  \epsilon \sqrt{\alpha} \rho_e = 0$. Additionally, the equilibrium condition \eqref{eq:relequi1a} reduces to $\mathcal{F}p'(\Phi) + \rho_e - \epsilon \sqrt{\alpha}(\rho_m + p) = 0$. This condition is also a consequence of the Islam ansatz and we name it the second Islam-Guilfoyle condition.

\section{Final remarks}
\label{sec:4}
We have performed a systematical study on the general properties of rotating and axisymmetric charged fluids with pressure, including Weyl-type systems, both in the Newton-Maxwell theory and in the Einstein-Maxwell theory.

For the Newton-Maxwell theory, we started by considering rotating charged dust fluids, i.e., zero pressure fluids. By directly analyzing the equilibrium equation, we discussed some new results that take into account equilibrium configurations in differential rotation. Moreover, by analyzing the whole set of field equations, we were able to extend to rigidly rotating and axisymmetric fluids a series of well-known theorems for static (nonrotating) charged dust fluids due to Bonnor and others \cite{Bonnor:1980nw,Lemos:2009}. For rigidly rotating charged dust fluids, it follows that the gravitational effective potential $V$ is a function of the electromagnetic effective potential $\Phi$, i.e., $V = V(\Phi)$. In particular, by assuming that the relation between $V$ and $\Phi$ is of the Majumdar-Papapetrou-type, $V = - \epsilon \Phi + \gamma$, then the fluid quantities and the electromagnetic potentials must obey further constraints in order to assure the validity of the Majumdar-Papapetrou relation. On the other hand, one can consider a more general relation between $V$ and $\Phi$ that is the Weyl-Guilfoyle relation, $V = - \epsilon \beta \Phi + \gamma$, loosening the constraint a little more. 
In the case of rotating charged pressure fluids, the Newton-Maxwell system of equations together with the  Weyl ansatz, $V = V(\Phi)$, implies that the pressure is also a function of $\Phi$ alone, and then all fluid quantities are given in terms of the effective electromagnetic potential.  Moreover, we showed that the gravitational effective potential is given by $V = - \epsilon \beta \Phi + \gamma$ if, and only if, two simple relations between the fluid quantities is satisfied, extending the results obtained in Ref.~\cite{Lemos:2009}.  

In the Einstein-Maxwell theory, we started by obtaining new results for equilibrium configurations of rotating charged dust fluids in differential rotation. 
Then, we extended to rigidly rotating charged fluids well-known theorems due to many authors \cite{Majumdar,Papa,das62,deray68,gautreau,Bonnor:1980nw,Guilfoyle:1999yb,Lemos:2009} that hold for nonrotating charged dust fluids. It is verified that the situation is more subtle and more interesting in the relativistic theory than in the Newton-Maxwell theory. More importantly, we showed that for the rotating charged dust fluid,  in principle, a relation of the Majumdar-Papapetrou form $\mathcal{F} = - \epsilon \Phi + \gamma$, also with the condition $\rho_e = \epsilon \rho_m$, is allowed without imposing any relation between the relativistic centrifugal potential $K$ and the magnetic potential $\psi$, provided that a further restriction between the magnetic energy density and the metric and electromagnetic potentials is obeyed. On the other hand, following the works by Bonnor \cite{Bonnor1980b} and Raychaudhuri \cite{Raycha}, who considered an additional relation between $K$ and $\psi$, we proposed a new ansatz, the Islam ansatz involving the gradient of the metric potentials and the gradient of the electromagnetic potentials. This ansatz allows us to simplify the set of equations and then we succeeded to establish neat constraints between the fluid quantities and the electromagnetic energy density. Finally, we discussed a class of solutions obtained by Islam where the Lorentz force vanishes \cite{Islam1977}. 
For rotating nonzero pressure charged fluids, we have done a similar analysis as for rotating charged dust fluids. 
Differential rotation and rigidly rotating axisymmetric fluid distributions were considered separately. 
Several new results were found in both cases. In particular, we showed that a more general relation between the metric potential $F$ and the electromagnetic potential $\Phi$, given by a Weyl-Guilfoyle-type relation of the form $F = \alpha(- \epsilon \Phi + \gamma)^2 + \beta$, implies an intricate relation constraining the fluid quantities and the metric and electromagnetic potentials. In the static limit, this result recovers a theorem due to Lemos and Zanchin \cite{Lemos:2009}. Lastly, we also considered the consequences of the Islam ansatz for rotating charged pressure fluids, and obtained a restriction between the fluid quantities and the electromagnetic energy density. 

Ultimately, as in the case of nonrotating charged fluids of Weyl-type satisfying the Majumdar-Papapetrou or the Weyl-Guilfoyle relations that have been used recently, with some success, to describe static compact objects, such as compact charged stars, regular black holes, and quasiblack holes \cite{lemoskleberzanchin,Lemos:2006sj, lemoszanchin2008,lemoszanchin2010,lemoszanchin2016,lemoszanchin2017}, a possible path to follow is to search for solutions describing rotating charged compact objects. Even though the task of describing rotating compact objects in general relativity is difficult, and most of the known models are constructed by numerical methods~\cite{stergioulas}, in the case of the charged pressure fluids considered in the present work, there is room for further simplifying hypotheses that may allow us finding exact solutions is some particular cases. A further and more interesting, but challenging, step is to look for exact solutions describing compact objects, 
and exploring all sorts of charged rotating objects that the Einstein-Maxwell theory together with the Majumdar-Papapetrou and the Weyl-Guilfoyle relations, or with the Bonnor-Raychaudhuri and the Islam ansatz allow. 
Additionally, other studies about general properties of compact objects composed by rotating charged pressure fluids may be undertaken. For instance, besides implying some constraints among the local quantities as energy density, pressure, and charge density, the mentioned ansatz imply constraints between the global quantities of a finite rotating body, such as gravitational mass, angular momentum, and total electric charge. Building such relations for rotating charged pressure fluids of Islam-Guilfoyle type, for instance, as was done for rotating charged dust fluids in Ref.~\cite{meinel2015} and for rotating charged quasiblack holes with pressure in Ref.~\cite{Lemos:2009wj}, is a work under development.

\begin{acknowledgments}

M.~L.~W.~B.~is funded by the Funda\c c\~ao de Amparo \`a Pesquisa do Estado de S\~ao Paulo (FAPESP), Brazil, Grant No.~2022/09496-8. V.~T.~Z.~is funded in part by Conselho Nacional de Desenvolvimento Cient\'ifico
e Tecnol\'ogico (CNPq), Brazil, Grant No.~311726/2022-4, and by Funda\c c\~ao de Aperfei\c coamento ao Pessoal de N\'ivel Superior (CAPES), Brazil, Grant No. 88887.310351/2018-00. We thank J.~P.~S.~Lemos for stimulating discussions.

\end{acknowledgments}

\appendix 
\section{Some relevant tensor quantities for a rotating charged fluid in a stationary spacetime}
\label{sec:app1}

We employ a coordinate system of cylindrical type, $(x^0,\, x^1,\, x^2,\, x^3) = (t,\, r, \,z, \,\varphi)$, in which the metric can be written in the form
\begin{align}\label{eq:metricap1}
    ds^2 = -f\, dt^2 + 2k\, dt d\varphi + l\, d \varphi^2 + e^{\mu}\left(dr^2 + dz^2\right) ,
\end{align}
where the metric functions $f$, $k$, $l$, and $\mu$ depend on the spatial coordinates $r$ and $z$ only. 
In principle, the ranges of the coordinates are $-\infty< t< \infty$, $0\leq r< \infty$, $-\infty< z< \infty$, and $0\leq \varphi \leq 2\pi$, with $\varphi=0$ identified to $\varphi=2\pi$.

The nontrivial components of the Ricci tensor that are relevant for the present analysis read
\begin{align}
    & \dfrac{2e^{\mu}}{D} R_{tt} = \partial_j\left(\frac{\partial_j f}{D}\right) + \dfrac{f}{D^3} \left(\partial_j f \partial_j l + \left(\partial_j k\right)^2\right), \label{eq:ricci1} \\
    & \dfrac{2e^{\mu}}{D} R_{t\varphi} = - \partial_j\left(\frac{\partial_j k}{D}\right) - \dfrac{k}{D^3}\left(\partial_j f \partial_j l + \left(\partial_j k\right)^2\right), \label{eq:ricci2} \\ 
    & \dfrac{2e^{\mu}}{D} R_{\varphi \varphi} = - \partial_j\left(\frac{\partial_j l}{D}\right) - \dfrac{l}{D^3} \left(\partial_j f \partial_j l + \left(\partial_j k\right)^2\right). \label{eq:ricci3}
\end{align}
The components $R_{rr}$ and $R_{zz}$, that in the present case is identical to $R_{rr}$, are not necessary for our purpose here.

We write down here also the relevant nontrivial components of the energy-momentum tensors $M_{\mu \nu}$ and $E_{\mu \nu}$, together with the expression for the electromagnetic energy density. First we display the general form of fluid velocity $u^\mu$ and of the gauge potential $A^\mu$ adapted to the metric \eqref{eq:metricap1}, namely,
\begin{align}
       u^{\mu} =&\, {\cal F}^{-1}\big(\delta^{\mu}_{\ t} + \Omega \delta^{\mu}_{\ \varphi}\big), \\
    A_{\mu} =&\, \phi \delta_{\mu}^{\ t} + \psi \delta_{\mu}^{\ \varphi},
\end{align}
where 
\begin{equation} 
\mathcal{F}^2  \equiv  f - 2k\Omega - l \Omega^2, \label{eq:calFap}
\end{equation}
and $\Omega$ is the angular velocity of the rotating fluid.
The components of the energy-momentum tensor of the perfect fluid needed for the present analysis are $M_{tt}$, $M_{t\varphi}$, and $M_{\varphi\varphi}$, 
\begin{align}
    & M_{tt} = \mathcal{F}^{-2}\big(f - k \Omega\big)^2 (\rho_m + p)  - f p, \label{eq:Mtt} \\
    & M_{t \varphi} = - \mathcal{F}^{-2}\big(f - k \Omega\big)\big(k + \Omega l\big)\big(\rho_m + p \big) + k p, \label{eq:Mtfi} \\
    & M_{\varphi \varphi} = \mathcal{F}^{-2}\big(k + l \Omega\big)^2 \big(\rho_m + p\big)  - l p. \label{eq:Mfifi}
\end{align}
The $rr$ and the $zz$ components of $M_{\mu\nu}$ are not written because they are not necessary here. On the other hand, the trace $M^{\mu}_{\mu}$ given by
\begin{equation}
    M^{\mu}_{\mu} = - \rho_m + 3 p,\label{eq:traceM}
\end{equation}
is useful.

The relevant components of the electromagnetic energy-momentum tensor $E_{\mu\nu}$ are
\begin{align}
        & E_{t t}  = \frac{e^{-\mu}}{4 \pi D^2 }\left[\left(\frac{1}{2}fl   + k^2\right)\big(\partial_j \phi\big)^2 + fk \partial_j \phi \partial_j \psi\nonumber \right.\\ &\qquad\qquad\qquad\; \left.+ \frac{1}{2}f^2 \big(\partial_j \psi \big)^2 \right], \label{eq:Ett} \\
        & E_{t \varphi} = \frac{e^{-\mu}}{4 \pi D^2 }\left[\frac{1}{2}kl   \left(\partial_j \phi\right)^2 + fl \partial_j \phi \partial_j \psi - \frac{1}{2}f k\big(\partial_j \psi\big)^2 \right], \label{eq:Etfi} \\
        & E_{\varphi \varphi}  = \frac{e^{-\mu}}{4 \pi D^2 }\bigg[\frac{1}{2}l^2\big(\partial_j \phi\big)^2 - kl \partial_j \phi \partial_j \psi \nonumber\\ 
        &\qquad\qquad\qquad\; + \left(\frac{1}{2}fl + k^2\right) (\partial_j \psi)^2 \bigg], \label{eq:Efifi}
\end{align}
where the $rr$ and the $zz$ components of $E_{\mu\nu}$ have been ignored because they are not necessary here.

The expression for the electromagnetic energy density is  built as follows. First we display the expression for the electric field $ E_{\mu}$,
\begin{align}
     E_{\mu} \equiv F_{\mu \nu} u^{\nu} = \delta_{\mu}^{\ j} \mathcal{F}^{-1}(\partial_j \phi + \Omega \partial_j \psi),  
\end{align}
and, second, for the magnetic field $B_{\mu}$,
\begin{align}
        B_{\mu} & \equiv - \frac{1}{2} \epsilon_{\mu \nu \alpha \beta} F^{\alpha \beta}  u^{\nu}\\ & = - \delta_{\mu}^i \epsilon_{t i j \varphi}\frac{\mathcal{F}^{-1}}{D^2}\Big(k \partial^j \phi + f \partial^j\psi + \Omega\big(l \partial^j \phi - k \partial^j \psi\big)\Big).\nonumber  
\end{align}
Therefore, the electric energy density is given by
\begin{align}
     \rho_{el}\equiv \frac{1}{8 \pi} E_{\mu} E^{\mu} = & \, \frac{e^{-\mu}}{8 \pi F} \big(\partial_j \phi + \Omega \partial_j \psi\big)^2, \label{eq:rhoel1} 
\end{align}
while the magnetic energy density reads 
\begin{align}
    \rho_{mg} \equiv &\frac{1}{8 \pi} B_{\mu} B^{\mu}\nonumber \\ 
    =&\, \frac{e^{-\mu}}{8\pi D^2 F}\Big[\big( k + \Omega l \big) \partial_j \phi 
    +\big( f - k\Omega\big) \partial_j \psi\big]^2.
\label{eq:rhomg1}
\end{align}
Finally, the total electromagnetic energy density $\rho_{em}$ is  
\begin{align} \label{eq:rhoema}
    \rho_{em} & = E_{\mu \nu}u^{\mu}u^{\nu} \nonumber\\
    & = \frac{e^{-\mu}}{4 \pi F} \Big[\big(\partial_j \phi + \Omega \partial_j \psi\big)^2 - \frac{F\,l}{2D^2}\big(\partial_j \phi\big)^2 \nonumber \\ & \quad+  \frac{F}{2D^2}\left(2k \partial_j \phi \partial_j \psi + f \big(\partial_j \psi\big)^2 \right) \Big]. 
\end{align}
It is a simple task verifying that the total electromagnetic energy density $\rho_{em}$ is the sum of the electric and magnetic parts, namely,
\begin{align} \label{eq:rhoemb}
 E_{\mu \nu}u^{\mu}u^{\nu}= \frac{1}{8 \pi}\left(  E_{\mu} E^{\mu}+ B_{\mu} B^{\mu}\right) = \rho_{el} + \rho_{mg}.
\end{align}
From the Einstein equations, by properly combining Eqs.~\eqref{eq:ricci1}--\eqref{eq:ricci3} with \eqref{eq:Mtt}--\eqref{eq:Efifi}, it follows that
\begin{align} \label{eq:riccitr}
    R^t_{\ t} + R^{\varphi}_{\ \varphi} = -D^{-1} e^{- \mu}\big(\partial^2_r D + \partial^2_z D\big) = - 16 \pi p. 
\end{align}
This relation implies that, for a dust fluid, i.e., when $p = 0$, $D$ is a harmonic function and therefore one can take (See \cite{Islam2009} for more detail on this subject)
\begin{equation}
    D^2 = f\, l+ k^2 = r^2. \label{eq:D2a}
\end{equation}
 In this case, the expressions for $\nabla^{\dagger 2}$ and $\nabla^{\dagger 2}_-$ given by Eqs.~\eqref{eq:nabla1} and~\eqref{eq:nabla*} become identical to the ones for the Newton-Maxwell theory, i.e., in the Euclidean space, given by Eqs.~\eqref{eq:nabla} and \eqref{nabla**}. 

\section{
Some relevant quantities as seen by an observer comoving with a rigidly rotating fluid}
\label{sec:app2}

By considering a charged fluid in rigid rotation, the coordinate transformation
\begin{align}
    t' = t, \ \ \varphi'  =  \varphi - \Omega t, \ \ r' = r, \ \ z' = z, \label{eq:coordchan}
\end{align}
corresponds to changing to a reference frame comoving (corotating) with the fluid. In the corotating frame, the velocity of the fluid is given by $u'^{\mu} = \mathcal{F}^{-1}\delta^{\mu}_{\ t}$, while the gauge potential transforms to $A'_{\mu} = \Phi \delta_{\mu}^{\ t} + \psi \delta_{\mu}^{\ \varphi}$ with $\Phi = \phi + \Omega \psi$. The coordinate transformation \eqref{eq:coordchan} puts the metric into the form
\begin{align}
 ds^2 = -F dt'^2 + 2K dt'd\varphi' + L d\varphi'^2 + e^{\mu} \big(dr'^2 + dz'^2\big),
\end{align}
where $F = \mathcal{F}^2 = f - 2k \Omega - l \Omega^2, K = k + l \Omega$ and $L =l$. From now on, we can drop the primes.

In turn, the energy density of the electromagnetic field as measured by a comoving observer with the rigidly rotating charged fluid is given by (see Eq.~\eqref{eq:rhoema})
\begin{align}
        \rho_{em}  \equiv & E_{\mu \nu} u^{\mu} u^{\nu} \nonumber = \frac{e^{-\mu}}{4 \pi D^2 F}\bigg[\left(\frac{1}{2}FL   + K^2\right)\left(\partial_j \Phi\right)^2 \nonumber \\ 
      & + FK \partial_j \Phi \partial_j \psi + \frac{1}{2}F^2 (\partial_j \psi)^2 \bigg]. \label{eq:rhoem2}
\end{align}
Similarly, the electric and magnetic parts of $\rho_{em}$ 
are given by (see Eqs.~\eqref{eq:rhoel1} and \eqref{eq:rhomg1})
\begin{align}
    & \rho_{el}\equiv \frac{1}{8 \pi} E_{\mu} E^{\mu} = \frac{e^{-\mu}}{8 \pi F} (\partial_j \Phi)^2, \label{eq:rhoel2} \\
    & \rho_{mg} \equiv \frac{1}{8 \pi} B_{\mu} B^{\mu} = \frac{e^{-\mu}}{8\pi D^2 F}\big(K \partial_j \Phi + F \partial_j \psi\big)^2 , \label{eq:rhomg2} 
\end{align}
respectively.

To use the Gauss integral theorem in the analysis, we need to define an appropriate spacelike volume in the spacetime. Let $\Sigma_t$ be a spacelike hypersurface of constant $x^0=t$, whose unit normal vector is $u^{\mu}$. The induced metric on $\Sigma_t$ is given by $h_{\mu \nu} = g_{\mu \nu} + u_{\mu} u_{\nu}$. For a metric in the form \eqref{eq:metricap1}, it reads 
\begin{align}
    h_{\mu \nu}dx^{\mu} dx^{\nu} = F^{-1} D^2 d\varphi^2 + e^{\mu}\big(dr^2 + dz^2\big), \label{eq:indmetric}
\end{align}
where $D^2= FL+K^2$.
With this, the region $\mathbb{V}_{\sscrst\!S}$ that appears in Theorems~\ref{teo:11} and \ref{teo:7a} is contained in $\Sigma_t$, for each given $t = \text{constant}$. The corresponding volume element in $\mathbb{V}_{\sscrst\!S}$ is well defined and given by $d \mathbb{V} = \sqrt{h} \ d^3 x$, where $h$ is the determinant of the induced metric $h_{\mu \nu}$, namely, $h= F^{-1}D^2\, e^{2\mu}$.

Besides, the covariant divergent of a vector field $X^{\mu}$ on the hypersurface $\Sigma_t$ is given by
\begin{align}
    \nabla_\mu X^\mu \equiv \frac{1}{\sqrt{h}} \partial_\mu \left(\sqrt{h} X^\mu\right).
\end{align}
In particular, for a stationary and axisymmetric vector field given by $X^\mu = \nabla^\mu \Psi$, where $\Psi = \Psi(r,z)$ is an arbitrary scalar field, it follows that
\begin{align}
    \nabla^2 \Psi & \equiv \nabla_\mu \nabla^\mu \Psi = \frac{1}{\sqrt{h}} \partial_\mu\left( \sqrt{h} h^{\mu \nu}\partial_\nu \Psi\right) \nonumber \\
    & =  \frac{e^{- \mu}  \mathcal{F}}{D}  \left[\partial_r \left( \frac{D}{\mathcal{F}} \partial_r \Psi\right) +  \partial_z \left(\frac{D}{\mathcal{F}} \partial_z \Psi\right) \right] + \frac{F}{D^2} \partial_{\varphi}^2 \Psi \nonumber \\
    & = e^{- \mu} \Big( \nabla^{\dagger 2} \Psi - \frac{1}{2} F^{-1} \partial_j F \partial_j \Psi \Big), \label{eq:indlaplc}
\end{align}
where $ \nabla^{\dagger 2}$ is the operator defined in Eq.~\eqref{eq:nabla1}. Therefore, only the derivatives with respect to the coordinates $x^1 = r$ and $x^2 = z$ survive, and  we can write $\nabla^2 \Psi = \nabla_j \nabla^j \Psi$.

\section{On the derivation of some fundamental equations}
\label{sec:app3}

Here we comment on the derivation of Eqs.~\eqref{eq:relpot9c}, \eqref{eq:relpot10c}, \eqref{eq:relpot14}, and \eqref{eq:relpot15} as a sample of the consequences of the coordinate transformation to the corotating frame as made in Secs.~\ref{sec:relrigid} and \ref{sec:relrigid-p}.
Such equations are related to the $tt$, $t\varphi$, and $\varphi\varphi$ components of the Ricci and energy-momentum tensors that enter the field equations. Denoting by $R'_{\mu \nu}$ and  $T'_{\mu \nu}$ the components in the new (primed) frame, and by $R_{\mu \nu}$ and  $T_{\mu \nu}$ the respective components in the original (nonprimed) frame, 
coordinate transformation~\eqref{eq:coordchan} implies
\begin{align}
    & R'_{tt} = R_{tt} + 2 \Omega R_{t \varphi} + \Omega^2 R_{\varphi \varphi}, \\
    & T'_{tt} = T_{tt} + 2 \Omega T_{t \varphi} + \Omega^2 T_{\varphi \varphi}. 
\end{align}
Therefore, in the prime coordinate system, we may obtain an equation for the function $F$ by combining Eqs.~\eqref{eq:Einst1}, \eqref{eq:Einst2}, and \eqref{eq:Einst3}, what is done next.

\subsection{Derivation of Eqs.\texorpdfstring{~\eqref{eq:relpot9c}} and \texorpdfstring{\eqref{eq:relpot10c}} for rigidly rotating charged dust fluids }

In the case of dust fluids, the combination of Eqs.~\eqref{eq:Einst1}, \eqref{eq:Einst2}, and \eqref{eq:Einst3} gives
\begin{align}
    & \nabla^{\dagger 2}_{-} F + \frac{F}{r^2}\Big(\partial_j F \partial_j L + \big(\partial_j K\big)^2 \Big) = 8 \pi e^{\mu} F \big( \rho_m + 2 \rho_{em}\big). \label{eq:Einst123} 
\end{align}
Now, we can interchange $\nabla^{\dagger 2}_{-}$ with $\nabla^{\dagger 2}$ by recalling that, for $p = 0$, one has $D^2 = FL+K^2=r^2$, which implies in
\begin{align}
    \partial_j L = F^{-1}\left(2r \delta_{j}^{\ r} - 2 K \partial_j K - L \partial_j F\right), \label{eq:nablas}
\end{align}
and we get 
\begin{align}
\nabla^{\dagger 2}_{-} = &\nabla^{\dagger 2} - \frac{2\partial_j D}{D}\partial_j \label{eq:nablas2} \\& = \nabla^{\dagger 2} - \frac{1}{D^2}\big(F\partial_j L + L\partial_jF + 2K\partial_j K\big)\partial_j, \nonumber
\end{align}
where $D^2=r^2$ and summation over the Roman index $j$ is assumed. 
Therefore, by substituting \eqref{eq:nablas2} into Eq~\eqref{eq:Einst123}, it follows that
\begin{align}
& \nabla^{\dagger 2} F + \frac{1}{r^2} \Big(F\big(\partial_j K\big)^2 - 2 K \partial_j K \partial_j F - L (\partial_j F)^2 \Big) \nonumber \\ &\qquad\; = 8 \pi e^{\mu} F (\rho_m + 2\rho_{em}),  \label{eq:relpot9}
\end{align}
from which we obtain the desired form, cf. Eq.~\eqref{eq:relpot9c},
\begin{align}
      & \nabla^{2} F + \frac{1}{r^2F} \Big(F \nabla K -  K \nabla F \Big)^2 -  \frac{1}{2 F} (\nabla F)^2 \nonumber \\ &\qquad\; = 8 \pi F \big(\rho_m + 2\rho_{em}\big), 
\end{align}
where $\nabla^2$ is the operator defined in Eq.~\eqref{eq:indlaplc} in the hypersurfaces $\Sigma_t$ of constant $t $ and compatible with the induced metric on $\Sigma_t$, as discussed in Appendix \ref{sec:app2}. Besides,  we used the identities $(\nabla F)^2 \equiv \nabla_j F \nabla^j F = e^{-\mu} (\partial_j F)^2$.

After the coordinate transformation~\eqref{eq:coordchan}, 
the Maxwell equations \eqref{eq:Maxa} and \eqref{eq:Maxb} take the form
\begin{align}
    & F \nabla^{\dagger 2}_{-} \psi + K \nabla^{\dagger 2}_{-} \Phi + \partial_j F  \partial_j \psi +   \partial_j K \partial_j \Phi = 0,  \label{eq:relpot7} \\
    & K \nabla^{\dagger 2}_{-} \psi - L \nabla^{\dagger 2}_{-} \Phi +  \partial_j K \partial_j \psi - \partial_j L \partial_j \Phi = 4\pi r^2 e^{\mu} \rho_e \mathcal{F}^{-1}, \label{eq:relpot8}
\end{align} 
By multiplying Eq.~(\ref{eq:relpot7}) by $K$, Eq.~(\ref{eq:relpot8}) by $F$, and adding the resulting relations we find an equation for the important field $\Phi$,
\begin{align}
        & \nabla^{\dagger 2} \Phi  + \frac{1}{r^2}\!\left[\big(K \partial_j F\!- F \partial_j K \big) \partial_j \psi- \big(K \partial_j K + L \partial_j F \big) \partial_j \Phi \right]\! \nonumber \\ &\qquad\; = - 4\pi e^{\mu} F^{1/2} \rho_e, \label{eq:relpot10}
\end{align}
where we replaced $\nabla^{\dagger 2}_{-}$ with $\nabla^{\dagger 2}$ through convenient manipulation and by using relation \eqref{eq:nablas}. Equation~\eqref{eq:relpot10} can yet be recast in a covariant form, i.e., in the form \eqref{eq:relpot10c},
\begin{align}
    & \nabla^{2} \Phi  + \frac{1}{r^2 F}\big(K \nabla F\!- F \nabla K \big) \cdot ( F \nabla \psi + K \nabla \Phi )  \nonumber \\
    &\qquad\; - \frac{1}{2 F} \nabla F \cdot  \nabla \Phi = - 4\pi F^{1/2} \rho_e,
\end{align}
where the dots indicate indices contraction, e.g., $\nabla F \cdot  \nabla \Phi \equiv \nabla_j F \nabla^j \Phi = e^{-\mu} \partial_j F \partial_j \Phi$.

\subsection{Derivation of Eqs.\texorpdfstring{~\eqref{eq:relpot14}} and \texorpdfstring{\eqref{eq:relpot15}} for the rigidly rotating charge pressure fluid}

In the case of a nonzero pressure fluid, the combination of Eqs.~\eqref{eq:Einst1}, \eqref{eq:Einst2}, and \eqref{eq:Einst3} gives
\begin{align}
     \nabla^{\dagger 2}_{-} F + \frac{F}{D^2} & \Big(\partial_j F \partial_j L + \big(\partial_j K\big)^2 \Big) \nonumber  \\ & = 8 \pi e^{\mu} F \big( \rho_m + 3p + 2 \rho_{em}\big). \label{eq:Einst123p}
\end{align}
Now,  by substituting relation \eqref{eq:nablas2} into Eq.~\eqref{eq:Einst123p}, the operator $\nabla^{\dagger 2}_{-}$ is eliminated and we get Eq.~\eqref{eq:relpot12}. The same steps can be followed in order to obtain Eq.~\eqref{eq:relpot13}.

However, Eq.~(\ref{eq:relpot12}) can yet be recast in a more convenient form. By noticing that
\begin{align}
    & \frac{1}{D} \partial_j \Big(\frac{D}{F}\partial_j F \Big) = \frac{1}{F} \nabla^{\dagger 2} F - \frac{1}{F^2} \big(\partial_j F\big)^2, \\
    & \frac{L}{D^2} = \frac{D^2 - K^2}{F D^2},
\end{align}
the left-hand side of Eq.~\eqref{eq:relpot12} can be rewritten as
\begin{align}
         & \frac{1}{F} \nabla^{\dagger 2} F + \frac{1}{F D^2} \Big(F \big(\partial_j K\big)^2 - 2 K \partial_j K \partial_j F - L \big(\partial_j F\big)^2 \Big) \nonumber \\ & = \frac{1}{D} \partial_j \Big(\frac{D}{F}\partial_j F \Big) + \frac{1}{F^2 D^2}\big(K \partial_j F - F \partial_j K\big)^2.             
\end{align}
After this, it is seen that Eq.~\eqref{eq:relpot14} immediately follows from Eq.~\eqref{eq:relpot12}. Finally, by following the same steps, Eq.~\eqref{eq:relpot15} may be derived from Eq.~\eqref{eq:relpot13}.

\end{document}